\newcommand{\bb}[1]{\mathbb{#1}}
\newcommand{\norm}[1]{\left\lVert#1\right\rVert}
\newcommand{\Langle}{\Big\langle}
\newcommand{\Rangle}{\Big\rangle}
\newcommand{\fix}{\text{Fix}}
\newcommand{\zer}{\text{Null}}
\newcommand{\hlcell}[1]{\cellcolor{blue!25}\textbf{#1}}
\def\BState{\State\hskip-\ALG@thistlm}
\title{Regularization by Denoising via Fixed-Point Projection (RED-PRO)}
\author{Regev~Cohen\thanks{Electrical Engineering Department, Technion
Israel Institute of Technology. (\email{ regev.cohen@gmail.com}).}
\and Michael~Elad\thanks{Computer Science Department, Technion Israel Institute of Technology.
(\email{ elad@cs.technion.ac.il}).} \and Peyman~Milanfar\thanks{Google Research, Mountain View, CA 94043. (\email{peyman.milanfar@gmail.com})}}
\begin{document}

\maketitle

%=========================================================================
%=========================================================================

	\begin{abstract}
    Inverse problems in image processing are typically cast as optimization tasks, consisting of data fidelity and stabilizing regularization terms. A recent regularization strategy of great interest utilizes the power of denoising engines. Two such methods are the Plug-and-Play Prior (PnP) and Regularization by Denoising (RED). While both have shown state-of-the-art results in various recovery tasks, their theoretical justification is incomplete. %More specifically, PnP exhibits no underlying global optimization, while RED's convergence relies on differentiability and a symmetric Jacobian, excluding various advanced denoisers. 
    In this paper, we aim to bridge between RED and PnP, enriching the %theoretical 
    understanding of both frameworks.
    % First, we revisit RED and consider the case where the denoiser is non-differentiable. We show that under a certain monotonicity condition, RED algorithms minimize an inverse problem where the Rockafellar function is used as the regularization.
    Towards that end, we reformulate RED as a convex optimization problem utilizing a projection (RED-PRO) onto the fixed-point set of demicontractive denoisers. We offer a simple iterative solution to this problem, by which we show that PnP proximal gradient method is a special case of RED-PRO, while providing guarantees for the convergence of both frameworks to globally optimal solutions. In addition, we present relaxations of RED-PRO that allow for handling denoisers with limited fixed-point sets. Finally, we demonstrate RED-PRO for the tasks of image deblurring and super-resolution, showing improved results with respect to the original RED framework.           
	\end{abstract}

\begin{keywords}
		Inverse problems, Image denoising, Plug and Play Prior (PnP), Regularization by Denoising (RED), Demicontractive mappings, Fixed-point set.
\end{keywords}

% REQUIRED
\begin{AMS}
 62H35, 68U10, 94A08, 65F10, 65F22, 47A52. 
\end{AMS}

%=========================================================================
%=========================================================================

\section{Introduction}

Inverse problems arise in numerous fields, ranging from astrophysics and optics to signal processing, computer vision and medical imaging \cite{cohen2018doppler, cohen2018sparse, cohen2018optimized, solomon2019deep,cohen2019deep, luijten2020adaptive, chernyakova2018fourier}. Specifically to computational imaging, inverse problems relate to the task of inferring an unknown image from its corrupted measurements. Assuming a known degradation model (i.e., the forward operator), one may offer a formulation of the recovery process as an optimization problem comprising a data fidelity term. Unfortunately, this by itself is typically insufficient, as it leads to an ill-posed problem with a non-stable solution. % , i.e., the solution is severely sensitive to small data perturbations. 
To overcome this difficulty% and stabilize the solution of inverse problems
, regularization methods have been widely developed and used. A proper regularization enables a robust recovery by leveraging prior information on the unknown image, incorporated into the optimization formulation. Therefore, formulating and solving inverse problems often reduces to determining the appropriate regularization, depending on the specific application and the underlying signals in mind. 

Many regularization schemes for natural images are available, and it is beyond the scope of this paper to review this rich literature. The focus of this paper is on the fascinating recent idea that \emph{image denoisers} could be used as the mechanism behind the regularization term~\cite{romano2017little, venkatakrishnan2013plug}. Image denoising is 
the simplest known inverse problem, concerned with obtaining a clean image from its noisy measurements contaminated with additive noise, typically assumed to be zero-mean Gaussian distributed. Over the past decade, image denoising has been studied extensively, leading to a vast line of works (,e.g., \cite{tasdizen2009principal, elad2006image, dabov2007image, takeda2007kernel, mairal2009non, dong2011sparsity, dong2012nonlocally, chatterjee2011patch, burger2012image, lebrun2012secrets, lebrun2013implementation, knaus2013dual, talebi2013global, gu2014weighted, dong2015image, pierazzo2015da3d,zhang2018ffdnet,chen2016trainable,zhang2017beyond, guo2019toward, krull2019noise2void, scetbon2019deep}). This has resulted in the availability of extremely efficient and effective algorithms, achieving nearly optimal performance \cite{chatterjee2009denoising, levin2011natural, levin2012patch}. 

The first to propose leveraging the power of denoising for regularization were Venkatakrishnan \textit{et al.}, presenting their Plug-and-Play Prior (PnP) framework \cite{venkatakrishnan2013plug,chan2016plug,sreehari2016plug}. PnP relies on the alternating direction method of multipliers (ADMM) for solving general inverse problems, and their method amounts to an iterative technique that decomposes the optimization problem into a sequence of simple denoising operations, known as proximal algorithms \cite{parikh2014proximal, beck2017first}. The PnP approach replaces the proximal operators with state-of-the-art denoisers, thus, embodying implicit priors for regularizing inverse problems. This framework has gained great interest due to its success in various applications \cite{sreehari2016plug,wu2019online,rick2017one,kamilov2017plug, zhang2019deep,ahmad2019plug, xing2019plug}, and it has been extended to other proximal algorithms such as proximal gradient method (PGM) \cite{beck2009fast, parikh2014proximal}, approximate message passing (AMP) \cite{metzler2016denoising,berthier2020state,fletcher2018plug, donoho2009message,bayati2011dynamics} and half quadratic splitting \cite{zhang2017learning}. Schemes similar to PnP has been proposed in \cite{danielyan2010image} and \cite{egiazarian2015single}, where the former is based on the augmented Lagrangian method and the latter relies on the notion of Nash equilibrium.

Despite its empirical success, when the denoising engines are more general than proximal maps, PnP loses its interpretability as a minimizer of explicit objective functions. While general inversion extends beyond optimization-based solutions, the existence of an underlying energy function being minimized benefits from extensive theoretical results and adds to the quality of the resulting algorithms, since it allows to properly tune the parameters, to incorporate line search and other acceleration methods, to analyze convergence better and understand the characteristics of the solutions.    
Nonetheless, several studies have proven the fixed-point convergence of variants of PnP under various assumptions. Chan \textit{et al.} \cite{chan2016plug} proved the convergence of PnP-ADMM with increasing penalty parameter under the assumption of bounded denoisers. A similar condition was used in \cite{tirer2018image} for the analysis of a variant of PnP.
In \cite{buzzard2018plug}, PnP has been explained via the framework of consensus equilibrium, proving the convergence of PnP for nonexpansive denoisers. The latter has been the core assumption in multiple works \cite{sreehari2016plug, sun2019online, teodoro2017scene, chan2019performance, teodoro2018convergent}, which proved the convergence of several PnP techniques by reformulating them as fixed-point iterations of (averaged) nonexpansive mappings. In \cite{ryu2019plug}, the authors relaxed this assumption to a certain Lipschitz condition. Yet, none of the above proved the convergence of PnP to global minima of an explicit objective function when the denoisers extend beyond proximal operators. Recently, the authors of \cite{xu2020provable} have proven, based on \cite{gribonval2019bayesian,gribonval2011should,gribonval2020characterization}, that PnP with minimum
mean squared error (MMSE) denoisers provably converges to an explicit, possibly nonconvex, cost function. However, this work assumes that the denoiser is infinitely differentiable with a symmetric Jacobian \cite{gribonval2020characterization}. 

As a partial remedy and an enrichment to the above, Romano \textit{et al.} introduced an alternative approach called Regularization by Denoising (RED) \cite{romano2017little}. Here, a denoiser engine is utilized to form an explicit regularizer, consisting of the inner product between the unknown image and its denoising residual. Interestingly, under certain conditions, the RED prior defines a convex function whose gradient is simply given by the denoising residual itself. Thus, this gradient can be used by first-order optimization methods such as steepest decent (SD), fixed-point (FP) iteration and ADMM, leading to a  convergence to the globally optimal solution of the regularized inverse problem. 

The RED framework aims at formulating a clear and well-defined objective function, and has demonstrated state-of-the-art results in image deblurring and super-resolution, drawing a broad attention \cite{reehorst2018regularization, sun2019block, mataev2019deepred, hong2019acceleration, wu2019online, perelli2019regularized, zhang2017learning, zhang2018ffdnet}. However, RED formulation requires the denoiser to be differentiable, obey a local-homogeneity property, and have a symmetric Jacobian; conditions which are not met by various powerful denoisers \cite{reehorst2018regularization}, such as non-local means (NLM)~\cite{buades2005non}, block-matching and 3D filtering (BM3D)~\cite{dabov2007image}, trainable nonlinear
reaction-diffusion (TNRD)~\cite{chen2016trainable}, and
denoising convolutional neural network (DnCNN)~\cite{zhang2017beyond}. In such cases, RED algorithms cease to act as optimization solvers and are deprived from their convergence guarantees to global optimum. To overcome this, Reehorst and Schniter introduced a framework called score-matching by
denoising (SMD) \cite{reehorst2018regularization}, which offers a new interpretation of RED algorithms and proves their converge to a fixed-point assuming nonexpansive denoisers. In \cite{sun2019block}, a block coordinate RED algorithm was presented, including an analysis of its fixed-point convergence under the condition that the denoiser is block-nonexpansive. These works, however, did not show convergence to global minima of an explicit energy function.   

As becomes clear from the above, the convergence analysis and the study of the solutions of both PnP and RED frameworks are incomplete. In this paper, our main goal is to enrich the theoretical understanding of PnP and RED by offering a new interpretation for their regularization strategy. Another objective we pose is the formation of a theoretical bridge between RED and PnP. While both RED and PnP have been interpreted as equilibrium consensus \cite{reehorst2018regularization, ahmad2019plug,buzzard2018plug}, here, we first establish a relationship between the two frameworks from an optimization point of view. 
The main contributions of this paper are two-fold: 
\begin{itemize}[leftmargin=*]
\item We re-introduce RED via the Fixed-Point Projection (RED-PRO) strategy. More specifically, given a convex objective function $\ell(\cdot)$ and a chosen denoiser $f(\cdot)$, we formulate the following inverse problem
\begin{equation}
    \underset{\textbf{x}\in\mathbb{R}^n}{\min}\;\ell(\textbf{x})\quad\text{s.t. }\textbf{x}=f(\textbf{x}).
\end{equation}

Surprisingly, while the proposed regularization term is highly nonlinear, the problem above is convex for demicontractive denoisers with non-empty fixed-point set. We present a simple provably-convergent iterative technique to solve the resulting problem, and show that PnP proximal gradient is a special case of it. Moreover, we relate the family of demicontractive operators to previous common assumptions such as nonexpansiveness and boundedness, showing that the condition of demicontractivity covers a broader range of functions. The presented RED-PRO framework joins the PnP and RED approaches, offering an increased flexibility in choosing the denoiser, while preserving global convergence guarantees. 

\item Furthermore, we propose relaxations of  RED-PRO  that enable the use of denoisers with a narrow fixed-point set at the expense of higher computational load. Results of this strategy on the image deblurring and super-resolution problems show improved performance in comparison to the RED framework.           

\end{itemize}

\noindent In addition to the above, we complement the work in \cite{reehorst2018regularization} by considering the original RED framework where the denoiser is non-differentiable. We formulate the regularization as the Rockafellar function \cite{rockafellar2009variational}, and prove that under certain monotonicity condition on the denoiser, the proposed objective is a convex function, minimized by the RED algorithms. Thus, we provide guarantees for convergence of the RED framework to the globally optimal solution, while relieving the earlier conditions on differentiability, symmetry and homogeneity of the denoiser. As this part deviates from the main theme of this paper, it is brought in \cref{subsec:REDrevisited}.

The remainder of the paper is organized as follows. Section\,\ref{sec:preliminaries} details preliminaries of inverse problems and briefly overviews the PnP and RED approaches. In Section\,\ref{sec:demi}, we review important results of fixed-point theory and discuss demicontractive denoisers extensively. Section\,\ref{sec:REDPRO} serves as the central part of this work. %We begin with providing theoretical justifications for the RED algorithms when non-differentiable denoisers are considered.
We introduce the RED-PRO framework, which exploits the fixed-point set of denoisers to solve general structured inverse problems. We discuss the relation between the proposed scheme and the PnP approach, including convergence guarantees. Relaxed versions of RED-PRO are presented, so as to allow handling denoisers with a narrow fixed-point set. Section\,\ref{sec:results} brings experimental results on image deblurring and  super-resolution, demonstrating an improvement of the relaxed RED-PRO framework over the original RED algorithms. Finally, we conclude the paper in Section\,\ref{sec:conclusion}.

%=========================================================================
%=========================================================================

\section{Preliminaries}
\label{sec:preliminaries}

In this section we provide the groundwork for this study. First, we describe the framework of inverse problems in % the context of 
image processing, 
%. Image recovery tasks are 
formulated as optimization problems where the challenge is in determining the appropriate regularizer. Then, we discuss % two breakthrough approaches for solving inverse problems, 
PnP and RED, which utilize denoising engines for regularization.  

\subsection{Inverse Problems}
We consider the task of recovering an unknown image $\bf x$ from its corrupted measurements $\bf y$. The Bayesian maximum a posteriori (MAP) estimator seeks for a solution that maximizes the posterior conditional probability
\begin{equation}
    \hat{\bf x}_\text{MAP}\triangleq \underset{\textbf{x}\in\bb{R}^n}{\arg\max}\, P({\bf x}|{\bf y}).
\end{equation}
The latter can be simplified using Bayes's rule as follows:
\begin{align}
    \begin{split}
         \hat{\bf x}_\text{MAP}&= \underset{\textbf{x}\in\bb{R}^n}{\arg\max}\,\frac{P({\bf y}|{\bf x})P({\bf x})}{P({\bf y})} \\
         % &=\underset{\textbf{x}\in\bb{R}^n}{\arg\max}\,P({\bf y}|{\bf x})P({\bf x}) \\
         &=\underset{\textbf{x}\in\bb{R}^n}{\arg\min}\, -\log P({\bf y}|{\bf x})-\log P({\bf x}),
    \end{split}
    \label{eq:MAP}
\end{align}
where we omit $P({\bf y})$ since it is not a function of $\bf x$, and we exploit the monotonic decreasing property of $-\log(\cdot)$ to recast the estimation as a minimization problem. The log-likelihood term, denoted as $\ell({\bf x};{\bf y})\triangleq -\log P({\bf y}|{\bf x})$, describes the probabilistic relationship between the  measurements $\bf y$ and the desired image $\bf x$, assumed to be known. Typically, the likelihood alone is not sufficient and leads to an ill-posed problem for which the solution is not unique or stable. The probability distribution $P({\bf x})$ leads to a prior, denoted by $\lambda\rho({\bf x})\triangleq -\log P({\bf x})$, incorporating the statistical nature of the unknown. This regularization term stabilizes and better-conditions the optimization problem. Thus, we can rewrite \eqref{eq:MAP} as
\begin{equation}
    \hat{\bf x}_\text{MAP}=\underset{\textbf{x}\in\bb{R}^n}{\arg\min}\;\ell({\bf x};{\bf y})+\lambda\rho({\bf x}),
    \label{eq:MAPoptimization}
\end{equation}
where $\lambda\geq 0$ represents the level of confidence in the prior. We assume hereafter that the log-likelihood $\ell(\textbf{x};\textbf{y})$ is a convex, differentiable, lower semicontinuous (l.s.c) and proper function. A classic model for the measurements, considered throughout this paper, is given by
\begin{equation}
    \bf y=Hx+e,
\end{equation}
where $\bf H$ is a linear
degradation operator and $\bf e$ is a white Gaussian noise (WGN) of
variance $\sigma^2$. This leads to %the following log-likelihood
\begin{equation}
    \ell({\bf x};{\bf y})=\frac{1}{2\sigma^2}\norm{\bf Hx-y}_2^2.
    \label{eq:fidelity}
\end{equation}
Note that the noise
distribution might be different, e.g., Laplacian, Gamma-distributed, Poisson, and other noise models. In these cases, the expression for the log-likelihood above changes accordingly, differing from the $L_2$-norm.  

A challenge that remains is determining the prior to incorporate into the problem formulation. This task of choosing the appropriate $\rho({\bf x})$ has been the center of numerous studies, and  various terms have been proposed, ranging from Tikhonov smoothness \cite{golub1999tikhonov} and the classic Laplacian \cite{lagendijk2009basic}, through  
wavelet sparsity \cite{mallat1999wavelet} and total variation \cite{rudin1992nonlinear}, to patch-based GMM \cite{zoran2011learning,yu2011solving},  sparse-representation modeling \cite{bruckstein2009sparse, elad2010sparse} and recent deep-learning techniques \cite{ulyanov2018deep}. A possible answer to this challenge may lay in the solution of a special inverse problem, which is image denoising:
\begin{equation}
    \hat{\bf x}_{Denoise}=\underset{\textbf{x}\in\bb{R}^n}{\arg\min}\;\frac{1}{2\sigma^2}\norm{\bf x-y}_2^2+\lambda\rho({\bf x}).
    \label{eq:MAPdenoise}
\end{equation}
To a large extent, the removal of an additive white  Gaussian noise from an image is considered in computational imaging as a solved problem \cite{romano2017little}. In the last decade, many extremely effective image denoising algorithms have been proposed, yielding impressive results. In general, the image denoising engine is a function $f:\bb{R}^n\rightarrow\bb{R}^n$ that maps an image $\bf y$ to a different image of the same size, $\hat{\bf x}=f({\bf y})$, where ideally $\hat{\bf x}=\bf x$ -- the original noiseless image. Denoising solutions may be based on the MAP estimation \eqref{eq:MAPdenoise},  minimum mean-square-error, collaborative filtering, supervised learning, and more. 
The success in noise removal has led researchers to exploit the power of denoising engines for solving other problems. In the following, we describe two approaches, PnP \cite{venkatakrishnan2013plug} and RED \cite{romano2017little}, which utilize denoisers as implicit and explicit regularization terms respectively. When the chosen denoisers are proximal operators PnP reduces to common solvers. Thus, the PnP framework extends well-known optimization schemes, while RED presents an alternative to them. When powerful denoiser are used, both these important approaches have shown state-of-the-art performance in tasks such as unsupervised image deblurring and image super-resolution.   

%=========================================================================

\subsection{Plug and Play Prior (PnP)}

Here we briefly review the PnP approach \cite{venkatakrishnan2013plug} by Venkatakrishnan \textit{et al.}, who suggested the use of denoisers as an implicit regularization. We start with problem \eqref{eq:MAPoptimization} where we assume that $\rho(\cdot)$ is convex. The MAP solution can be obtained by two common optimization solvers, PGM and ADMM, summarized in \cref{algo:PGM} and \cref{algo:ADMM} respectively. Note that when $\alpha_k\equiv0$, \cref{algo:PGM} reduces to the standard PGM method, while the following update rule
\begin{align*}
    &t_0=1,\; t_{k+1}=\frac{1}{2}\left(1+\sqrt{1+4t_k^2}\right), \alpha_k=\frac{t_k-1}{t_{k+1}}
\end{align*}
leads to the accelerated PGM \cite{beck2009fast}.

\begin{algorithm}
\caption{PGM/APGM}
\begin{algorithmic}[1]
\item\textbf{Input:} ${\bf z}_0={\bf x}_0\in\bb{R}^n$, $\mu,\,N>0$ and $\{\alpha_k\}_{k\in\bb{N}}$.
\item \textbf{for} $k=0,1,2,...,N-1$ \textbf{do}:  
\begin{itemize}
    \item $\textbf{v}_{k+1}=\textbf{x}_k-\mu\nabla\ell(\textbf{x}_k;\textbf{y})$
    \item $\textbf{z}_{k+1}=\text{P}_{\mu\lambda \rho(\cdot)}(\textbf{v}_{k+1})$
    \item $\textbf{x}_{k+1}=\textbf{z}_{k+1}+\alpha_k(\textbf{z}_{k+1}-\textbf{z}_k)$
\end{itemize}
\item \textbf{Output:} $\textbf{x}_{k+1}$.
\end{algorithmic}
\label{algo:PGM}
\end{algorithm}

\begin{algorithm}
\caption{ADMM}
\begin{algorithmic}[1]
\item\textbf{Input:} ${\bf z}_0={\bf x}_0\in\bb{R}^n$, $\beta,\,N>0$ and $\textbf{u}_0\in\bb{R}^n$.
\item \textbf{for} $k=0,1,2,...,N-1$ \textbf{do}: 
\begin{itemize} 
    \item $\textbf{x}_{k+1}=\text{P}_{\frac{1}{\beta} \ell(\cdot)}(\textbf{z}_{k}-\textbf{u}_{k})$
    \item $\textbf{z}_{k+1}=\text{P}_{\frac{\lambda}{\beta} \rho(\cdot)}(\textbf{x}_{k+1}+\textbf{u}_{k})$
    \item $\textbf{u}_{k+1}=\textbf{u}_k+\textbf{x}_{k+1}-\textbf{z}_{k+1}$
\end{itemize}
\item \textbf{Output:} $\textbf{x}_{k+1}$.
\end{algorithmic}
\label{algo:ADMM}
\end{algorithm}

As can be seen, both techniques make use of proximal operators, defined as
\begin{equation}
    \text{P}_{g(\cdot)}(\textbf{x})\triangleq\underset{\textbf{v}\in\bb{R}^n}{\arg\min}\; \frac{1}{2}\norm{\textbf{x}-\textbf{v}}_2^2+g(\textbf{v}),
    \label{eq:prox}
\end{equation}
for a closed, proper and convex function $g:\bb{R}^n\rightarrow\bb{R}$. Notice that \eqref{eq:prox} resembles \eqref{eq:MAPdenoise}, implying that proximal operators are a specific family of denoisers. Thus, the PnP framework offers to extend \cref{algo:PGM} and \cref{algo:ADMM} by replacing the proximal operators with a general denoiser,\footnote{We use hereafter the function $f(\cdot)$ to denote a general denoiser, and we omit its dependency on the noise-level (or the denoising strength to employ). However, we should note that this parameter should be chosen carefully, as it can be critical for the performance of any of the algorithms described.} $f(\cdot)$, not necessarily variational in nature, i.e., a denoiser which is not originated from any explicit regularizer. This leads to PnP-PGM and PnP-ADMM, where the latter has been introduced in the original PnP formulation \cite{venkatakrishnan2013plug}. Note that the PnP-PGM requires the gradient of $\ell(\textbf{x};\textbf{y})$, while in general any method that evaluates the proximal operator of $\ell(\textbf{x};\textbf{y})$ (rather than its gradient), e.g. PnP-ADMM, is more computationally expensive \cite{sun2019online}.    

% \begin{algorithm}
% \caption{PnP-PGM/APGM}
% \begin{algorithmic}[1]
% \item\textbf{Input:} ${\bf z}_0={\bf x}_0\in\bb{R}^n$, $\mu>0$, $\{\alpha_k\}_{k\in\bb{N}}$ and $f(\cdot)$.
% \item \textbf{for} $k=0,1,2,...$ \textbf{do}:  
% \begin{itemize}
%     \item $\textbf{v}_{k+1}=\textbf{x}_k-\mu\nabla\ell(\textbf{x}_k;\textbf{y})$
%     \item $\textbf{z}_{k+1}=f(\textbf{v}_{k+1})$
%     \item $\textbf{x}_{k+1}=\textbf{z}_{k+1}+\alpha_k(\textbf{z}_{k+1}-\textbf{z}_k)$
% \end{itemize}
% \item \textbf{Output:} $\textbf{x}_{k+1}$.
% \end{algorithmic}
% \label{algo:P3PGM}
% \end{algorithm}

% \begin{algorithm}
% \caption{PnP-ADMM}
% \begin{algorithmic}[1]
% \item\textbf{Input:} ${\bf z}_0={\bf x}_0\in\bb{R}^n$, $\beta>0$, $\textbf{u}_0\in\bb{R}^n$ and $f(\cdot)$.
% \item \textbf{for} $k=0,1,2,...$ \textbf{do}:  
% \begin{itemize}
%     \item $\textbf{x}_{k+1}=\underset{\textbf{x}\in\bb{R}^n}{\arg\min}\;\frac{\beta}{2}\norm{\textbf{x}-\textbf{z}_k+\textbf{u}_k}_2^2+\ell(\textbf{x};\textbf{y})$
%     \item $\textbf{z}_{k+1}=f(\textbf{x}_{k+1}+\textbf{u}_k)$
%     \item $\textbf{u}_{k+1}=\textbf{u}_k+\textbf{x}_{k+1}-\textbf{v}_{k+1}$
% \end{itemize}
% \item \textbf{Output:} $\textbf{x}_{k+1}$.
% \end{algorithmic}
% \label{algo:P3ADMM}
% \end{algorithm}

Empirically, incorporating powerful denoisers (such as BM3D, TNRD and DnCNN) into the PnP framework has led to state-of-the-art results in various inverse problems. However, for general denoisers other than proximal operators, the PnP methods cannot be interpreted as optimization solvers, making it difficult to theoretically investigate the stability and uniqueness of their solutions. Several studies \cite{sun2019online, chan2016plug, buzzard2018plug, teodoro2017scene, chan2019performance, teodoro2018convergent, dong2018denoising, tirer2018image, ryu2019plug} proved the convergence of PnP methods to a fixed-point under different conditions, while the work reported in \cite{sreehari2016plug} states clear conditions for a global convergence of PnP. Yet, none of these provide an explicit expression of an objective function which is minimized. In Section\,\ref{sec:REDPRO} we remedy this by introducing an optimization technique, which PnP-PGD is a special case of. Thus, we offer a novel theoretical explanation for PnP approach.

%=========================================================================

\subsection{Regularization by Denoising (RED)}
\label{subsec:RED}

As discussed above, the PnP framework has been the first to exploit denoisers for an implicit regularization, while lacking an underlying objective function. As an alternative, Romano \textit{et al.} \cite{romano2017little} introduced a different, yet related, strategy that harnesses image denoisers, called \textit{REgularization by Denoising}. The RED framework defines the following regularization term:
\begin{equation}
    \rho_\text{RED}(\textbf{x})\triangleq\frac{1}{2}\Langle \textbf{x},\textbf{x}-f(\textbf{x})\Rangle.
    \label{eq:REDprior}
\end{equation}
This prior is an image-adaptive Laplacian whose definition is based on the denoiser of choice, $f(\cdot)$. Thus, the overall optimization problem to solve is
\begin{equation}
    \hat{\bf x}_\text{RED}=\underset{\textbf{x}\in\bb{R}^n}{\arg\min}\;\ell({\bf x};{\bf y})+\frac{\lambda}{2}\Langle \textbf{x},\textbf{x}-f(\textbf{x})\Rangle.
    \label{eq:REDoptimization}
\end{equation}
The denoiser $f(\cdot)$ is assumed to obey the following assumptions, which we refer to hereafter as the RED conditions
\begin{itemize}
    \item[(C1)] \textbf{Local Homogeneity}: $\forall \textbf{x}\in\bb{R}^n$, $f\Big((1+\epsilon)\textbf{x}\Big)=(1+\epsilon)f(\textbf{x})$ for sufficiently small $\epsilon>0$.
    \item[(C2)] \textbf{Differentiability}: The denoiser $f(\cdot)$ is differentiable where $\nabla f$ denotes its Jacobian.
    \item[(C3)] \textbf{Jacobian
Symmetry} \cite{reehorst2018regularization}: $
    \nabla f(\textbf{x})^T=\nabla f(\textbf{x}),\; \forall \textbf{x}\in\bb{R}^n$.
    \item[(C4)] \textbf{Strong Passivity}: The spectral radius the Jacobian satisfies $\eta\Big(\nabla f(\textbf{x})\Big)\leq 1$.
\end{itemize}

\noindent Interestingly, under these conditions, the regularization term \eqref{eq:REDprior} is differentiable, convex, and its gradient is given by the denoising residual $\textbf{x}-f(\textbf{x})$. Furthermore, denoting by $E_\text{RED}(\textbf{x})$ the objective function of \eqref{eq:REDoptimization},  $E_\text{RED}(\textbf{x})$ is convex whenever the log-
likelihood is convex, and it gradient is given simply as 
\begin{equation}
    \nabla E_\text{RED}(\textbf{x})=\nabla \ell(\textbf{x};\textbf{y})+\lambda\Big(\textbf{x}-f(\textbf{x})\Big).
    \label{eq:REDgradient}
\end{equation}
Based on this expression, Romano \textit{et al.} have proposed several iterative algorithms -- steepest descent, fixed-point iteration and ADMM, which are guaranteed to converge to the global optimum of \eqref{eq:REDoptimization}.

In a later work \cite{reehorst2018regularization}, the authors have argued that many popular denoisers lack symmetric Jacobian (C3),\footnote{Indeed, \cite{reehorst2018regularization} was the first work to draw attention to the need for symmetry of the Jacobian.} making the gradient expression \eqref{eq:REDgradient} invalid. Moreover, they have proven that when the denoiser $f(\textbf{x})$ fails to satisfy condition (C3), there is no regularizer $\rho(\textbf{x})$ whose gradient is the denoising residual $\textbf{x}-f(\textbf{x})$. Yet, in practice, the RED algorithms converge and have demonstrated state-of-the-art results in super-resolution and image deblurring even when used with denoisers that are not differentiable, let alone exhibit symmetric Jacobian. Therefore, the theoretical justification of the RED approach remains an open question, which we aim to address in \cref{subsec:REDrevisited}.

%=========================================================================
%=========================================================================
\section{Demicontractivity}
\label{sec:demi}
In this section, we introduce our assumption of demicontractive denoisers which is the fundamental building block of our framework formulation.
We first outline basic definitions and facts of nonlinear analysis, followed by an extensive discussion on demicontractivity and its relation to other assumptions, motivating our work.   

\subsection{Fixed-Point Theory}
\label{subsec:review}
Here we review key concepts of fixed-point theory \cite{rockafellar2009variational, bauschke2011convex} on which we base our contributions. We start with considering a nonlinear mapping $T:\bb{R}^n\rightarrow\bb{R}^n$, and we say a point $\textbf{x}\in\bb{R}^n$ is a fixed-point of $T$ iff $T(\textbf{x})=\textbf{x}$. We define the the set of all fixed-points of $T$ as
\begin{equation}
    \fix(T)\triangleq \{\textbf{x}\in\bb{R}^n\,:\;T(\textbf{x})=\textbf{x}\}.
\end{equation}
Throughout the paper we assume that $\fix(T)$ is nonempty.\footnote{A reasonable assumption is that $f(0)=0$ as any denoiser of the form $f(\textbf{x})=W(\textbf{x})\textbf{x}$ satisfies this. Moreover, all the denoisers we experiment with in this work meet this condition.} Our study focuses on the set $\fix(T)$ and its favorable properties for the family of demicontractive mappings defined next.   
% \begin{definition}[Demiclosedness]
% The mapping $T$ is said to be \textit{demiclosed} at 0, if for any sequence $\{\textbf{x}_i\}$ that satisfies $\textbf{x}_i\underset{i\rightarrow\infty}{\rightarrow}\textbf{x}$ and $\textbf{x}_i-T(\textbf{x}_i)\underset{i\rightarrow\infty}{\rightarrow}0$, we have $\textbf{x}=T(\textbf{x})$.
% \end{definition}

\begin{definition}[Demicontractive]
The mapping $T$ is \textit{demicontractive} with a constant $d\in[0,1)$ (or $d$-demicontractive) if for any $\textbf{x}\in\bb{R}^n$ and $\textbf{z}\in\fix(T)$ it holds that
\begin{equation}
    \norm{T(\textbf{x})-\textbf{z}}^2\leq \norm{\textbf{x}-\textbf{z}}^2+d\norm{T(\textbf{x})-\textbf{x}}^2,
    \label{eq:demi}
\end{equation}
or equivalently
\begin{equation}
    \frac{1-d}{2}\norm{\textbf{x}-T(\textbf{x})}^2\leq\Langle \textbf{x}-T(\textbf{x}),\,\textbf{x}-\textbf{z}\Rangle.
    \label{eq:demi2}
\end{equation}
\end{definition}

\begin{definition}
The mapping $T$ is \textit{quasi-nonexpansive} if 
\begin{equation}
    \norm{T(\textbf{x})-\textbf{z}}\leq \norm{\textbf{x}-\textbf{z}},\;\forall \textbf{x}\in\bb{R}^n,\,\textbf{z}\in\fix(T).
\end{equation}
We say $T$ is $\gamma$-\textit{strongly quasi-nonexpansive} with $\gamma\geq 0$ if
\begin{equation}
    \norm{T(\textbf{x})-\textbf{z}}^2\leq \norm{\textbf{x}-\textbf{z}}^2-\gamma\norm{T(\textbf{x})-\textbf{x}}^2,\;\forall \textbf{x}\in\bb{R}^n,\,\textbf{z}\in\fix(T)..
\end{equation}
\end{definition}

\begin{proposition}
Let $T$ be a $\gamma$-strongly quasi-nonexpansive mapping with $\gamma\geq1$. Then, 
\begin{equation}
    \norm{\textbf{x}-T(\textbf{x})}\leq \norm{\textbf{x}-P_{\fix(T)}(\textbf{x})},
\end{equation}
where $P_{\fix(T)}$ represents the projection onto the fixed-point set of $T$.
\label{prop:displacemenbound}
\end{proposition}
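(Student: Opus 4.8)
The plan is to exploit the defining inequality of $\gamma$-strong quasi-nonexpansiveness together with a projection argument. Write $\textbf{p}=P_{\fix(T)}(\textbf{x})$, the (unique) projection of $\textbf{x}$ onto the closed convex-or-at-least-nonempty set $\fix(T)$. Since $\textbf{p}\in\fix(T)$, the defining inequality with $\textbf{z}=\textbf{p}$ gives
\begin{equation}
    \norm{T(\textbf{x})-\textbf{p}}^2\leq \norm{\textbf{x}-\textbf{p}}^2-\gamma\norm{T(\textbf{x})-\textbf{x}}^2.
\end{equation}
The left-hand side is nonnegative, so rearranging immediately yields $\gamma\norm{T(\textbf{x})-\textbf{x}}^2\leq\norm{\textbf{x}-\textbf{p}}^2$. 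With $\gamma\geq1$ this gives $\norm{\textbf{x}-T(\textbf{x})}^2\leq\norm{\textbf{x}-\textbf{p}}^2=\norm{\textbf{x}-P_{\fix(T)}(\textbf{x})}^2$, and taking square roots finishes the argument.

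The main subtlety is that the argument above implicitly needs $P_{\fix(T)}(\textbf{x})$ to be well-defined, which requires $\fix(T)$ to be nonempty (assumed throughout the paper) and closed; for a $\gamma$-strongly quasi-nonexpansive mapping with $\gamma\geq1$ one should note that $\fix(T)$ is in fact closed and convex, so the metric projection exists and is unique, making the displayed bound meaningful. I would include a one-line remark to this effect. In fact, one does not even need the projection to be onto the \emph{nearest} point for the inequality to hold: the chain of inequalities above is valid for \emph{every} $\textbf{z}\in\fix(T)$, giving $\norm{\textbf{x}-T(\textbf{x})}\leq\norm{\textbf{x}-\textbf{z}}$ for all such $\textbf{z}$, and then one takes the infimum over $\textbf{z}\in\fix(T)$, which by definition equals the distance $\norm{\textbf{x}-P_{\fix(T)}(\textbf{x})}$. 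This is arguably the cleanest way to present it and sidesteps any worry about projection regularity.

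So concretely the steps, in order, are: (i) fix an arbitrary $\textbf{z}\in\fix(T)$; (ii) apply the $\gamma$-strong quasi-nonexpansiveness inequality at $\textbf{x}$ with this $\textbf{z}$; (iii) drop the nonnegative term $\norm{T(\textbf{x})-\textbf{z}}^2$ on the left and rearrange to get $\gamma\norm{T(\textbf{x})-\textbf{x}}^2\leq\norm{\textbf{x}-\textbf{z}}^2$; (iv) use $\gamma\geq1$ to conclude $\norm{\textbf{x}-T(\textbf{x})}\leq\norm{\textbf{x}-\textbf{z}}$; (v) take the infimum over $\textbf{z}\in\fix(T)$ and identify it with $\norm{\textbf{x}-P_{\fix(T)}(\textbf{x})}$. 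There is essentially no hard part here — the only thing to be careful about is the direction of the inequality when discarding terms and the legitimacy of replacing the infimum by the projection distance, which is immediate once $\fix(T)$ is known to be nonempty closed convex.
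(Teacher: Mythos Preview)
Your argument is correct. The paper actually states this proposition without proof (it is listed among the reviewed facts in Section~\ref{subsec:review} with no accompanying proof or reference to the appendix), so there is nothing to compare against; your derivation---apply the defining inequality at an arbitrary $\textbf{z}\in\fix(T)$, drop the nonnegative left-hand side, use $\gamma\geq1$, and minimize over $\textbf{z}$---is exactly the standard one-line justification one would expect here.
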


\begin{definition}
The mapping $T$ is called \textit{nonexpansive} if
\begin{equation}
    \norm{T(\textbf{x})-T(\textbf{z})}\leq \norm{\textbf{x}-\textbf{z}},\forall \textbf{x},\textbf{z}\in\bb{R}^n.
    \label{def:nonexpansive}
\end{equation}
\end{definition}

\begin{definition}
The mapping $T$ is \textit{Lipschitz continuous} with a constant $L>0$ if
\begin{equation}
    \norm{T(\textbf{x})-T(\textbf{z})}\leq L\norm{\textbf{x}-\textbf{z}},\forall \textbf{x},\textbf{z}\in\bb{R}^n.
\end{equation}
When $L<1$, $T$ is called a contraction.
\end{definition}
\noindent Notice that any Lipschitzian mapping admits a nonexpansive function by an appropriate scaling. In addition, as illustrated in Fig.\,\ref{fig:hierarchy}, demicontractive mappings include the class of quasi-nonexpansive mappings, which in turn contains the widely studied class of nonexpansive operators. Thus, the class of demicontractive operators cover a large extent of functions and it is one of the most general classes for which some iterative methods were
investigated \cite{muarucster2011strong}, explaining their centrality in this work. Below we provide results concerning the structure of the fixed point set for demicontractive mappings. 

\begin{figure}
    \centering
    \includegraphics[trim={1cm 0cm 1cm 0cm},clip,height = 5cm, width = 0.5\linewidth]{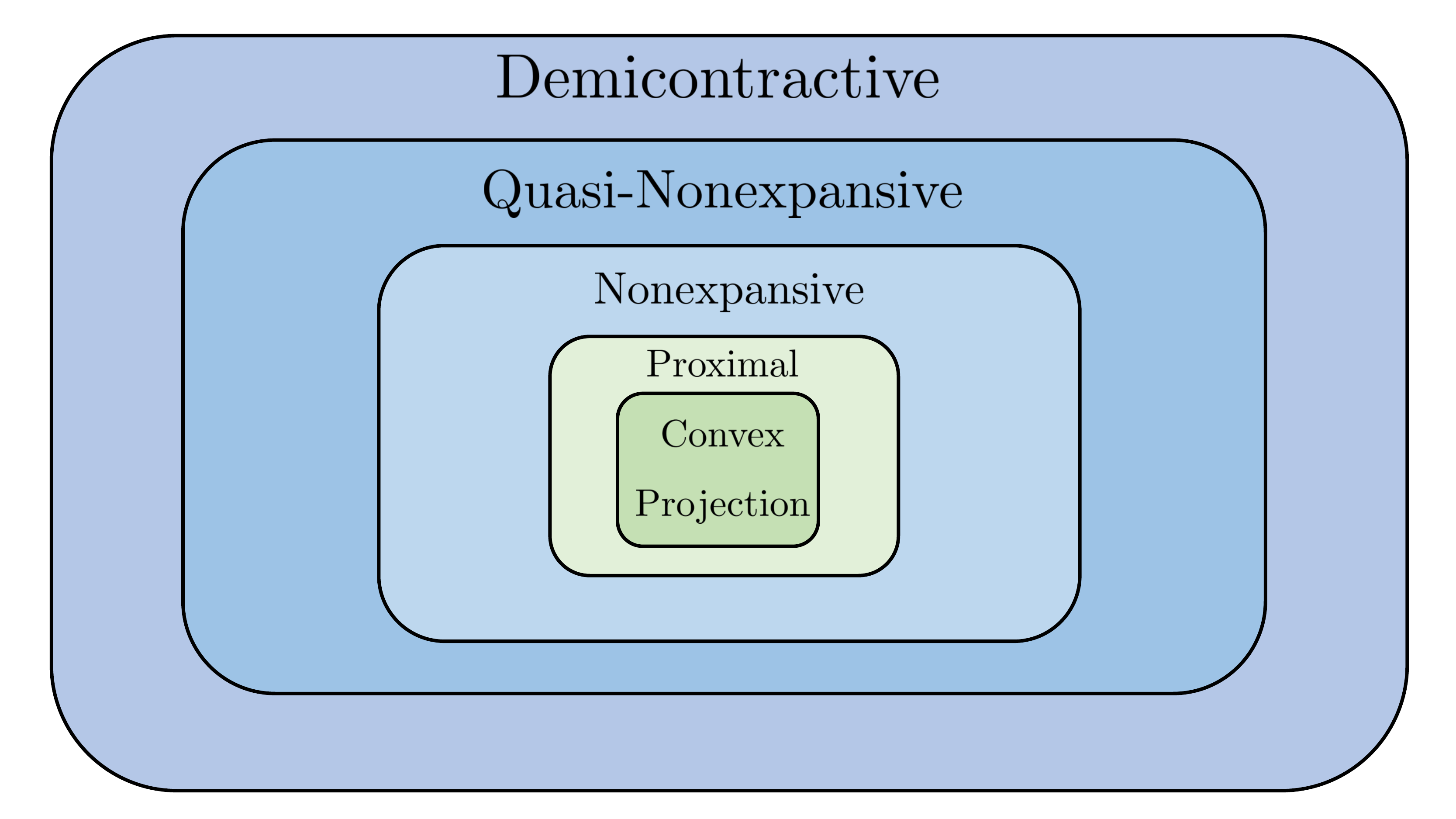}
    \caption{The family of demicontractive mappings and its subclasses.}
    \label{fig:hierarchy}
\end{figure}

\begin{definition}
Consider a mapping $T$ and let $\alpha\in(0,1)$. Then, the relaxation of $T$ is defined as the averaged  operator $T_\alpha\triangleq \alpha T +(1-\alpha)Id$ where $Id$ is the identity operator. Notice it holds that $\fix(T_\alpha)=\fix(T)$. 
\end{definition}

\begin{proposition}
Consider a $d$-demicontractive mapping $T$ and $\alpha\in (0,1-d]$. Then, the relaxed operator $T_\alpha$ is $\gamma$-strongly quasi-nonexpansive with $\gamma=\frac{(1-d-\alpha)}{\alpha}$.
\label{prop:demi2quasi}
\end{proposition}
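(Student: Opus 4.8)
The plan is to start from the defining inequality of $d$-demicontractivity in the form \eqref{eq:demi}, applied to the original mapping $T$, and then rewrite the displacement term $\|T_\alpha(\mathbf{x})-\mathbf{x}\|$ and the discrepancy $\|T_\alpha(\mathbf{x})-\mathbf{z}\|$ in terms of $\|T(\mathbf{x})-\mathbf{x}\|$ using the convex-combination structure $T_\alpha = \alpha T + (1-\alpha)Id$. The key identity is that $T_\alpha(\mathbf{x}) - \mathbf{x} = \alpha\bigl(T(\mathbf{x})-\mathbf{x}\bigr)$, so that $\|T_\alpha(\mathbf{x})-\mathbf{x}\|^2 = \alpha^2 \|T(\mathbf{x})-\mathbf{x}\|^2$; this converts any bound stated in terms of the $T$-displacement into one in terms of the $T_\alpha$-displacement up to a factor $\alpha^2$.

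Concretely, first I would fix $\mathbf{x}\in\mathbb{R}^n$ and $\mathbf{z}\in\fix(T_\alpha)=\fix(T)$, and expand
\[
\norm{T_\alpha(\mathbf{x})-\mathbf{z}}^2 = \norm{\alpha(T(\mathbf{x})-\mathbf{z}) + (1-\alpha)(\mathbf{x}-\mathbf{z})}^2.
\]
Using the standard identity for norms of convex combinations, $\|\alpha \mathbf{a} + (1-\alpha)\mathbf{b}\|^2 = \alpha\|\mathbf{a}\|^2 + (1-\alpha)\|\mathbf{b}\|^2 - \alpha(1-\alpha)\|\mathbf{a}-\mathbf{b}\|^2$, with $\mathbf{a}=T(\mathbf{x})-\mathbf{z}$ and $\mathbf{b}=\mathbf{x}-\mathbf{z}$ (so $\mathbf{a}-\mathbf{b}=T(\mathbf{x})-\mathbf{x}$), this becomes
\[
\norm{T_\alpha(\mathbf{x})-\mathbf{z}}^2 = \alpha\norm{T(\mathbf{x})-\mathbf{z}}^2 + (1-\alpha)\norm{\mathbf{x}-\mathbf{z}}^2 - \alpha(1-\alpha)\norm{T(\mathbf{x})-\mathbf{x}}^2.
\]
Now substitute the demicontractivity bound \eqref{eq:demi}, $\norm{T(\mathbf{x})-\mathbf{z}}^2 \le \norm{\mathbf{x}-\mathbf{z}}^2 + d\norm{T(\mathbf{x})-\mathbf{x}}^2$, into the first term. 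The $\norm{\mathbf{x}-\mathbf{z}}^2$ terms collect to $\norm{\mathbf{x}-\mathbf{z}}^2$, and the coefficient of $\norm{T(\mathbf{x})-\mathbf{x}}^2$ becomes $\alpha d - \alpha(1-\alpha) = -\alpha(1-d-\alpha)$, which is $\le 0$ precisely because $\alpha \le 1-d$. Finally, rewrite $\norm{T(\mathbf{x})-\mathbf{x}}^2 = \tfrac{1}{\alpha^2}\norm{T_\alpha(\mathbf{x})-\mathbf{x}}^2$ to obtain
\[
\norm{T_\alpha(\mathbf{x})-\mathbf{z}}^2 \le \norm{\mathbf{x}-\mathbf{z}}^2 - \frac{1-d-\alpha}{\alpha}\norm{T_\alpha(\mathbf{x})-\mathbf{x}}^2,
\]
which is exactly the $\gamma$-strong quasi-nonexpansiveness claim with $\gamma = (1-d-\alpha)/\alpha \ge 0$.

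I do not anticipate a serious obstacle here; the argument is essentially an algebraic manipulation. The one point requiring mild care is the degenerate case $\alpha = 1-d$, where $\gamma = 0$ and the statement reduces to quasi-nonexpansiveness of $T_\alpha$ — this is covered automatically by the computation since $\alpha(1-d-\alpha)=0$ there. It is also worth noting at the outset that $\fix(T_\alpha)=\fix(T)$ (already recorded in the definition of the relaxation), so that the point $\mathbf{z}$ in the two inequalities refers to the same set; and one should check the edge behavior as $\alpha\to 0$ is excluded by the hypothesis $\alpha\in(0,1-d]$, so the division by $\alpha^2$ is legitimate.
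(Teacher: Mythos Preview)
Your proof is correct and essentially the same as the paper's: both expand $\norm{T_\alpha(\mathbf{x})-\mathbf{z}}^2$, bound it using demicontractivity, and then substitute $\norm{T(\mathbf{x})-\mathbf{x}}=\tfrac{1}{\alpha}\norm{T_\alpha(\mathbf{x})-\mathbf{x}}$. The only cosmetic difference is that the paper expands the square directly and invokes the inner-product form \eqref{eq:demi2}, whereas you use the convex-combination identity together with the norm form \eqref{eq:demi}; the resulting intermediate inequality $\norm{T_\alpha(\mathbf{x})-\mathbf{z}}^2 \le \norm{\mathbf{x}-\mathbf{z}}^2 - \alpha(1-d-\alpha)\norm{T(\mathbf{x})-\mathbf{x}}^2$ is identical.
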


\begin{proof}
See \cref{app:demi2quasi}.
\end{proof}

% \begin{theorem}
% Let $T_1$ and $T_2$ be two strongly quasi-nonexpansive mappings with constants $\gamma_1$ and $\gamma_2$ respectively such that $\fix(T_1)\cap\fix(T_2)\neq \emptyset$. Then, the composition $T\triangleq T_1 \circ T_2$ is $\gamma$-strongly quasi-nonexpansive where $\gamma=\frac{\gamma_1\gamma_2}{\gamma_1+\gamma_2}$ and it satisfies $\fix(T)=\fix(T_1)\cap\fix(T_2)$. 
% \label{theo:composition}
% \end{theorem}

% \begin{proof}
% See Proposition 1(d) in \cite{yamada2004hybrid}.
% % See Appendix \ref{app:composition}.
% \end{proof}

\noindent \cref{prop:demi2quasi} provides us a simple tool to construct a strong quasi-nonexpansive map from a demicontractive mapping, which we utilize in the next section.   

The following theorem is the foundation on which we shall base our problem formulation, introduced in Section\,\ref{sec:REDPRO}.
\begin{theorem}[\cite{chidume2010iterative,chidume2010iterative}]
Suppose that $T$ is a $d$-demicontractive mapping. Then, the fixed-point set $\fix(T)$ is closed and convex .
\label{theo:closed&convex}
\end{theorem}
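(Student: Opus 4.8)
The plan is to prove closedness and convexity of $\fix(T)$ directly from the equivalent demicontractivity inequality \eqref{eq:demi2}, which conveniently expresses membership in $\fix(T)$ via a variational inequality. First I would record the trivial observation: if $T(\textbf{x})=\textbf{x}$ then the right-hand side of \eqref{eq:demi2} vanishes, and if the right-hand side vanishes for all $\textbf{z}\in\fix(T)$ then (picking any fixed $\textbf{z}$, which exists by our standing assumption that $\fix(T)$ is nonempty, and using $1-d>0$) we get $\norm{\textbf{x}-T(\textbf{x})}=0$, i.e. $\textbf{x}\in\fix(T)$. The subtlety is that $\textbf{x}\in\fix(T)$ is a priori only equivalent to $\Langle \textbf{x}-T(\textbf{x}),\,\textbf{x}-\textbf{z}\Rangle=0$ for \emph{all} $\textbf{z}\in\fix(T)$; this is not yet a closed convex condition because the residual $\textbf{x}-T(\textbf{x})$ depends on $\textbf{x}$. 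The key trick, standard for demicontractive maps, is to fix a reference point $\textbf{p}\in\fix(T)$ and show $\fix(T)=\bigcap_{\textbf{x}\in\bb{R}^n} H_\textbf{x}$ where each $H_\textbf{x}$ is the closed half-space $\{\textbf{z}:\Langle \textbf{x}-T(\textbf{x}),\,\textbf{z}\Rangle\le \Langle \textbf{x}-T(\textbf{x}),\,\textbf{x}\Rangle-\tfrac{1-d}{2}\norm{\textbf{x}-T(\textbf{x})}^2\}$, intersected with analogous constraints — but a cleaner route avoids indexing by all of $\bb{R}^n$.

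The cleaner route: fix $\textbf{p}\in\fix(T)$ and, for an arbitrary $\textbf{z}\in\fix(T)$, apply \eqref{eq:demi2} with the roles reversed. Specifically, I would show that for $\textbf{z}\in\fix(T)$ one has the stronger identity $\Langle \textbf{x}-T(\textbf{x}),\,\textbf{z}-\textbf{p}\Rangle\le 0$ does not immediately follow, so instead I use the characterization: $\textbf{z}\in\fix(T)$ iff $\textbf{z}$ satisfies $\norm{T(\textbf{z})-\textbf{z}}=0$, and rewrite this using \eqref{eq:demi} with base point $\textbf{z}$ replaced by the \emph{arbitrary} fixed point and test point $\textbf{z}$. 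Concretely, for any $\textbf{u}\in\fix(T)$ and any $\textbf{z}$, \eqref{eq:demi2} gives $\tfrac{1-d}{2}\norm{\textbf{z}-T(\textbf{z})}^2\le\Langle \textbf{z}-T(\textbf{z}),\,\textbf{z}-\textbf{u}\Rangle$. Now I claim $\fix(T)$ equals the set $S\triangleq\{\textbf{z}:\Langle \textbf{z}-T(\textbf{z}),\,\textbf{z}-\textbf{u}\Rangle\le 0\ \text{for some/the fixed}\ \textbf{u}\}$; this still has the $\textbf{z}$-dependent residual. The honest standard argument (e.g. as in Bauschke–Combettes for quasi-nonexpansive maps) is: closedness follows from continuity only if $T$ is assumed continuous, which is \emph{not} among the hypotheses — so one must instead derive closedness from the inequality itself. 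I would therefore argue as follows: for fixed $\textbf{p}\in\fix(T)$, and for $\textbf{z}\in\bb{R}^n$, consider the function $\phi(\textbf{z})=\norm{\textbf{z}-\textbf{p}}^2$; using \eqref{eq:demi} with base fixed point $\textbf{p}$ and the test inequality, one shows $\textbf{z}\in\fix(T)$ iff $\norm{\textbf{z}-\textbf{p}}^2=\Langle 2\textbf{z}-\textbf{p}-T(\textbf{z})+\dots\Rangle$ — this is getting unwieldy, so the intended clean proof is almost certainly the half-space intersection one, which I outline next.

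Here is the argument I would actually write down. Fix any $\textbf{p}\in\fix(T)$. For each $\textbf{x}\in\bb{R}^n$ define the closed affine half-space
\begin{equation*}
    C_\textbf{x}\triangleq\Big\{\textbf{z}\in\bb{R}^n:\ \Langle \textbf{x}-T(\textbf{x}),\,\textbf{z}\Rangle\ \ge\ \Langle \textbf{x}-T(\textbf{x}),\,\textbf{x}\Rangle-\tfrac{1-d}{2}\norm{\textbf{x}-T(\textbf{x})}^2\Big\}.
\end{equation*}
By \eqref{eq:demi2}, every $\textbf{z}\in\fix(T)$ lies in $C_\textbf{x}$ for every $\textbf{x}$, so $\fix(T)\subseteq\bigcap_{\textbf{x}\in\bb{R}^n}C_\textbf{x}=:C$. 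Conversely, if $\textbf{z}\in C$, then taking $\textbf{x}=\textbf{z}$ in the defining inequality of $C_\textbf{z}$ gives $\Langle \textbf{z}-T(\textbf{z}),\,\textbf{z}\Rangle\ge\Langle \textbf{z}-T(\textbf{z}),\,\textbf{z}\Rangle-\tfrac{1-d}{2}\norm{\textbf{z}-T(\textbf{z})}^2$, i.e. $0\ge-\tfrac{1-d}{2}\norm{\textbf{z}-T(\textbf{z})}^2$, which, combined with the reverse inequality obtained by using \eqref{eq:demi2} at $\textbf{x}=\textbf{z}$ with test point $\textbf{p}\in\fix(T)$ — namely $\tfrac{1-d}{2}\norm{\textbf{z}-T(\textbf{z})}^2\le\Langle \textbf{z}-T(\textbf{z}),\,\textbf{z}-\textbf{p}\Rangle$, and noting $\textbf{z}\in C_\textbf{z}$ already forces that inner product term to be $\le\tfrac{1-d}{2}\norm{\textbf{z}-T(\textbf{z})}^2$ after substituting $\textbf{z}=\textbf{x}$ and $\textbf{p}$ — pins $\norm{\textbf{z}-T(\textbf{z})}^2$ between a nonnegative quantity and its own negative multiple, hence $\norm{\textbf{z}-T(\textbf{z})}=0$ since $1-d>0$; thus $\textbf{z}\in\fix(T)$. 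Therefore $\fix(T)=C$, an intersection of closed half-spaces, which is closed and convex. The main obstacle — and the step I would be most careful about — is the converse inclusion: I must make sure the substitution $\textbf{x}=\textbf{z}$ combined with the existence of the reference fixed point $\textbf{p}$ genuinely squeezes $\norm{\textbf{z}-T(\textbf{z})}$ to zero and does not merely reproduce a tautology; this is exactly where nonemptiness of $\fix(T)$ and the strict bound $d<1$ are essential, so I would highlight those uses explicitly.
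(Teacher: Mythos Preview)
Your half-space intersection strategy is the right idea and is essentially the standard argument (the paper itself gives no proof, merely citing Lemma~5 of \cite{chidume2010iterative}). However, the inequality defining $C_\textbf{x}$ is reversed. With your $\ge$ sign, membership $\textbf{z}\in C_\textbf{x}$ is equivalent to $\Langle \textbf{x}-T(\textbf{x}),\,\textbf{x}-\textbf{z}\Rangle\le\tfrac{1-d}{2}\norm{\textbf{x}-T(\textbf{x})}^2$, whereas demicontractivity \eqref{eq:demi2} asserts the \emph{opposite} direction for $\textbf{z}\in\fix(T)$. So the forward inclusion $\fix(T)\subseteq C_\textbf{x}$ is false as written: for a generic $\textbf{x}$ the inner product can be strictly larger than $\tfrac{1-d}{2}\norm{\textbf{x}-T(\textbf{x})}^2$. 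This is precisely why your converse step collapsed to the tautology $0\ge-\tfrac{1-d}{2}\norm{\textbf{z}-T(\textbf{z})}^2$ and you were forced to introduce the reference point $\textbf{p}$ and a squeezing argument that, as you sensed, never quite closes.

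The fix is to flip the sign: set $C_\textbf{x}\triangleq\big\{\textbf{z}:\Langle \textbf{x}-T(\textbf{x}),\,\textbf{z}\Rangle\le\Langle \textbf{x}-T(\textbf{x}),\,\textbf{x}\Rangle-\tfrac{1-d}{2}\norm{\textbf{x}-T(\textbf{x})}^2\big\}$. Then \eqref{eq:demi2} gives $\fix(T)\subseteq\bigcap_{\textbf{x}}C_\textbf{x}$ immediately. For the reverse inclusion, if $\textbf{z}\in\bigcap_{\textbf{x}}C_\textbf{x}$, the single choice $\textbf{x}=\textbf{z}$ yields $\Langle \textbf{z}-T(\textbf{z}),\,\textbf{z}\Rangle\le\Langle \textbf{z}-T(\textbf{z}),\,\textbf{z}\Rangle-\tfrac{1-d}{2}\norm{\textbf{z}-T(\textbf{z})}^2$, i.e. $\tfrac{1-d}{2}\norm{\textbf{z}-T(\textbf{z})}^2\le0$, hence $\textbf{z}\in\fix(T)$ since $d<1$. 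No reference fixed point $\textbf{p}$, no squeezing, and no appeal to nonemptiness of $\fix(T)$ are needed. With this correction your proof is complete and self-contained, which is more than the paper itself provides.
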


\begin{proof}
See Lemma 5 of \cite{chidume2010iterative}.
\end{proof}

%=======================================================

\subsection{Demicontractive Denoisers}
\label{subsec:demi}
A fair and necessary question is whether general denoisers are demicontractive. While we cannot fully answer this question, we provide below supporting claims to our assumption of demicontractivity of the denoiser. We start with a typical assumption in convex optimization and gradually relax it to demicontractivity. Then, we directly relate the latter assumption to other common conditions for the convergence of PnP and RED.  

Given a denoiser $f(\cdot)$, we define the denoising residual as $r(\textbf{x})\triangleq \textbf{x}-f(\textbf{x})$. The RED framework aims at showing that under certain assumption on $f(\cdot)$, the residual $r(\cdot)$ is the gradient of an explicit convex function. Similarly, PnP can be seen an extension of Bayesian regularization \cite{sreehari2016plug} where the denoiser is a proximal operator of a (possibly implicit) function $\rho(\cdot)$, which in turn implies that the residual $r(\cdot)$ is the gradient of the convex Moreau envelope of $\rho(\cdot)$ \cite{moreau1965proximite}. Therefore, by the Baillon-Haddad theorem \cite{baillon1977quelques, bauschke2009baillon}, if the residual $r(\cdot)$ is $L$-Lipschitz continuous, it is $\frac{1}{L}$-co-coercive
\begin{equation}
    \frac{1}{L}\norm{r(\textbf{x})-r(\textbf{z})}^2\leq \langle r(\textbf{x})-r(\textbf{z}),\;\textbf{x}-\textbf{z}\rangle,\;\forall \textbf{x},\textbf{z}\in\mathbb{R}^n.
    \label{eq:cocoercive}
\end{equation}
The notion of co-coercivity plays an important role in the convergence of iterative schemes \cite{zhu1996co}. However, for general denoisers we cannot assume that the residual $r(\cdot)$ is the gradient of some convex function. Fortunately, this condition can be relaxed as follows. 
\begin{proposition}
Assume the mapping $T(\textbf{x})\triangleq r(\textbf{x})-\alpha \textbf{x}$ is Lipschitz continuous with Lipschitz constant $0<\beta\leq\alpha$, then, $r(\cdot)$ is co-coercive.
\end{proposition}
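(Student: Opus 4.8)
The plan is to show that $r(\cdot)$ is co-coercive, i.e. that there exists $c>0$ with $c\,\norm{r(\textbf{x})-r(\textbf{z})}^2\leq\langle r(\textbf{x})-r(\textbf{z}),\,\textbf{x}-\textbf{z}\rangle$ for all $\textbf{x},\textbf{z}$, by exploiting the decomposition $r(\textbf{x})=T(\textbf{x})+\alpha\textbf{x}$ together with the Lipschitz bound $\norm{T(\textbf{x})-T(\textbf{z})}\leq\beta\norm{\textbf{x}-\textbf{z}}$ with $\beta\leq\alpha$. First I would fix $\textbf{x},\textbf{z}$, write $\textbf{u}\triangleq\textbf{x}-\textbf{z}$ and $\textbf{w}\triangleq T(\textbf{x})-T(\textbf{z})$, so that $r(\textbf{x})-r(\textbf{z})=\textbf{w}+\alpha\textbf{u}$, and expand the inner product:
\begin{equation*}
    \langle r(\textbf{x})-r(\textbf{z}),\,\textbf{x}-\textbf{z}\rangle=\langle \textbf{w}+\alpha\textbf{u},\,\textbf{u}\rangle=\alpha\norm{\textbf{u}}^2+\langle \textbf{w},\textbf{u}\rangle.
\end{equation*}
By Cauchy--Schwarz and the Lipschitz bound, $\langle \textbf{w},\textbf{u}\rangle\geq-\norm{\textbf{w}}\norm{\textbf{u}}\geq-\beta\norm{\textbf{u}}^2$, hence $\langle r(\textbf{x})-r(\textbf{z}),\,\textbf{x}-\textbf{z}\rangle\geq(\alpha-\beta)\norm{\textbf{u}}^2\geq 0$, which already gives monotonicity of $r$; but I need to bound this below by $\norm{r(\textbf{x})-r(\textbf{z})}^2=\norm{\textbf{w}+\alpha\textbf{u}}^2$ up to a constant.

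The key step is therefore to control $\norm{\textbf{w}+\alpha\textbf{u}}^2$ in terms of $\norm{\textbf{u}}^2$ alone: expanding, $\norm{\textbf{w}+\alpha\textbf{u}}^2=\norm{\textbf{w}}^2+2\alpha\langle \textbf{w},\textbf{u}\rangle+\alpha^2\norm{\textbf{u}}^2\leq\beta^2\norm{\textbf{u}}^2+2\alpha\beta\norm{\textbf{u}}^2+\alpha^2\norm{\textbf{u}}^2=(\alpha+\beta)^2\norm{\textbf{u}}^2$, using $\norm{\textbf{w}}\leq\beta\norm{\textbf{u}}$ and $\langle\textbf{w},\textbf{u}\rangle\leq\beta\norm{\textbf{u}}^2$. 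Combining the two displays, whenever $\alpha>\beta$,
\begin{equation*}
    \langle r(\textbf{x})-r(\textbf{z}),\,\textbf{x}-\textbf{z}\rangle\geq(\alpha-\beta)\norm{\textbf{u}}^2\geq\frac{\alpha-\beta}{(\alpha+\beta)^2}\,\norm{r(\textbf{x})-r(\textbf{z})}^2,
\end{equation*}
so $r(\cdot)$ is co-coercive with constant $c=\frac{\alpha-\beta}{(\alpha+\beta)^2}$ (equivalently $L$-co-coercive with $L=\frac{(\alpha+\beta)^2}{\alpha-\beta}$, matching the form of \eqref{eq:cocoercive}).

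The one delicate point — and the place I expect any subtlety to lurk — is the boundary case $\beta=\alpha$, where the bound above degenerates to $c=0$. In that regime the argument only yields that $r$ is (maximally) monotone rather than strictly co-coercive; if the statement is meant to include $\beta=\alpha$, one must either interpret ``co-coercive'' in the weak/degenerate sense or invoke a perturbation argument (e.g. noting $r$ is monotone and Lipschitz, hence the limit of co-coercive maps). I would state the clean strict inequality for $\beta<\alpha$ and remark on the limiting case separately. Everything else is the routine Cauchy--Schwarz bookkeeping sketched above, requiring no structural assumption on $f$ beyond the hypothesized Lipschitz estimate on $T$.
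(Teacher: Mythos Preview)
Your argument is correct and complete for the strict case $\beta<\alpha$, and your identification of the degenerate boundary $\beta=\alpha$ is on the mark: with equality one only recovers monotonicity, not co-coercivity with a positive modulus.

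The paper itself does not prove this proposition at all --- it simply cites Proposition~2.3 of Zhu--Marcotte (1996). That reference in fact assumes the \emph{strict} inequality $\beta<\alpha$ and derives exactly the constant you obtain, $c=(\alpha-\beta)/(\alpha+\beta)^2$, via essentially the same Cauchy--Schwarz bookkeeping. So your self-contained derivation reproduces the cited result faithfully, and your caveat about $\beta=\alpha$ is not a gap in your reasoning but rather a minor imprecision in the paper's restatement of the hypothesis.
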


\begin{proof}
See \cite{zhu1996co} Proposition 2.3.
\end{proof}
The latter proposition provides a condition for the co-coercivity of the residual regardless if it the gradient of the a convex function or not. We further relax our assumptions on the denoisers and require that \cref{eq:cocoercive} hold only with respect to $\textbf{z}\in\zer(r)\triangleq\{\textbf{x}\in\bb{R}^n\,:\;r(\textbf{x})=\textbf{0}\}$:   
\begin{equation}
    \frac{1}{L}\norm{r(\textbf{x})-r(\textbf{z})}^2\leq \langle r(\textbf{x}),\;\textbf{x}-\textbf{z}\rangle,\;\forall \textbf{x}\in\mathbb{R}^n,\;\textbf{z}\in\zer(r). 
    \label{eq:co2demi}
\end{equation}
Substituting $r(\textbf{x})=\textbf{x}-f(\textbf{x})$ and noticing that $\zer(r)\equiv\fix(f)$, we observe that \cref{eq:co2demi} coincides with \cref{eq:demi2} with $d=1-\frac{2}{L}$. Thus, \cref{eq:co2demi} provide us an approximation of $d$, assuming we have an estimation of the Lipschitz constant of the residual, and more importantly, shows that the assumption of demicontractive denoisers is broader than co-coercivity, allowing us to go beyond common optimization schemes.    

We cannot conclude this part without discussing other previously-made assumptions and show their relation to demicontractivity. First, notice that any denoiser $f(\cdot)$ which meets conditions (C1)-(C4) is nonexpansive:
\begin{align*}
    \norm{f(\textbf{x})-f(\textbf{z})}&=\norm{\int_0^1 \nabla f\Big(\textbf{z}+t(\textbf{x}-\textbf{z})\Big)dt(\textbf{x}-\textbf{z})}\leq \norm{\int_0^1 \nabla f\Big(\textbf{z}+t(\textbf{x}-\textbf{z})\Big)dt}\cdot\norm{(\textbf{x}-\textbf{z})} \\
    &\leq \int_0^1 \norm{\nabla f\Big(\textbf{z}+t(\textbf{x}-\textbf{z})\Big)}dt\cdot\norm{(\textbf{x}-\textbf{z})} \leq \int_0^1 1dt\cdot\norm{(\textbf{x}-\textbf{z})} =\norm{(\textbf{x}-\textbf{z})}.
\end{align*}
In \cite{sun2019block}, the authors prove that the RED algorithms converge when the denoiser is block-nonexpansive. In partiuclar, when the block is taken to be the entire image, it implies that the denoiser is nonexpansive. Various studies \cite{sreehari2016plug,buzzard2018plug,sun2019online, teodoro2018convergent, teodoro2017scene, chan2019performance,reehorst2018regularization} have proven the convergence of variants of PnP for nonexpansive or averaged denoisers. Thus, all mentioned works assume demicontractive denoisers. 
Recently, the authors of \cite{ryu2019plug} have shown that PnP-ADMM and PnP forward-backward splitting converge under weaker conditions where $f(\cdot)$ Lipschitz continuous mapping satisfying
\begin{equation}
    \norm{f(\textbf{x})-f(\textbf{z})}^2\leq (1+\epsilon^2)\norm{\textbf{x}-\textbf{z}}^2,\;\forall \textbf{x},\textbf{z}\in\bb{R}^n,
\end{equation}
for some small $\epsilon>0$. Assuming $\textbf{z}\in\fix(f)$, the above condition is relaxed to the following assumption
\begin{equation}
    \norm{f(\textbf{x})-\textbf{z})}^2\leq (1+\epsilon^2)\norm{\textbf{x}-\textbf{z}}^2,\;\forall \textbf{x}\in\bb{R}^n,\textbf{z}\in\fix(f).
    \label{eq:almostnonexpansive}
\end{equation}
By the definition of demicontractivity \eqref{eq:demi2} in conjunction with the Cauchy-Schwartz inequality we obtain that any $d$-demicontractive function $f(\cdot)$ satisfies $\norm{\textbf{x}-f(\textbf{x})}^2\leq\frac{4}{(1-d)^2}\norm{\textbf{x}-\textbf{z}}^2$, which it turn implies that 
\begin{equation}
    \norm{f(\textbf{x})-\textbf{z})}^2\leq \Big(1+\frac{4d}{(1-d)^2}\Big)\norm{\textbf{x}-\textbf{z}}^2,\;\forall \textbf{x}\in\bb{R}^n,\textbf{z}\in\fix(f).
\end{equation}
Thus, it is clear that any $d$-demicontractive function meets condition \eqref{eq:almostnonexpansive} with $\epsilon^2\triangleq\frac{4d}{(1-d)^2}$.

In \cite{sun2019online}, Chan \textit{et al.} have proposed a version of PnP-ADMM for bounded denoisers 
\begin{equation}
    \frac{1}{n}\norm{f_{\sigma}(\textbf{x})-\textbf{x}}^2\leq \sigma^2c,
    \label{eq:bounded}
\end{equation}
where they assume any point $\textbf{x}$ is bounded in some interval $\textbf{x}\in[a,b]^n$ ($a<b$), $\sigma>0$ is a parameter controlling the strength of the denoiser and $c>0$ is a constant independent of $n$ and $\sigma$. Bounded denoisers are asymptotically invariant in the sense that $f_{\sigma}\rightarrow Id$ as $\sigma\rightarrow 0$. Under the boundedness assumption \eqref{eq:bounded}, a fixed-point convergence of the modified PnP-ADMM has been proven where a diminishing step size $\mu_k$ has been used and $\sigma^2_k=\lambda\mu_k$ for some predefined $\lambda>0$.\footnote{The original formulation in \cite{sun2019online} defines an increasing penalty parameter $\beta_k$ which satisfies $\beta_k=\frac{1}{\mu_k}$.} However, this approach requires the denoiser to have an internal parameter $\sigma^2$ controlling its strength which may not be available. We offer an external control which relies only on the demicontractivity of the denoiser, as given the in the following proposition.
\begin{proposition}
Consider a $d$-demicontractive denoiser $f(\cdot)$ and define the relaxed operator $f_\alpha(\cdot)$ for some $\alpha\in(0,1-d)$. Then, it holds that
\begin{equation}
    \frac{1}{n}\norm{f_\alpha(\textbf{x})-\textbf{x}}^2\leq \sigma^2(\alpha)c,
\end{equation}
where $\sigma^2(\alpha)\triangleq\frac{\alpha}{1-d-\alpha}$ and $c\triangleq(b-a)^2$.
\label{prop:demi2bound}
\end{proposition}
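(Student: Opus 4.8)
The plan is to bound the displacement $\norm{f_\alpha(\textbf{x})-\textbf{x}}$ in terms of the distance from $\textbf{x}$ to the fixed-point set of $f$, and then exploit the box constraint $\textbf{x}\in[a,b]^n$ to convert that distance bound into the desired $\sigma^2(\alpha)c$ estimate. The key algebraic observation is that $f_\alpha(\textbf{x})-\textbf{x} = \alpha\bigl(f(\textbf{x})-\textbf{x}\bigr)$, so $\norm{f_\alpha(\textbf{x})-\textbf{x}}^2 = \alpha^2\norm{f(\textbf{x})-\textbf{x}}^2$, and hence it suffices to control $\norm{f(\textbf{x})-\textbf{x}}^2$.

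First I would invoke \cref{prop:demi2quasi} to obtain that $f_\alpha$ is $\gamma$-strongly quasi-nonexpansive with $\gamma = \frac{1-d-\alpha}{\alpha}$. Since $\alpha\in(0,1-d)$, we have $\gamma>0$, but to apply \cref{prop:displacemenbound} I actually need $\gamma\ge 1$; I should therefore either restrict to $\alpha\le\frac{1-d}{2}$ or, more cleanly, argue directly. The direct route: for any $\textbf{z}\in\fix(f)=\fix(f_\alpha)$, strong quasi-nonexpansiveness gives $\gamma\norm{f_\alpha(\textbf{x})-\textbf{x}}^2 \le \norm{\textbf{x}-\textbf{z}}^2 - \norm{f_\alpha(\textbf{x})-\textbf{z}}^2 \le \norm{\textbf{x}-\textbf{z}}^2$, so $\norm{f_\alpha(\textbf{x})-\textbf{x}}^2 \le \frac{1}{\gamma}\norm{\textbf{x}-\textbf{z}}^2 = \frac{\alpha}{1-d-\alpha}\norm{\textbf{x}-\textbf{z}}^2$. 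Taking the infimum over $\textbf{z}\in\fix(f)$ (the set is nonempty and, by \cref{theo:closed&convex}, closed, so the projection exists) yields $\norm{f_\alpha(\textbf{x})-\textbf{x}}^2 \le \sigma^2(\alpha)\norm{\textbf{x}-P_{\fix(f)}(\textbf{x})}^2$.

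Next I would bound $\norm{\textbf{x}-P_{\fix(f)}(\textbf{x})}^2$ by $n(b-a)^2$. Since $\textbf{x}\in[a,b]^n$, and using the footnote assumption $f(\textbf{0})=\textbf{0}$ so that $\textbf{0}\in\fix(f)$ — or, more robustly, noting that $P_{[a,b]^n}$ applied to any point of $\fix(f)$ lands in $[a,b]^n$ while not increasing distance to $\textbf{x}$ — one gets $\norm{\textbf{x}-P_{\fix(f)}(\textbf{x})}^2 \le \operatorname{diam}([a,b]^n)^2 = n(b-a)^2$ coordinatewise, since each coordinate difference is at most $b-a$ in absolute value. Combining, $\frac{1}{n}\norm{f_\alpha(\textbf{x})-\textbf{x}}^2 \le \sigma^2(\alpha)(b-a)^2 = \sigma^2(\alpha)c$, which is exactly the claim.

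The main obstacle is the clean interface between the fixed-point set and the box $[a,b]^n$: one must ensure there is a fixed point (or an approximate one) inside or near the box so that $\norm{\textbf{x}-P_{\fix(f)}(\textbf{x})}$ is genuinely $O(\sqrt{n}(b-a))$ rather than merely finite. The $f(\textbf{0})=\textbf{0}$ footnote handles this when $0\in[a,b]$, which is the natural case for images; otherwise one would project a known fixed point onto the box and use nonexpansiveness of the metric projection to retain the distance estimate. A secondary, purely bookkeeping point is making sure the strong quasi-nonexpansiveness inequality is applied in the right direction and that the constant $\gamma = \frac{1-d-\alpha}{\alpha}$ indeed produces $\sigma^2(\alpha) = \frac{1}{\gamma} = \frac{\alpha}{1-d-\alpha}$; I expect this to go through with no surprises.
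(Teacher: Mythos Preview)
Your proposal is correct and follows essentially the same route as the paper: use \cref{prop:demi2quasi} to get $\norm{f_\alpha(\textbf{x})-\textbf{x}}^2\le \frac{\alpha}{1-d-\alpha}\norm{\textbf{x}-\textbf{z}}^2$ for any $\textbf{z}\in\fix(f)$, then bound $\frac{1}{n}\norm{\textbf{x}-\textbf{z}}^2\le (b-a)^2$ using the box constraint. Your extra step of passing to the projection $P_{\fix(f)}(\textbf{x})$ is harmless but unnecessary, and you are in fact more careful than the paper in flagging that one needs a fixed point inside $[a,b]^n$; the paper simply assumes both $\textbf{x}$ and $\textbf{z}$ lie in the box without comment.
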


\begin{proof}
See \cref{app:demi2bound}
\end{proof}

\noindent The above theorem provides the mean to externally control the strength of 
a denoiser by performing appropriate averaging. This is important by itself, since it may boost the performance of PnP methods \cite{xu2020boosting}.
% Note that $\sigma^2(\alpha)\rightarrow0$ and $f_\alpha(\textbf{x})\rightarrow Id$ as $\alpha\rightarrow0$. 
Moreover, by updating $\alpha_k=\frac{\lambda\mu_k}{1+\lambda\mu_k}(1-d)\in(0,1-d)$ at each iteration we obtain $\sigma^2_k\triangleq\sigma^2(\alpha_k)=\lambda\mu_k$, which ensures the convergence of the PnP-ADMM presented in \cite{sun2019online} for demicontractive denoisers.

Following the discussion above, while demicontractivity is hard to verify for general denoisers (as other conditions), we have shown that the condition of demicontractivity is covers a wide range of operators and is and broader than other common assumptions.

%=======================================================

\section{RED-PRO: RED via Fixed-Point Projection}
\label{sec:REDPRO}
As discussed in Section\,\ref{sec:preliminaries}, the PnP and RED frameworks have achieved state-of-the-art performance in solving inverse problems by utilizing denoisers as regularization. However, their theoretical analysis is incomplete since it is unclear which objective functions are minimized by PnP and RED,\footnote{We are referring to the case where the denoiser is non-differentiable or having a non-symmetric Jacobian.} and whether such cost functions exist. 
As a partial answer, we consider in \cref{subsec:REDrevisited} non-differentiable denoisers and we provide convergence guarantees for the RED algorithms. In this section we address the matter of the underlying objective function for PnP and RED. To that end, we reformulate RED as a convex minimization problem regularized using the fixed-point set of a demicontractive denoiser. We provide simple solutions for the proposed problem, similar to PnP-PGM and PnP-ADMM. As such, we offer a theoretical explanation for the PnP approach and establish its connection to RED. Finally, we relax our problem by considering broader and richer domains than the fixed-point set. These modifications offer larger flexibility, allowing for a broader group of denoisers to be applicable.

\subsection{Regularization by Projection}
\label{subsec:REDPRO}
% As discussed earlier, several previous works rely on the nonexpansivity of the denoiser for proving the fixed-point convergence of the RED algorithms. However, the latter requirement may be too restrictive since several denoisers, e.g. NLM, are expansive \cite{chan2016plug}. To overcome the above limitation, we require henceforth a considerably weaker assumption where the denoiser $f(\cdot)$ is a $d$-demicontractive mapping for some $d\in[0,1)$.
The new framework we propose builds upon the fixed-point set of demicontractive denoisers as prior for general inverse problems. To motivate our regularization strategy, described later, we make the following observation:

\begin{proposition}
Consider a demicontractive denoiser $f(\cdot)$ and assume $f(0)=0$. Then, 
\begin{equation*}
    \rho_\text{RED}(\textbf{x})=\frac{1}{2}\Langle \textbf{x},\textbf{x}-f(\textbf{x})\Rangle=0 \;\;\text{iff}\;\; {\bf x}\in\fix(f).
\end{equation*} 
\end{proposition}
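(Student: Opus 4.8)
The plan is to prove the two directions of the equivalence separately, both starting from the demicontractivity inequality \eqref{eq:demi2} specialized to the fixed point $\textbf{z}=0$, which is available since $f(0)=0$ implies $0\in\fix(f)$.

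First I would handle the easy direction: if $\textbf{x}\in\fix(f)$, then $\textbf{x}-f(\textbf{x})=\textbf{0}$, so $\rho_\text{RED}(\textbf{x})=\tfrac12\langle\textbf{x},\textbf{0}\rangle=0$ immediately, with no assumptions needed.

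For the converse, suppose $\rho_\text{RED}(\textbf{x})=\tfrac12\langle\textbf{x},\textbf{x}-f(\textbf{x})\rangle=0$, i.e. $\langle\textbf{x},\textbf{x}-f(\textbf{x})\rangle=0$. I would apply \eqref{eq:demi2} with $\textbf{z}=0$, giving
\begin{equation*}
    \frac{1-d}{2}\norm{\textbf{x}-f(\textbf{x})}^2\leq\Langle \textbf{x}-f(\textbf{x}),\,\textbf{x}-\textbf{0}\Rangle=\Langle \textbf{x},\,\textbf{x}-f(\textbf{x})\Rangle=0.
\end{equation*}
Since $d\in[0,1)$ we have $1-d>0$, so $\norm{\textbf{x}-f(\textbf{x})}^2\leq 0$, forcing $\textbf{x}-f(\textbf{x})=\textbf{0}$, that is, $\textbf{x}\in\fix(f)$. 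This completes the equivalence.

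I do not anticipate a genuine obstacle here; the statement is essentially a direct reading of the demicontractivity condition at the fixed point $0$. The only point requiring mild care is making sure the hypothesis $f(0)=0$ is invoked so that $0\in\fix(f)$ and hence \eqref{eq:demi2} is legitimately applicable with $\textbf{z}=0$; without a fixed point in hand the inequality \eqref{eq:demi2} would be vacuous. One could alternatively phrase the whole argument through the equivalent form \eqref{eq:demi}, but the inner-product form \eqref{eq:demi2} matches $\rho_\text{RED}$ verbatim and is the cleaner route.
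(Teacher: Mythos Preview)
Your proposal is correct and follows essentially the same argument as the paper: both directions are handled identically, invoking the demicontractivity inequality \eqref{eq:demi2} at the fixed point $\textbf{z}=0$ (available since $f(0)=0$) to force $\norm{\textbf{x}-f(\textbf{x})}=0$ whenever $\langle\textbf{x},\textbf{x}-f(\textbf{x})\rangle=0$. If anything, your write-up is slightly more explicit than the paper's about why $0\in\fix(f)$ is needed.
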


\begin{proof}
It is clear that for any ${\bf x}\in\fix(f)$, we get $\rho_\text{RED}(\textbf{x})=0$. For the other direction, we recall that by the definition of demicontractive mapping for any ${\bf x}\in\bb{R}^n$
\begin{equation}
    0\leq\frac{1-d}{2}\norm{\textbf{x}-f(\textbf{x})}^2\leq\Langle \textbf{x},\textbf{x}-f(\textbf{x})\Rangle,
\end{equation}
where we use the assumption that ${\bf x}=0$ is a fixed-point. Therefore, when $\rho_\text{RED}(\textbf{x})=0$, then $\norm{\textbf{x}-f(\textbf{x})}^2=0$, implying that ${\bf x}\in\fix(f)$.
\end{proof}

\noindent Inspired by this observation,\footnote{The assumption $f(0)=0$ is not necessary for our derivations but is only used to explain our motivation.} we introduce the following general minimization problem, which is the main message of this paper:
\begin{align}
    \begin{split}
            \hat{\bf x}_\text{RED-PRO}=\underset{\textbf{x}\in\bb{R}^n}{\arg\min}\;\, \ell(\textbf{x};\textbf{y})\;\,
    s.t.\;\, \textbf{x}\in\fix(f).
    \end{split}
    \label{eq:REDPRO}
\end{align}
We refer to the above as the RED via Fixed-Point Projection (RED-PRO) paradigm, where we utilize the fixed-point set of a denoising engine as a regularization for our inverse problem. The optimization task \eqref{eq:REDPRO} can be interpreted as searching for a minimizer of $\ell(\textbf{x};\textbf{y})$ over the set of "clean" images. Ideally, we would like to limit our search to the manifold of natural images $\mathcal{M}$ \cite{menon2020pulse,ulyanov2018deep,rick2017one}. However, the set $\mathcal{M}$ is generally not well-defined, it is not easy accessible and it is not convex,\footnote{The convex interpolation of two natural images may not be a natural image} making the search within this domain difficult. Therefore, as an alternative, we propose to use $\fix(f)$ which is well-behaved for demicontractive denoisers and should satisfy $\mathcal{M}\subset\fix(f)$ for a \textquotedblleft perfect\textquotedblright\, denoiser. Note, however, that common denoisers are far from being ideal, hence, the solution of \cref{eq:REDPRO} is sensitive to the choice of the denoiser and it may vary considerably for different choices, as shown in Section\,\ref{sec:results}.   

Surprisingly, although the constraint of \cref{eq:REDPRO} is not linear in nature, the RED-PRO approach admits a convex optimization problem as stated by the next theorem.   
\begin{theorem}
Assume the denoiser $f(\cdot)$ is a $d$-demicontractive mapping. Then, \eqref{eq:REDPRO} defines a convex minimization problem.
\label{theo:REDPROconvex}
\end{theorem}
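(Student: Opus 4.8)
The plan is to recognize that \eqref{eq:REDPRO} minimizes a convex function $\ell(\cdot;\textbf{y})$ over the feasible set $\fix(f)$, so the whole problem reduces to showing the constraint set is convex (and closed, so the problem is well-posed). The log-likelihood $\ell(\cdot;\textbf{y})$ has already been assumed convex, differentiable, proper and l.s.c.\ in the preliminaries, so no work is needed there. The essential content is therefore: the fixed-point set of a $d$-demicontractive mapping is convex. But this is exactly \cref{theo:closed&convex}, which states that for a $d$-demicontractive $T$, the set $\fix(T)$ is closed and convex. Applying this with $T = f$ immediately gives that the feasible region of \eqref{eq:REDPRO} is a closed convex subset of $\bb{R}^n$.

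The remaining step is to invoke the standard fact that minimizing a convex function over a convex set is a convex optimization problem. I would state this cleanly: a problem of the form $\min_{\textbf{x}\in C} h(\textbf{x})$ with $h$ convex and $C$ convex is by definition a convex program, equivalently it can be written as the unconstrained minimization of $h + \iota_C$, where the indicator function $\iota_C$ is convex precisely because $C$ is convex, and a sum of convex functions is convex. Hence $\ell(\cdot;\textbf{y}) + \iota_{\fix(f)}$ is a convex (extended-real-valued, proper, l.s.c.) function, and \eqref{eq:REDPRO} is its minimization, which is the claim.

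The only point requiring the slightest care is that \cref{theo:closed&convex} guarantees nonemptiness is handled separately — but the excerpt has globally assumed $\fix(T)$ (hence $\fix(f)$) is nonempty, so the feasible set is a nonempty closed convex set and the indicator function is proper. I would spell out this appeal to the standing assumption so the reader sees the problem is not vacuous.

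There is essentially no obstacle here: the heavy lifting — that demicontractivity forces convexity of the fixed-point set despite the map itself being highly nonlinear — was already done in \cref{theo:closed&convex} (via Lemma 5 of \cite{chidume2010iterative}). The proof of \cref{theo:REDPROconvex} is just the assembly: (i) feasible set $=\fix(f)$ is closed and convex by \cref{theo:closed&convex}; (ii) objective $\ell(\cdot;\textbf{y})$ is convex by assumption; (iii) therefore \eqref{eq:REDPRO} is, by definition, a convex minimization problem, equivalently the minimization of the convex function $\ell(\cdot;\textbf{y}) + \iota_{\fix(f)}$. The main thing to get right in the writeup is simply to cite \cref{theo:closed&convex} explicitly and not re-derive it.
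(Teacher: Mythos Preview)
Your proposal is correct and mirrors the paper's own proof almost exactly: the paper simply invokes \cref{theo:closed&convex} to get that $\fix(f)$ is closed and convex, then notes that the convexity of the log-likelihood makes \eqref{eq:REDPRO} a minimization of a convex function over a convex domain. Your additional remarks about the indicator-function reformulation and the standing nonemptiness assumption are sound elaborations but not needed for the argument.
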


\begin{proof}
By \cref{theo:closed&convex}, the set $\fix(f)$ is closed and convex, hence, the convexity of log-likelihood implies problem \eqref{eq:REDPRO} is a minimization of a convex function over a convex domain. 
\end{proof}

\noindent The above theorem allows to find solutions to \cref{eq:REDPRO} and to derive convergence guarantees using tools from convex optimization. Moreover, as shown in \cref{subsec:review}, it holds that $\fix(f_\alpha)=\fix(f)$ for any $\alpha\in(0,1)$ where $f_\alpha=\alpha f+(1-\alpha)Id$ is the relaxed operator. Thus, we can rewrite \eqref{eq:REDPRO} equivalently as   
\begin{equation}
             \hat{\bf x}_\text{RED-PRO}=\underset{\textbf{x}\in\bb{R}^n}{\arg\min}\;\, \ell(\textbf{x};\textbf{y})
  \;\, s.t.\; \textbf{x}\in\fix(f_\alpha).
\end{equation}
While the latter seems as an unnecessary and redundant step, it plays a crucial role in our convergence results since $f_\alpha$ is strongly quasi-nonexpansive for an appropriate choice of $\alpha$. 
To find a solution for our proposed problem \eqref{eq:REDPRO}, one may apply the projected gradient descent method \cite{parikh2014proximal, bertsekas1997nonlinear} as follows
\begin{equation}
    \textbf{x}_{k+1}=P_{\fix(f)}\Big(\textbf{x}_k-\mu_k\nabla\ell(\textbf{\textbf{x}};y)\Big),
\end{equation}
where $\mu_k>0$ is the gradient step size and $P_{\fix(f)}$ denotes the projection onto the fixed-point set $\fix(f)$. 
The above update rule resembles other studies that aim at projecting onto the set of natural images (see e.g. \cite{menon2020pulse,ulyanov2018deep,rick2017one}), however, we utilize $\fix(f)$ which exhibits favorable properties for demicontractive denoisers. Furthermore, we offer here a simpler solution, based on the hybrid steepest descent method (HSD) \cite{deutsch1998minimizing,yamada1998quadratic,bauschke2011fixed,yamada2004hybrid}, which performs one activation of the denoising engine per iteration, as detailed in \cref{algo:REDPROHSDM}.

\begin{algorithm}
\caption{HSD}
\begin{algorithmic}[1]
\item\textbf{Input:} ${\bf x}_0\in\bb{R}^n$, $\{\mu_k\}_{k\in\bb{N}}$, $\alpha\in(0,1)$, $N>0$ and $f(\cdot)$.
\item \textbf{for} $k=0,1,2,...,N-1$ \textbf{do}:  
\begin{itemize}
    \item $\textbf{v}_{k+1}=\textbf{x}_k-\mu_k\nabla\ell(\textbf{x}_k;\textbf{y})$
    \item $\textbf{z}_{k+1}=f(\textbf{v}_{k+1})$
    \item $\textbf{x}_{k+1}=(1-\alpha)\textbf{v}_{k+1}+\alpha\textbf{z}_{k+1}$
\end{itemize}
\item \textbf{Output:} $\textbf{x}_{k+1}$.
\end{algorithmic}
\label{algo:REDPROHSDM}
\end{algorithm}

\noindent We note that \cref{algo:REDPROHSDM} can be written in a compact form as
\begin{equation}
    \textbf{x}_{k+1}=f_\alpha\Big(\textbf{x}_k-\mu_k\nabla\ell(\textbf{x}_k;\textbf{y})\Big).
    \label{eq:HSDcompact}
\end{equation} 
The next theorems provide conditions for the convergence of \cref{algo:REDPROHSDM} to an optimal solution of \eqref{eq:REDPRO} where we distinguish between two cases: diminishing and constant step sizes.

\begin{theorem}[Diminishing Step Size]
\label{theo:diminishing}
Let $f(\cdot)$ be a continuous $d$-demicontractive denoiser and $\ell(\cdot;\textbf{y})$ be a proper convex l.s.c differentiable function with  L-Lipschitz gradient $\nabla\ell(\cdot;\textbf{y})$. Assume the following:
\begin{enumerate}[leftmargin=1cm]
    \item[(A1)] $\alpha\in(0,\frac{1-d}{2})$.
    \item[(A2)] $\{\mu_k\}_{k\in\bb{N}}\subset [0,\infty)$ where $\mu_k\underset{k\rightarrow\infty}{\rightarrow}0$ and $\sum_{k\in\bb{N}}\mu_k=\infty$.
\end{enumerate}
Then, the sequence $\{\textbf{x}_k\}_{k\in\bb{N}}$ generated by \cref{algo:REDPROHSDM} converges to an optimal solution of the RED-PRO problem.
\end{theorem}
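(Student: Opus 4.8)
The plan is to recognize \cref{algo:REDPROHSDM}, in its compact form \eqref{eq:HSDcompact}, as an instance of the hybrid steepest descent (HSD) method applied to the fixed-point set of the strongly quasi-nonexpansive operator $f_\alpha$, and then invoke the known convergence theory for HSD. First I would fix the setup: by \cref{prop:demi2quasi}, since $f$ is $d$-demicontractive and $\alpha\in(0,\tfrac{1-d}{2})\subset(0,1-d]$, the relaxed operator $f_\alpha$ is $\gamma$-strongly quasi-nonexpansive with $\gamma=\tfrac{1-d-\alpha}{\alpha}$; note the choice $\alpha<\tfrac{1-d}{2}$ forces $\gamma>1$, which is exactly the regime in which \cref{prop:displacemenbound} gives the displacement bound $\norm{\textbf{x}-f_\alpha(\textbf{x})}\le\norm{\textbf{x}-P_{\fix(f_\alpha)}(\textbf{x})}$. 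By \cref{theo:closed&convex}, $\fix(f_\alpha)=\fix(f)$ is closed and convex and (by assumption) nonempty, so the projection $P_{\fix(f)}$ is well-defined and the constraint set of \eqref{eq:REDPRO} is a genuine closed convex set; by \cref{theo:REDPROconvex} the problem is a convex program with a nonempty solution set (one should note that, since $\nabla\ell$ is $L$-Lipschitz and $\ell$ is proper l.s.c. convex, a minimizer over the closed convex feasible set exists under the standard coercivity/attainment hypotheses implicit here).

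Next I would verify the hypotheses of the HSD convergence theorem (e.g., the version in \cite{yamada2001hybrid} or \cite{bauschke2011fixed}): (i) $T:=f_\alpha$ is quasi-nonexpansive with $\fix(T)$ closed, convex and nonempty, and additionally $T$ satisfies the "demiclosedness-type" / asymptotic regularity property needed — this is where continuity of $f$ is used, to pass to the limit in $\textbf{x}_k - f_\alpha(\textbf{x}_k)\to 0$ and conclude any cluster point lies in $\fix(f)$; (ii) the map $\mathbf{x}\mapsto \nabla\ell(\mathbf{x};\textbf{y})$ is $L$-Lipschitz and monotone (it is in fact the gradient of a convex function, hence monotone), so $\ell$ plays the role of the objective whose gradient drives the descent step; (iii) the step sizes satisfy $\mu_k\to 0$ and $\sum_k \mu_k=\infty$, which are precisely conditions (A2). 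Under these, the HSD theorem yields that every sequence generated by $\textbf{x}_{k+1}=f_\alpha\big(\textbf{x}_k-\mu_k\nabla\ell(\textbf{x}_k;\textbf{y})\big)$ is bounded, is asymptotically regular ($\norm{\textbf{x}_{k+1}-\textbf{x}_k}\to 0$), and converges to the unique point of $\fix(f)$ that solves the variational inequality $\langle \nabla\ell(\textbf{x}^\star;\textbf{y}),\,\textbf{x}-\textbf{x}^\star\rangle\ge 0$ for all $\textbf{x}\in\fix(f)$ — which, by convexity of $\ell$ and of $\fix(f)$, is exactly the optimality condition for \eqref{eq:REDPRO}.

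The remaining work is to show that the present hypotheses actually imply the technical conditions of the cited HSD theorem, rather than merely resembling them. I would proceed in the order: (1) translate "$d$-demicontractive $+$ $\alpha\in(0,\tfrac{1-d}{2})$" into "$f_\alpha$ is strongly quasi-nonexpansive with $\gamma>1$" via \cref{prop:demi2quasi}; (2) establish the key displacement inequality $\norm{\textbf{v}_{k+1}-f_\alpha(\textbf{v}_{k+1})}\le \norm{\textbf{v}_{k+1}-P_{\fix(f)}(\textbf{v}_{k+1})}$ from \cref{prop:displacemenbound}, which lets the fixed-point iteration behave like a metric projection for the purposes of the Féjer-monotonicity estimate; (3) derive the fundamental recursive inequality, bounding $\norm{\textbf{x}_{k+1}-\textbf{x}^\star}^2$ in terms of $\norm{\textbf{x}_k-\textbf{x}^\star}^2$, a negative term $-\gamma\norm{\textbf{x}_{k+1}-\textbf{v}_{k+1}}^2$ coming from strong quasi-nonexpansiveness, and an $O(\mu_k)$ perturbation term from the gradient step (using $L$-Lipschitzness to control $\norm{\nabla\ell(\textbf{x}_k;\textbf{y})}$ along the bounded orbit); (4) combine $\sum\mu_k=\infty$ with $\mu_k\to0$ in the standard way (a Robbins–Siegmund / "if $\liminf$ were positive we'd get a contradiction" argument) to force $\norm{\textbf{x}_k - \textbf{x}^\star}$ to converge and a subsequence of the gradient-mapping residuals to vanish; (5) use continuity of $f$ and demiclosedness of $Id - f_\alpha$ at $0$ to identify cluster points as feasible points satisfying the VI, and conclude full convergence to the optimal solution. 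The main obstacle I anticipate is step (3)–(5): making the perturbation analysis airtight when the step sizes only diminish (no summability of $\mu_k$ or $\mu_k^2$ is assumed here) requires the delicate two-case argument typical of HSD proofs — distinguishing indices where $\norm{\textbf{x}_k-\textbf{x}^\star}$ decreases from those where it does not — and carefully using the $\gamma>1$ displacement bound to absorb the error; this, rather than any convexity or fixed-point-set structural fact, is the technical crux.
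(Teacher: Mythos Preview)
Your high-level strategy --- recognize \eqref{eq:HSDcompact} as hybrid steepest descent and invoke the known HSD convergence theory --- is exactly what the paper does. The difference is in which HSD theorem is invoked and how much work is left over. The paper's proof is a three-line verification of the hypotheses of Theorem~5 in \cite{yamada2004hybrid}: that result requires (i) $f_\alpha$ to be quasi-nonexpansive and \emph{quasi-shrinking}, and (ii) $\nabla\ell$ to be \emph{paramonotone} over $\fix(f)$. Both follow almost immediately here: (A1) makes $f_\alpha$ continuous and strongly quasi-nonexpansive, which implies quasi-shrinking by \cite[Cor.~4.2, Prop.~4.4]{cegielski2014properties}; and convexity of $\ell$ makes $\nabla\ell$ paramonotone by \cite[Lemma~12]{censor1998interior}. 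That is the entire proof.

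By contrast, you route through the older HSD framework (Yamada 2001 / \cite{bauschke2011fixed}) and then propose to rebuild the perturbation analysis yourself via a F\'ejer/Robbins--Siegmund argument. Two cautions. First, the classical Yamada-2001 theorem you allude to typically assumes the driving operator is \emph{strongly} monotone, which is why it produces a \emph{unique} VI solution; here $\ell$ is only convex, so your statement ``converges to the unique point of $\fix(f)$ that solves the variational inequality'' is not justified --- the minimizer set can be a nontrivial face of $\fix(f)$. Second, precisely because strong monotonicity is absent, the ``technical crux'' you identify (no summability of $\mu_k$ or $\mu_k^2$) is real: this is exactly the gap that the quasi-shrinking + paramonotone machinery of \cite{yamada2004hybrid} was designed to close. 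Your steps (1)--(5) can be made to work, but you would essentially be reproving Theorem~5 of \cite{yamada2004hybrid}; the paper simply cites it.
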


\begin{proof}
See \cref{app:diminishing}.
\end{proof}

\begin{theorem}[Constant Step Size]
\label{theo:constant}
Let $f(\cdot)$ be a continuous $d$-demicontractive denoiser and $\ell(\cdot;\textbf{y})$ is a proper convex l.s.c differentiable function with L-Lipschitz gradient $\nabla\ell(\cdot;\textbf{y})$. Assume the following:
\begin{enumerate}[leftmargin=1cm]
    \item[(H1)] $\alpha\in(0,\frac{1-d}{2})$.
    \item[(H2)] $\mu_k\equiv\mu\in (0,\,\frac{2}{L})$.
    \item[(H3)] $\fix(f)\cap\fix(G_\ell)\neq\emptyset$ where $G_\ell\triangleq Id-\mu\nabla\ell$.
\end{enumerate}
Then, the sequence $\{\textbf{x}_k\}_{k\in\bb{N}}$ generated by \cref{algo:REDPROHSDM} converges an optimal solution of the RED-PRO problem.
\end{theorem}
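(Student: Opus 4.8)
\textbf{Proof proposal for Theorem \ref{theo:constant} (Constant Step Size).}

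The plan is to recognize the compact iteration \eqref{eq:HSDcompact}, namely $\textbf{x}_{k+1}=f_\alpha(G_\ell(\textbf{x}_k))$ with $G_\ell=Id-\mu\nabla\ell$, as a Krasnosel'ski\u{\i}--Mann-type fixed-point iteration of a composition of two well-behaved operators, and then invoke a standard convergence theorem for such compositions. First I would show that the two building blocks are (averaged) quasi-nonexpansive with a common fixed point. By \cref{prop:demi2quasi}, since $\alpha\in(0,\frac{1-d}{2})\subset(0,1-d]$, the relaxed operator $f_\alpha$ is $\gamma$-strongly quasi-nonexpansive with $\gamma=\frac{1-d-\alpha}{\alpha}>0$, and $\fix(f_\alpha)=\fix(f)$ by the relaxation definition. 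For the gradient step, the standard Baillon--Haddad argument gives that $\nabla\ell$ is $\frac1L$-cocoercive (it is the gradient of a convex function with $L$-Lipschitz gradient), so $G_\ell=Id-\mu\nabla\ell$ is nonexpansive — in fact averaged — for $\mu\in(0,\frac2L)$ (H2), and moreover $\fix(G_\ell)=\zer(\nabla\ell)=\arg\min\ell(\cdot;\textbf{y})$. Assumption (H3) guarantees $\fix(f_\alpha)\cap\fix(G_\ell)\neq\emptyset$, and every point of this intersection is, by \cref{theo:REDPROconvex} and the first-order optimality condition, exactly an optimal solution of the RED-PRO problem \eqref{eq:REDPRO}: it lies in $\fix(f)$ and is an unconstrained minimizer of the convex $\ell$, hence it minimizes $\ell$ over $\fix(f)$.

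Next I would establish the key Fejér-monotonicity/asymptotic-regularity estimate. Fix $\textbf{z}\in\fix(f_\alpha)\cap\fix(G_\ell)$. Using first that $G_\ell$ is nonexpansive with $\textbf{z}=G_\ell(\textbf{z})$, $\norm{G_\ell(\textbf{x}_k)-\textbf{z}}\le\norm{\textbf{x}_k-\textbf{z}}$; in fact since $G_\ell$ is averaged one also gets the sharper inequality $\norm{G_\ell(\textbf{x}_k)-\textbf{z}}^2\le\norm{\textbf{x}_k-\textbf{z}}^2-c_1\norm{G_\ell(\textbf{x}_k)-\textbf{x}_k}^2$ for some $c_1>0$ depending on $\mu,L$. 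Then applying the strong quasi-nonexpansiveness of $f_\alpha$ at the point $G_\ell(\textbf{x}_k)$,
\begin{equation*}
\norm{\textbf{x}_{k+1}-\textbf{z}}^2=\norm{f_\alpha(G_\ell(\textbf{x}_k))-\textbf{z}}^2\le\norm{G_\ell(\textbf{x}_k)-\textbf{z}}^2-\gamma\norm{f_\alpha(G_\ell(\textbf{x}_k))-G_\ell(\textbf{x}_k)}^2.
\end{equation*}
Chaining these two bounds yields $\norm{\textbf{x}_{k+1}-\textbf{z}}^2\le\norm{\textbf{x}_k-\textbf{z}}^2-c_1\norm{G_\ell(\textbf{x}_k)-\textbf{x}_k}^2-\gamma\norm{f_\alpha(G_\ell(\textbf{x}_k))-G_\ell(\textbf{x}_k)}^2$, so $\{\textbf{x}_k\}$ is bounded and Fejér-monotone with respect to the intersection, and summing over $k$ forces $\norm{G_\ell(\textbf{x}_k)-\textbf{x}_k}\to0$ and $\norm{f_\alpha(G_\ell(\textbf{x}_k))-G_\ell(\textbf{x}_k)}\to0$. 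Combining gives the asymptotic regularity $\norm{\textbf{x}_{k+1}-\textbf{x}_k}\to0$ and, more to the point, that any cluster point $\textbf{x}^\star$ of $\{\textbf{x}_k\}$ satisfies $G_\ell(\textbf{x}^\star)=\textbf{x}^\star$ and $f_\alpha(\textbf{x}^\star)=\textbf{x}^\star$ — here I use the continuity of $\nabla\ell$ (from $L$-Lipschitzness) and of the denoiser $f$ (assumed continuous). Hence every cluster point lies in $\fix(f_\alpha)\cap\fix(G_\ell)$, i.e. is optimal.

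Finally I would upgrade ``every cluster point is optimal'' to ``the whole sequence converges'' via the standard Opial-type argument: a bounded sequence that is Fejér-monotone with respect to a set $S$ and all of whose cluster points lie in $S$ converges to a single point of $S$ (because $\norm{\textbf{x}_k-\textbf{z}}$ converges for every $\textbf{z}\in S$, and two distinct cluster points would contradict this). This pins down the limit as an optimal solution of RED-PRO, completing the proof; the details can be deferred to \cref{app:constant}. The main obstacle I anticipate is not any single inequality but the bookkeeping around the \emph{composition}: one must be careful that the quasi-nonexpansiveness inequality for $f_\alpha$ is only guaranteed relative to $\fix(f)$ (it is merely quasi-, not fully, nonexpansive), so the telescoping argument must be run with $\textbf{z}$ held fixed in the \emph{intersection} throughout, and the passage to cluster points must invoke continuity rather than any Lipschitz/nonexpansiveness property of $f$ off its fixed-point set. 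A secondary subtlety is making the averagedness constant $c_1$ of $G_\ell$ explicit (it is $\frac{2-\mu L}{2}$ up to normalization) so that the strict inclusion $\mu\in(0,\frac2L)$ is genuinely used.
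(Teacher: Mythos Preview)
Your proposal is correct and follows essentially the same route as the paper: both view the iteration as $\textbf{x}_{k+1}=T(\textbf{x}_k)$ with $T=f_\alpha\circ G_\ell$, use (H1)--(H2) to make each factor (strongly) quasi-nonexpansive so that $T$ is strongly quasi-nonexpansive with $\fix(T)=\fix(f)\cap\fix(G_\ell)$, derive Fej\'er monotonicity and $\norm{\textbf{x}_{k+1}-\textbf{x}_k}\to 0$, and conclude by continuity that the limit lies in the intersection and is therefore RED-PRO optimal. The only packaging differences are that the paper invokes a composition lemma (Proposition~1 of \cite{yamada2004hybrid}) instead of chaining the two inequalities by hand, and that for the final step the paper asserts $\sum_k\norm{\textbf{x}_{k+1}-\textbf{x}_k}^2<\infty$ makes $\{\textbf{x}_k\}$ Cauchy---your Fej\'er-monotone-plus-cluster-point (Opial) argument is the standard and more careful way to finish.
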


\begin{proof}
See \cref{app:constant}.
\end{proof}

\cref{theo:diminishing} and \cref{theo:constant} provide convergence guarantees of HSD to a solution of the RED-PRO formulation. Moreover, recalling \cref{eq:HSDcompact}, \cref{algo:REDPROHSDM} resembles the PnP-PGD method, suggesting that the above theorems may provide insights into the solutions of PnP-PGD. We establish this connection in the following subsection.     

\subsubsection{PnP and RED as a special case}
PnP and RED can be seen as two alternative approaches for utilizing denoising engines as regularization for general inverse problems. Below, we aim at bridging PnP and RED via the RED-PRO framework, hopefully enriching the theoretical understanding of these two approaches and related ones.

Several variants of the RED algorithms have been proposed, see \cite{sun2019block, ahmad2019plug, reehorst2018regularization} to name just a few. In particular, the authors of \cite{ahmad2019plug} discussed an accelerated gradient descent version of RED with the following update rule
\begin{itemize}
    \item $\textbf{v}_{k+1}=\textbf{x}_k-\mu\nabla\ell(\textbf{x}_k;\textbf{y})$,
    \item $\textbf{z}_{k+1}=\textbf{v}_{k+1}+q_k(\textbf{v}_{k+1}-\textbf{v}_k)$,
    \item $\textbf{x}_{k+1}=(1-\alpha)\textbf{z}_{k+1}+\alpha f(\textbf{z}_{k+1})$,
\end{itemize}
where $q_k\geq0$ is an acceleration step size and $\alpha>0$ is a design parameter. Thus, when we set $q_k\equiv0$, i.e. when we skip the acceleration step, the above RED variant reduces to the iterative update \cref{eq:HSDcompact}. In addition, when we continue and set $\alpha=1$, we obtain the PnP-PGD method, showing that the three frameworks coincide under this setup. We note that under the assumptions of the RED-PRO framework, it is possible to set $\alpha=1$ when the denoiser in use is strongly quasi-nonexpansive, rather than only demicontractive. Therefore, the RED-PRO framework provides an optimization interpretation to PnP-PGD above along with convergence results when the conditions of \cref{theo:constant} hold. However, \cref{eq:REDPRO} as well as PnP-PGD may converge even when condition (H3) is not satisfied. In this case, \cref{algo:REDPROHSDM} converges to a solution of a special case of the RED-PRO formulation, as stated by the next theorem.    

\begin{theorem}
Let $f(\cdot)$ be a continuous $d$-demicontractive denoiser and $\ell(\cdot;\textbf{y})$ is a proper convex l.s.c differentiable function whose gradient $\nabla\ell(\cdot;\textbf{y})$ is L-Lipschitz. Assume the following holds
\begin{enumerate}[leftmargin=1cm]
    \item[(W1)] $\alpha\in(0,\frac{1-d}{2})$.
    \item[(W2)] $\mu_k\equiv\mu\in (0,\,\frac{2}{L})$.
    \item[(W3)] $\fix(f)\cap\fix(G_\ell)=\emptyset$ and $\fix(T)\neq\emptyset$ where $T(\textbf{x})\triangleq f_\alpha\Big(\textbf{x}-\mu\nabla\ell (\textbf{x})\Big)$.
\end{enumerate}
Then, the sequence $\{\textbf{x}_k\}_{k\in\bb{N}}$ generated by \cref{algo:REDPROHSDM} converges an optimal solution of the following convex feasibility problem:
\begin{align}
\begin{split}
     \underset{\textbf{x}\in\mathbb{R}^n}{\min}\;\ell(\textbf{x})\equiv0 \quad \text{ s.t. }\textbf{x}=f_\alpha\Big(x-\mu\nabla\ell(\textbf{x})\Big),
\end{split}
\label{eq:feasible}
\end{align}
i.e., find $\textbf{x}\in\mathbb{R}^n$ such that $\textbf{x}=T(\textbf{x})$.
.
\end{theorem}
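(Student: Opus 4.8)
The plan is to show that the sequence $\{\textbf{x}_k\}$ generated by \cref{algo:REDPROHSDM} is exactly the Krasnosel'ski\u{\i}--Mann / Picard-type iteration associated with the operator $T(\textbf{x})\triangleq f_\alpha\bigl(\textbf{x}-\mu\nabla\ell(\textbf{x})\bigr)$, and then invoke the convergence theory for strongly quasi-nonexpansive maps with non-empty fixed-point set. Indeed, by \cref{eq:HSDcompact} with $\mu_k\equiv\mu$, the update is precisely $\textbf{x}_{k+1}=T(\textbf{x}_k)$, so the iteration is stationary in $T$ and the claimed limit (a point with $\textbf{x}=T(\textbf{x})$, equivalently an optimal solution of the trivial-objective feasibility problem \eqref{eq:feasible}) is exactly $\fix(T)$, which is assumed non-empty in (W3).

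The core of the argument is to establish that $T$ is strongly quasi-nonexpansive (or at least quasi-nonexpansive and asymptotically regular), so that the standard fixed-point iteration converges to a point of $\fix(T)$. I would proceed in three steps. First, under (W2), $\mu\in(0,2/L)$, the gradient-step operator $G_\ell\triangleq Id-\mu\nabla\ell$ is nonexpansive (in fact averaged), since $\nabla\ell$ is $L$-Lipschitz and, by convexity, $\tfrac1L$-co-coercive via the Baillon--Haddad theorem; hence $G_\ell$ is $\tfrac{\mu L}{2}$-averaged. Second, by \cref{prop:demi2quasi}, since $\alpha\in(0,\tfrac{1-d}{2})\subset(0,1-d]$, the relaxed denoiser $f_\alpha$ is $\gamma$-strongly quasi-nonexpansive with $\gamma=\tfrac{1-d-\alpha}{\alpha}>0$; in particular $f_\alpha$ is quasi-nonexpansive with $\fix(f_\alpha)=\fix(f)$. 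Third, I would compose: for the composition $T=f_\alpha\circ G_\ell$ of a (strongly) quasi-nonexpansive map after an averaged (hence quasi-nonexpansive with respect to any common fixed point) map, one shows that $T$ is strongly quasi-nonexpansive on $\fix(T)$ and, crucially, asymptotically regular, i.e. $\norm{\textbf{x}_{k+1}-\textbf{x}_k}=\norm{T(\textbf{x}_k)-\textbf{x}_k}\to 0$. This last property follows from the Fej\'er-monotonicity inequality: for any $\textbf{p}\in\fix(T)$,
\begin{equation*}
\norm{\textbf{x}_{k+1}-\textbf{p}}^2\leq\norm{\textbf{x}_k-\textbf{p}}^2-c\norm{\textbf{x}_{k+1}-\textbf{x}_k}^2
\end{equation*}
for some $c>0$ depending on $\gamma$ and on the averagedness constant of $G_\ell$; summing over $k$ gives $\sum_k\norm{\textbf{x}_{k+1}-\textbf{x}_k}^2<\infty$, hence asymptotic regularity.

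Once asymptotic regularity and Fej\'er monotonicity with respect to $\fix(T)$ are in hand, the conclusion is standard: $\{\textbf{x}_k\}$ is bounded, so it has a cluster point $\textbf{x}^\star$; continuity of $f$ (assumed) and of $\nabla\ell$ makes $T$ continuous, so $\textbf{x}^\star=\lim_j T(\textbf{x}_{k_j})=\lim_j \textbf{x}_{k_j+1}$ combined with $\norm{T(\textbf{x}_{k_j})-\textbf{x}_{k_j}}\to0$ forces $T(\textbf{x}^\star)=\textbf{x}^\star$, i.e. $\textbf{x}^\star\in\fix(T)$; and Fej\'er monotonicity with respect to a set containing a cluster point upgrades subsequential to full convergence, so $\textbf{x}_k\to\textbf{x}^\star\in\fix(T)$. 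This $\textbf{x}^\star$ solves \eqref{eq:feasible} since its objective is identically zero and $\textbf{x}^\star=f_\alpha(\textbf{x}^\star-\mu\nabla\ell(\textbf{x}^\star))$. I expect the main obstacle to be the second and third steps — verifying that the composition $T=f_\alpha\circ G_\ell$ is \emph{strongly} quasi-nonexpansive (not merely quasi-nonexpansive) with a usable modulus, since strong quasi-nonexpansiveness of a composition requires care when only one factor is strongly quasi-nonexpansive and the other is averaged; the cleanest route is to treat $G_\ell$ as averaged and use the known calculus that the composition of an averaged map with a strongly quasi-nonexpansive map is strongly quasi-nonexpansive, then feed this into the general convergence result for such iterations with $\fix(T)\neq\emptyset$ (which presumably is the same machinery already used in \cref{app:constant}).
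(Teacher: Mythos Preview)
Your proposal is correct and follows essentially the same route as the paper: both recognize that under (W2) the iteration is $\textbf{x}_{k+1}=T(\textbf{x}_k)$ with $T=f_\alpha\circ G_\ell$, argue that $T$ is (strongly/averaged) quasi-nonexpansive with $\fix(T)\neq\emptyset$, and then conclude convergence to an arbitrary point of $\fix(T)$. The only packaging difference is that the paper recasts the iteration as the RED-PRO problem with zero objective and the ``composed denoiser'' $T$, then appeals directly to \cref{theo:diminishing}/\cref{theo:constant}, whereas you unfold the underlying Fej\'er-monotonicity and asymptotic-regularity argument explicitly (which is precisely what the proof of \cref{theo:constant} in \cref{app:constant} does); you are also more candid than the paper in flagging the composition step---establishing that $f_\alpha\circ G_\ell$ inherits strong quasi-nonexpansiveness when $\fix(f)\cap\fix(G_\ell)=\emptyset$---as the delicate point.
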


\begin{proof}
Under the assumption that $\fix(T)\neq\emptyset$, the mapping $T(\cdot)$ is an averaged quasi-nonexpansive operator as composition of averaged operators. Hence, $\fix(T)$ is closed and convex and problem \cref{eq:feasible} can be seen a special of the RED-PRO formulation with respect to the composed denoiser $T(\cdot)$. Thus, the conditions of \cref{theo:diminishing} and \cref{theo:constant} hold and the iteration $\textbf{x}_{k+1}=T(\textbf{x}_k)=f_\alpha\Big(\textbf{x}_k-\mu\nabla\ell(\textbf{x}_k)\Big)$ converges to an arbitrary fixed-point of $T$.  
\end{proof}

Several remarks are to be made with regard to the above theoretical results. As discussed earlier, PnP-PGD can be seen a special case of \cref{algo:PGM} when the denoiser is strongly quasi-nonexpansive, hence, the above theorem complements \cref{theo:constant} with respect to the convergence of PnP-PGD. However, notice that unlike (H3), condition (W3) depends on $\mu$. Consequently, the solution of (\ref{eq:feasible}) is sensitive to the choice of $\mu$, which is consistent with previous observations in the literature on the convergence of PnP \cite{wei2020tuning}. While the convergence of PnP-PGD (and other variants) to a fixed-point was studied and proven before, here, we first present it a solution of a convex optimization problem, establishing the connection between the PnP and RED-PRO frameworks. Furthermore, it has been shown \cite{meinhardt2017learning,sun2019online,ryu2019plug} that other PnP variants, e.g., PnP-ADMM and PnP primal-dual hybrid gradient
method (PnP-PDHG), satisfy the same fixed-point equation as PnP-PGM 
\begin{equation}
    \textbf{x}^\ast=f\Big(\textbf{x}^\ast-\mu\nabla\ell(\textbf{x}^\ast;\textbf{y})\Big).
    \label{eq:fixedpointequation}
\end{equation} 
Therefore, under the conditions stated above, any algorithm that converges to a fixed-point satisfying \eqref{eq:fixedpointequation}, leads in principle to a solution of the RED-PRO formulation. Examining the latter, we observe that in general the solution of \cref{eq:feasible} is not unique and \cref{algo:REDPROHSDM} converges to an arbitrary fixed-point. This may explain why while different versions of PnP share the same fixed-point equation \cref{eq:fixedpointequation}, they converge to different solutions. Thus, we can gain more control on the obtained solution by formulating the following problem
\begin{align}
\begin{split}
     \underset{\textbf{x}\in\mathbb{R}^n}{\min}\;\frac{1}{2}\norm{\textbf{x}-\textbf{u}}^2 \quad \text{ s.t. }\textbf{x}=T(\textbf{x})\triangleq f_\alpha\Big(x-\mu\nabla\ell(\textbf{x})\Big),
\end{split}
\label{eq:feasibleproj}
\end{align}
for some desired $\textbf{u}\in\mathbb{R}^n$, which can be solved by  \cref{algo:REDPROHSDM} with a diminishing step size, leading to $P_{\fix(T)}(\textbf{u})$. This allows, for example, to obtain the minimal norm solution that satisfies \cref{eq:fixedpointequation} by setting $\textbf{u}=0$.

\subsection{Relaxed RED-PRO}
So far, we have introduced the RED-PRO framework which utilizes the fixed-point set of a denoiser as a regularization, and we have shown its connection to both PnP and RED. However, as mentioned earlier, the fixed-point sets of practical denoisers might be narrow, leading to unfavorable recovery solutions. To circumvent this limitation, we here relax the hard constraint of \cref{eq:REDPRO} and replace it with a squared distance penalty, leading to a relaxed RED-PRO minimization problem
\begin{equation}
                \hat{\bf x}=\underset{\textbf{x}\in\bb{R}^n}{\arg\min}\;\, \ell(\textbf{x};\textbf{y})+\frac{\lambda}{2}\norm{\textbf{x}-P_{\fix(f)}(\textbf{x})}^2.
                \label{eq:RRPoptimization}
\end{equation}
As before, when the denoiser is demicontractive, the above problem is a well-defined convex minimization. Here $\lambda>0$ balances between the log-likelihood term and the regularization, allowing us to control the distance of the optimal solution from the fixed-point set. Notice that when $\lambda\rightarrow\infty$ (or it sufficiently large),  problem \eqref{eq:RRPoptimization} reduces to problem \eqref{eq:REDPRO}. 
% Therefore the optimization \eqref{eq:RRPoptimization} generalizes the RED-PRO approach.   

Denoting by $E(\textbf{x})$ the objective function of \eqref{eq:RRPoptimization}, we have that
\begin{equation}
    \nabla E(\textbf{x})=\nabla \ell(\textbf{x};\textbf{y})+\lambda\Big(\textbf{x}-P_{\fix(f)}(\textbf{x})\Big).
    \label{eq:RRPgradient}
\end{equation}
Note that the latter expression is similar to the gradient \cref{eq:REDgradient} of the original RED framework, where we replace the denoiser with the projection onto its fixed-point set. Therefore, the solution of \eqref{eq:RRPoptimization} can be found by various optimization solvers, such as SD and ADMM, assuming one can compute the projection onto the fixed-point set
\begin{equation}
     P_{\fix(f)}(\textbf{x})=\underset{\textbf{v}\in\bb{R}^n}{\arg\min}\;\, \frac{1}{2}\norm{\textbf{x}-\textbf{v}}^2\;\,
    s.t.\;\, \textbf{v}\in\fix(f).
\end{equation}
Notice that the above problem is a special case of problem \eqref{eq:REDPRO} where the objective is $\frac{1}{2}\norm{\textbf{x}-\textbf{v}}^2$ which satisfies the conditions of \cref{theo:diminishing}. Therefore, given a point $\textbf{x}_0$, we can compute the projection $P_{\fix(f)}(\textbf{x}_0)$ using the following iterative scheme
\begin{equation}
    \textbf{x}_{j+1}=f_\alpha\Big(t_j\textbf{x}_0+(1-t_j)\textbf{x}_j\Big),
\end{equation}
where $\alpha>0$ and the sequence $\{t_k\}_{k\in\bb{N}}$ meet the requirements of \cref{theo:diminishing}. As shown in \cref{app:Halpern}, the above update rule can be rewritten equivalently as
\begin{equation}
    \textbf{x}_{j+1}=t_j\textbf{x}_0+(1-t_j)f_\alpha(\textbf{x}_j),
    \label{eq:Halpern}
\end{equation}
which is the well-known Halpern iteration \cite{halpern1967fixed, chidume2010iterative, muarucster2011strong,lieder2017convergence}. 

The relaxed RED-PRO approach offers a broader search domain at the expense of higher computational load, since it require multiple activations of the denoising engine in each iteration. As typically done, we limit ourselves in practice to a small number of inner iterations to reduce complexity, yielding an approximation of the projection at each iteration. 

As a side note, we mention that for demicontractive denoiser it holds that $\langle \textbf{x}-f(\textbf{x}), \textbf{x}-P_{\fix(f)}\rangle\geq 0$ implying that $\textbf{g}=-\lambda\big(\textbf{x}-f(\textbf{x})\big)$ is a descent direction of the regularization term $\rho(\textbf{x})=\frac{\lambda}{2}\norm{\textbf{x}-P_{\fix(f)}(\textbf{x})}^2$. Hence, an interesting direction to reduce complexity is replacing the term $-\lambda\Big(\textbf{x}-P_{\fix(f)}(\textbf{x})\Big)$ in \cref{eq:RRPgradient} with $\textbf{g}$, bringing us back to the original RED algorithms. However, it remains unclear whether this indeed leads to a solution of \cref{eq:RRPoptimization}.

% \begin{algorithm}
% \caption{RED-PRO via SD}
% \begin{algorithmic}[1]
% \item\textbf{Input:} ${\bf x}_0\in\bb{R}^n$, $\{t_k\}_{k\in\bb{N}}$, $\alpha\in(0,1)$, $\lambda,\,\mu,\,J>0$, $f(\cdot)$.
% \item \textbf{for} $k=0,1,2,...$ \textbf{do}:
% \newline \hphantom\quad   \textbf{for} $j=0,1,2,...J$ \textbf{do}:
% \begin{equation*}
%     \textbf{x}_{k,j+1}=f_\alpha\Big(t_j\textbf{x}_k+(1-t_j)\textbf{x}_{k,j}\Big)
% \end{equation*}
% $\quad\textbf{x}_{k+1}=\textbf{x}_k-\mu\Big(\nabla\ell(\textbf{x}_k;y)+\lambda(\textbf{x}_k-\textbf{x}_{k,J+1})\Big)$.
% \item \textbf{Output:} $\textbf{x}_{k+1}$.
% \end{algorithmic}
% \label{algo:RRPviaSD}
% \end{algorithm}

%=======================================================

% \subsubsection{Approximate Fixed-Points}
Next, we further relax our problem by considering richer sets than the fixed-point one. To that end, we define for $\epsilon>0$ the $\epsilon$-approximate fixed-points as 
\begin{equation}
    \fix_\epsilon(f)\triangleq \{\textbf{x}\in\bb{R}^n\,:\;\norm{\textbf{x}-f(\textbf{x})}\leq\epsilon\}.
\end{equation}
Note that $\fix_\epsilon(f)$ can be seen as the set of \textquotedblleft almost clean\textquotedblright\,  images, thus, it can act as a valid and wide domain for searching restored solutions. A possible approach to exploit $\fix_\epsilon(f)$ is defining an adaptive averaged denoiser  
\begin{equation}
    f_\epsilon({\textbf{x}})\triangleq \alpha_\epsilon(\textbf{x})\textbf{x}+\Big(1-\alpha_\epsilon(\textbf{x})\Big)f(\textbf{x}), 
    \label{eq:modifieddenoiser}
\end{equation}
where $\alpha_\epsilon(x)=\frac{\epsilon}{\max(\epsilon,\norm{\textbf{x}-f(\textbf{x})})}$. Now notice that $\fix(f_\epsilon)=\fix_\epsilon(f)$, hence, the modified denoiser \eqref{eq:modifieddenoiser} can be utilized via the algorithms derived earlier to find inverse solutions constrained by $\fix_\epsilon(f)$.
While empirically this may lead to satisfactory results, the approximate fixed-point set is generally not convex for demicontractive or even nonexpansive functions. Consequently, the projection onto $\fix_\epsilon(f)$ is not well-defined and no convergence guarantees can be given. One option standing ahead of us is to restrict the valid denoisers to those for which this set is necessarily convex. 

Another option for circumventing the above difficulty is to consider a dilated fixed-point set, defined for some $\delta>0$ as
\begin{equation}
    B_\delta(f)\triangleq\{\textbf{x}\in\bb{R}^n\;:\; \norm{\textbf{x}-P_{\fix(f)}(\textbf{x})}\leq \delta\}. 
\end{equation}
The set $B_\delta(f)$ is closed and convex for any demicontractive denoiser. Moreover, as illustrated in Fig.\,\ref{fig:sethierarchy}, for an appropriate choice of $\delta$ we have that $B_\delta(f)\subseteq \fix_\epsilon(f)$ as stated next.

\begin{theorem}
Let $f(\cdot)$ be $d$-demicontractive function and consider $\delta\in[0,\alpha\epsilon]$ for some $\epsilon>0$ and $\alpha\in(0,\frac{1-d}{2}]$. Then, $    B_\delta(f)\subseteq \fix_\epsilon(f)$.
\label{theo:dilation}
\end{theorem}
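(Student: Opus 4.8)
The plan is to show that any $\textbf{x}\in B_\delta(f)$ satisfies $\norm{\textbf{x}-f(\textbf{x})}\leq\epsilon$, i.e. $\textbf{x}\in\fix_\epsilon(f)$. First I would take an arbitrary $\textbf{x}\in B_\delta(f)$, so that $\norm{\textbf{x}-P_{\fix(f)}(\textbf{x})}\leq\delta$, and let $\textbf{z}\triangleq P_{\fix(f)}(\textbf{x})\in\fix(f)$. The goal is to bound the denoising residual $\norm{\textbf{x}-f(\textbf{x})}$ in terms of $\norm{\textbf{x}-\textbf{z}}$, and this is exactly the kind of estimate that appears earlier in the excerpt: from the demicontractivity inequality \eqref{eq:demi2} together with Cauchy--Schwarz, one gets $\frac{1-d}{2}\norm{\textbf{x}-f(\textbf{x})}^2\leq\langle\textbf{x}-f(\textbf{x}),\,\textbf{x}-\textbf{z}\rangle\leq\norm{\textbf{x}-f(\textbf{x})}\cdot\norm{\textbf{x}-\textbf{z}}$, hence $\norm{\textbf{x}-f(\textbf{x})}\leq\frac{2}{1-d}\norm{\textbf{x}-\textbf{z}}$.

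Next I would chain the inequalities: $\norm{\textbf{x}-f(\textbf{x})}\leq\frac{2}{1-d}\norm{\textbf{x}-P_{\fix(f)}(\textbf{x})}\leq\frac{2}{1-d}\,\delta$. Using the hypothesis $\delta\leq\alpha\epsilon$ with $\alpha\in(0,\tfrac{1-d}{2}]$, I get $\frac{2}{1-d}\,\delta\leq\frac{2}{1-d}\,\alpha\epsilon\leq\frac{2}{1-d}\cdot\frac{1-d}{2}\,\epsilon=\epsilon$. Therefore $\norm{\textbf{x}-f(\textbf{x})}\leq\epsilon$, which is precisely the defining condition of $\fix_\epsilon(f)$, so $\textbf{x}\in\fix_\epsilon(f)$. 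Since $\textbf{x}$ was arbitrary in $B_\delta(f)$, the inclusion $B_\delta(f)\subseteq\fix_\epsilon(f)$ follows. (The degenerate case $\delta=0$ gives $\textbf{x}\in\fix(f)\subseteq\fix_\epsilon(f)$ trivially, and $\delta=0$ also forces $\norm{\textbf{x}-f(\textbf{x})}=0\leq\epsilon$ directly.)

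There isn't really a hard part here — the proof is a two-line chain of estimates. The only point requiring a moment's care is the step bounding the residual by the distance to the fixed-point set: one must invoke the equivalent form \eqref{eq:demi2} of demicontractivity and then apply Cauchy--Schwarz, being careful that $P_{\fix(f)}(\textbf{x})$ genuinely lies in $\fix(f)$ — which it does, since by \cref{theo:closed&convex} the set $\fix(f)$ is closed and convex, so the projection is well-defined and its image is a member of the set. Everything else is just arithmetic with the constraint $\delta\leq\alpha\epsilon$ and $\alpha\leq\tfrac{1-d}{2}$.
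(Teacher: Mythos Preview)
Your proof is correct. It differs from the paper's argument in how the key estimate $\norm{\textbf{x}-f(\textbf{x})}\leq c\,\norm{\textbf{x}-P_{\fix(f)}(\textbf{x})}$ is obtained. You go directly from the demicontractivity inequality \eqref{eq:demi2} via Cauchy--Schwarz to get the constant $c=\tfrac{2}{1-d}$, and then use $\alpha\leq\tfrac{1-d}{2}$ only at the very end to close the chain. The paper instead passes through the relaxed operator $f_\alpha$: by \cref{prop:demi2quasi} the map $f_\alpha$ is $\gamma$-strongly quasi-nonexpansive with $\gamma\geq1$ when $\alpha\leq\tfrac{1-d}{2}$, so \cref{prop:displacemenbound} gives $\norm{\textbf{x}-f_\alpha(\textbf{x})}\leq\norm{\textbf{x}-P_{\fix(f)}(\textbf{x})}$, and then the identity $\alpha\norm{\textbf{x}-f(\textbf{x})}=\norm{\textbf{x}-f_\alpha(\textbf{x})}$ yields the constant $c=\tfrac{1}{\alpha}$. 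Your route is more elementary and self-contained (it avoids the two auxiliary propositions and in fact reproduces the residual bound already noted in \cref{subsec:demi}); the paper's route makes transparent why the parameter $\alpha$ appears in the hypothesis at all, namely as the averaging weight that renders $f_\alpha$ strongly quasi-nonexpansive. Either way the final chain $\norm{\textbf{x}-f(\textbf{x})}\leq\tfrac{\delta}{\alpha}\leq\epsilon$ (paper) or $\norm{\textbf{x}-f(\textbf{x})}\leq\tfrac{2}{1-d}\alpha\epsilon\leq\epsilon$ (yours) closes immediately.
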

\begin{proof}
See \cref{app:dilation}.
\end{proof}

\begin{figure}
    \centering
    \includegraphics[trim={1cm 0cm 1cm 0cm},clip,height = 6cm, width = 0.7\linewidth]{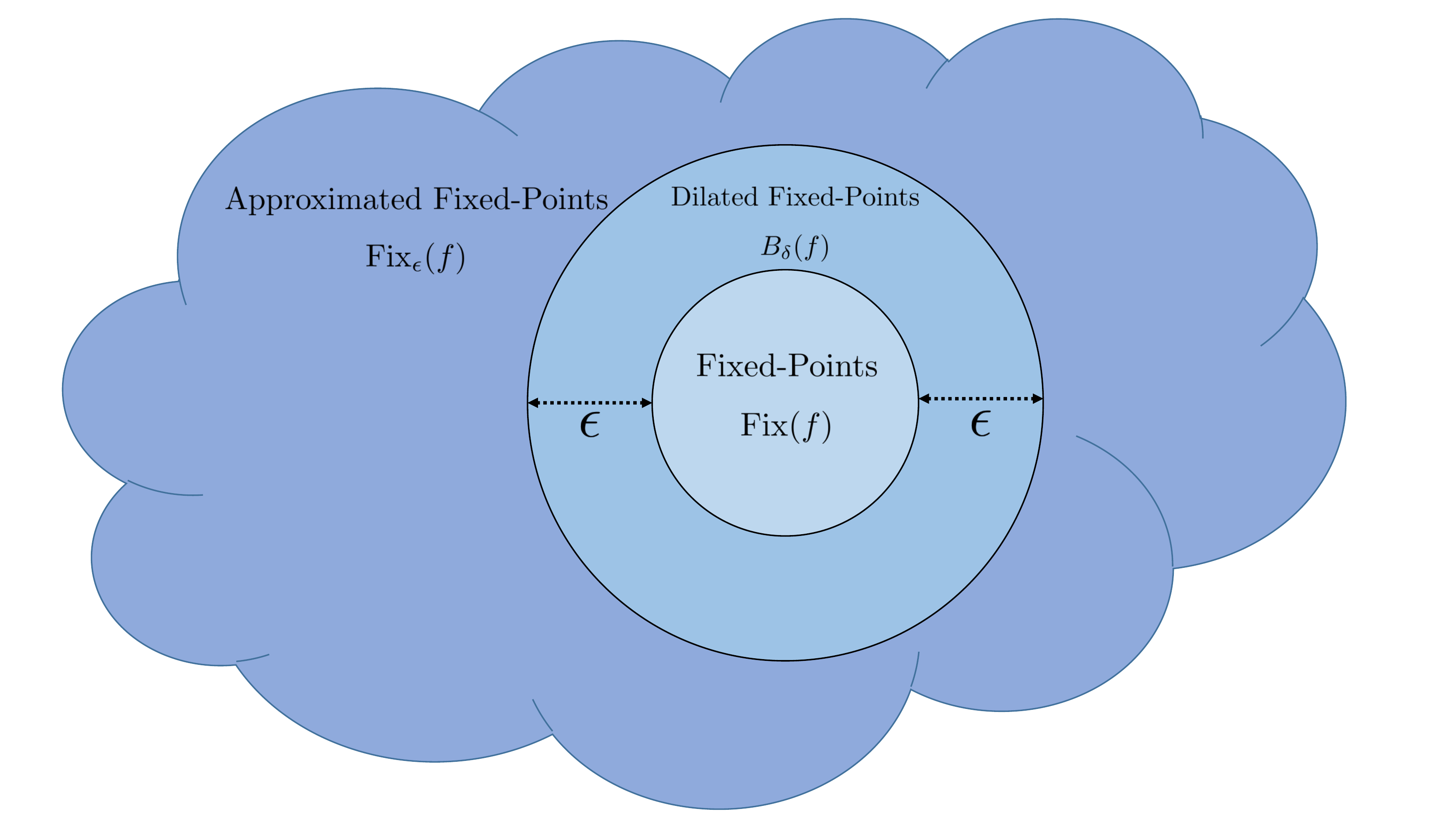}
    \caption{Hierarchy Illustration of fixed-point related sets. }
    \label{fig:sethierarchy}
\end{figure}

\noindent Exploiting the above result, we formulate the relaxed RED-PRO (RRP) problem
\begin{equation}
                \hat{\bf x}_\text{RRP}=\underset{\textbf{x}\in\bb{R}^n}{\arg\min}\;\, \ell(\textbf{x};\textbf{y})+\frac{\lambda}{2}\norm{\textbf{x}-P_{B_\delta(f)}(\textbf{x})}^2,
                \label{eq:dialtedRRP}
\end{equation}
where $P_{B_\delta(f)}$ is the projection onto $B_\delta(f)$. The gradient of the above objective is
\begin{equation}
    \nabla E_\text{RRP}=\nabla \ell(\textbf{x};\textbf{y})+\lambda\Big(\textbf{x}-P_{B_\delta(f)}(\textbf{x})\Big),
\end{equation}
requiring to compute the projection $P_{B_\delta(f)}$. Fortunately, for any $\textbf{x}\notin B_\delta(f)$, the projection is given by \cite{bargetz2018convergence}
\begin{equation}
    P_{B_\delta(f)}(\textbf{x})=  
    \frac{\delta}{\norm{\textbf{x}-P_{\fix(f)}(\textbf{x})}}\textbf{x} +\bigg(1-\frac{\delta}{\norm{\textbf{x}-P_{\fix(f)}(\textbf{x})}}\bigg)P_{\fix(f)}(\textbf{x}).
\end{equation}
\noindent Substituting the above expression into \cref{eq:dialtedRRP}, we obtain
\begin{equation}
                \hat{\bf x}_\text{RRP}=\underset{\textbf{x}\in\bb{R}^n}{\arg\min}\;\, \ell(\textbf{x};\textbf{y})+\frac{\lambda}{2}\Big(\norm{\textbf{x}-P_{\fix(f)}(\textbf{x})}-\delta\Big)^2.
                \label{eq:deltaRRP}
\end{equation}
The latter minimization problem consists of another  design parameter $\delta$ which allows to control the distance of the solution from the fixed-point set, thus, increasing our search domain. When $\delta$ is set to zero, we return to problem \eqref{eq:RRPoptimization}. In \cref{algo:RRPviaSD} we detail the overall solution of the relaxed RED-PRO with parameter $\delta$ using steepest descent as an example.

\begin{algorithm}
\caption{Relaxed RED-PRO via SD}
\begin{algorithmic}[1]
\item\textbf{Input:} ${\bf x}_0\in\bb{R}^n$, $\{t_k\}_{k\in\bb{N}}$, $\alpha\in(0,1)$, $\lambda,\,\mu,\,J,\,N\,\delta>0$ and $f(\cdot)$.
\item \textbf{for} $k=0,1,2,...,N-1$ \textbf{do}:
% \newline \hphantom\quad
\begin{itemize}
    \item[] $\textbf{x}_{k,0}=\textbf{x}_{k}$.
    \item[] \textbf{for} $j=0,1,2,...J-1$ \textbf{do}:
\begin{equation*}
    \textbf{x}_{k,j+1}=f_\alpha\Big(t_j\textbf{x}_k+(1-t_j)\textbf{x}_{k,j}\Big)
\end{equation*}
\item[] $\textbf{v}_k=\frac{\delta}{\norm{\textbf{x}_k-\textbf{x}_{k,J}}}\textbf{x}_k+\Big(1-\frac{\delta}{\norm{\textbf{x}_k-\textbf{x}_{k,J}}}\Big)\textbf{x}_{k,J}$.\newline
\item[] $\textbf{x}_{k+1}=\textbf{x}_k-\mu\Big(\nabla\ell(\textbf{x}_k;y)+\lambda(\textbf{x}_k-\textbf{v}_k)\Big).$
\end{itemize}
\item \textbf{Output:} $\textbf{x}_{k+1}$.
\end{algorithmic}
\label{algo:RRPviaSD}
\end{algorithm}

To summarize this part, we described here relaxed versions of the RED-PRO framework where we use the distance function as regularization and we extend our search to certain approximate fixed-points. This allows us to obtain stable inverse solutions even when the denoiser exhibits a limited fixed-point set.

% \subsubsection{Enlarged Fixed-Point Set}

% \subsection{RED-Cut}

%===========================================================
%===========================================================

\section{Experiments}
\label{sec:results}

In this section we evaluate the performance of the proposed RED-PRO framework. AS the goal of this study is to bridge between RED and PnP and to provide theoretical justifications rather than achieving state-of-the-art results, we compare ourselves to the original RED approach alone, for the tasks of image deblurring and super-resolution. We follow the same line of experiments performed in \cite{romano2017little}, where we employ NLM and TNRD denoisers, rather than the median filter and TNRD. We use NLM to exemplify the application of the RED-PRO with denoisers that may be expansive for certain parameters. The TNRD denoiser, pretrained to remove WGN whose standard deviation equals 5,\footnote{We use TNRD to address any arbitrary noise level $\sigma$ by exploiting the relation $f_\sigma(\textbf{x})=\frac{\sigma}{5}f_5(\frac{5}{\sigma}\textbf{x}).$} is brought to demonstrate the full extent of the proposed framework. 

For fair comparison, the parameters that are common to both RED and RED-PRO are set according to the values given in \cite{romano2017little}.\footnote{For RED algorithms we use the code published by the authors of \cite{romano2017little} at \url{https://github.com/google/RED}.} 
Additional parameters of introduced by the RED-PRO framework are tuned manually to achieve higher peak signal to noise ratio (PSNR).

\subsection{Convergence}
As a start, we demonstrate the convergence of RED-PRO with the HSD method for various denoisers. To that end, we consider the task of uniform deblurring whose complete setup is detailed in the following subsection, with the exception of the relaxation parameter that we set here to $\alpha=0.5$. 

We show the evolution of the relative error the fidelity term throughout the iterations in Fig.\,\ref{fig:convergence}(a) we, while Fig.\,\ref{fig:convergence}(b) displays a measure for the solution proximity to the fixed-point set over iterations. We observe that we obtain small relative errors in both measures, where rapid decline the fidelity term is seen for all denoisers, converging to different values as expected. In addition, for all of the denoisers, the solution converge to a fixed point, however, the rate of convergence varies for different denoiser where BM3D exhibits notably greater decrease than other denoisers.\footnote{Due to this dramatic decline of BM3D, it may seem that the errors of other denoisers are saturated. However, this is not the case, as their errors gradually decrease.} We point out that there is a tradeoff between the rate of convergence of the fidelity term and that of the fixed-point errors, since in general large step sizes lead to large decrease in the fidelity term while decaying step sizes are required to ensue convergence to a fixed-point. Finally, we note that while these measures indeed demonstrate the convergence of the proposed approach, they might not accurately indicate the quality of the resultant images.   

\begin{figure}[h]
    \centering
    \includegraphics[trim={0cm 3.2cm 0cm 3cm},clip,height = 7cm, width = 0.9\linewidth]{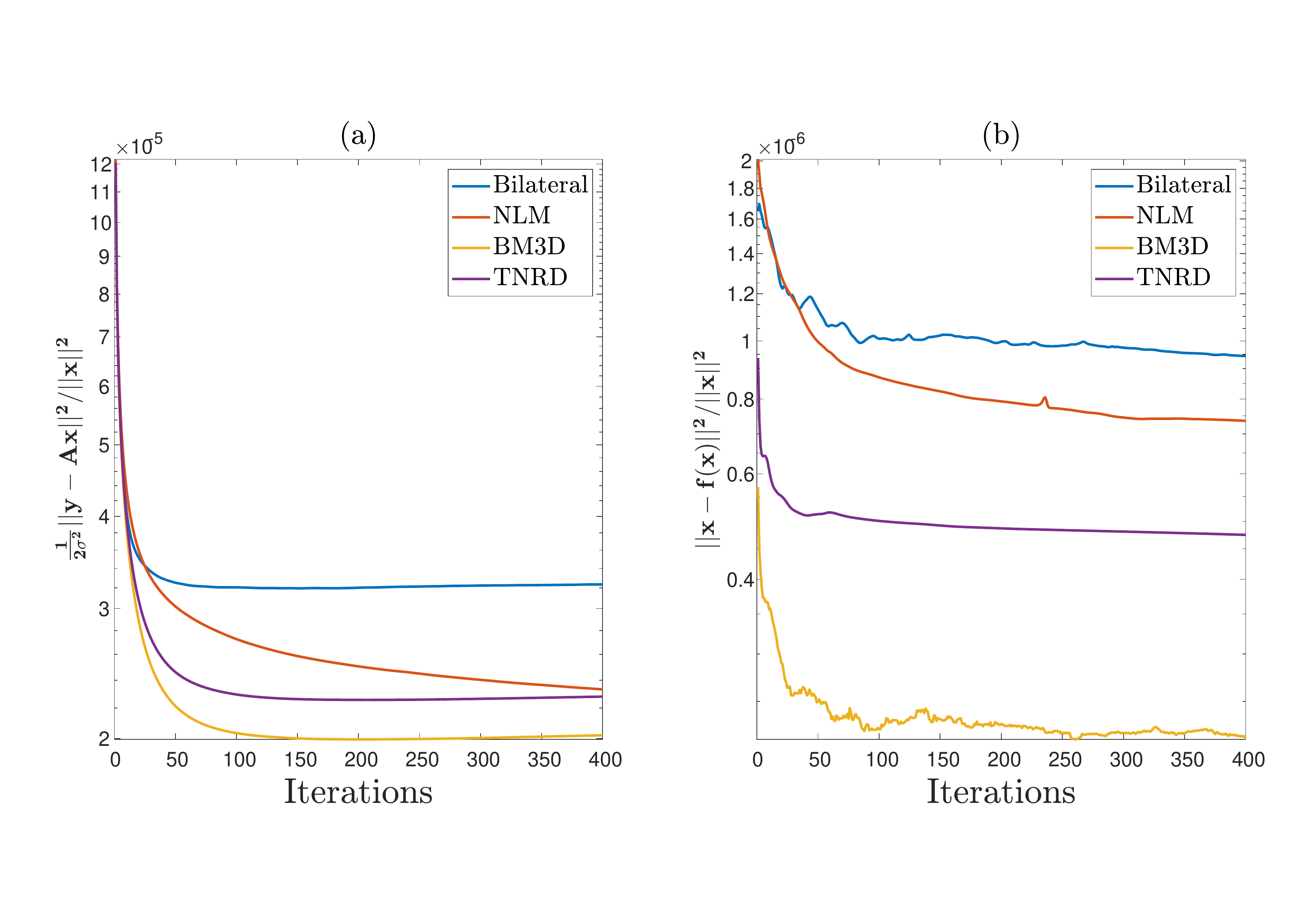}
    \caption{An illustration of the convergence of RED-PRO via HSD using various denoisers. Here we degrade the image \textbf{Starfish}, degraded by a uniform PSF and restore it by \cref{algo:REDPROHSDM} with diminishing step size and $\alpha=0.5$. (a) The evolution of the fidelity term during the iterations. (b) A measure of the solution proximity to the fixed-point set throughout the iterations.}
    \label{fig:convergence}
\end{figure}

\subsection{Image Deblurring}
As done in \cite{romano2017little}, we carry out here the synthetic non-blind deblurring experiments first described in \cite{dong2012nonlocally}. To that end, we perform the following process on the RGB images provided by the authors of \cite{romano2017little}. Each image is convolved either with a $9\times9$ uniform point spread function (PSF) or a 2D Gaussian function with a standard deviation of 1.6. Then, we contaminate the resulted blurred images with an additive WGN with $\sigma=\sqrt{2}$. The inverse operation is performed by first converting the RGB image to YCbCr color-space, applying the deblurring technique only on the luminance channel and then converting the result back to RGB to obtain the final image. The complete details of the parameter values for the uniform deblurring case are given in \cref{table:uniformparams}. For Gaussian deblurring we use the same values excluding $\sigma_f$ and $\lambda$ which we set to 4.1 and 0.01 respectively.
Throughout all the experiments, we use a constant step size $\mu_0=\frac{2}{\sigma_f^{-2}+\lambda}$ for the steepest descent methods, while for HSD we use a a diminishing step size $\mu_k=\mu_0k^{-0.1}$.

\cref{table:NLM-UB} and \cref{table:NLM-GB} detail the recovery results of RED and RED-PRO algorithms when NLM is used, while \cref{table:TNRD-UB} and \cref{table:TNRD-GB} provides the results when TNRD is employed. We evaluate the performance of the different techniques using PSNR measure, where higher is better, computed on the luminance channel of the ground truth and the restored images. We note that we show the results only for a chosen subset of images, while the displayed average value is computed over the entire set of images.

As can be seen, when NLM is used within the discussed frameworks, the RED-PRO approach displays an apparent improvement, in particular for the case of uniform deblurring. When TNRD is incorporated, 
the relaxed RED-PRO framework still leads to enhanced results in comparison to RED, but the  performance gap is small. Furthermore, we notice that SD-based methods require more iterations which is consistent with the observations given in \cite{romano2017little}. 
We present visual evidence for the performance of RED-PRO in Fig.\,\ref{fig:uniformblur} and Fig.\,\ref{fig:gaussianblur} for uniform and Gaussian blur kernels, respectively. These images support the previous results where RED-PRO offer an improvement over the RED framework when NLM is utilized. In the other case, both approaches achieve similar performance with a clear advantage for applying TNRD rather than employing NLM.

\begin{table*}
    \centering
\scriptsize
\renewcommand{\arraystretch}{1.2}
\renewcommand{\tabcolsep}{8.2pt}
\begin{tabular}{|c|c|c|c|c|c|c|c|c|c|c|}
\hline 
\multirow{2}{*}{Parameter} & \multicolumn{3}{c|}{RED} & \multicolumn{4}{c|}{RED-PRO} & \multicolumn{3}{c|}{Approximate RED-PRO}\tabularnewline
\cline{2-11} \cline{3-11} \cline{4-11} \cline{5-11} \cline{6-11} \cline{7-11} \cline{8-11} \cline{9-11} \cline{10-11} \cline{11-11} 
 & FP & ADMM & SD & HSD & FP & ADMM & SD & FP & ADMM & SD\tabularnewline
\hline 
$N$ & 200 & 200 & 1500 & 400 & 200 & 200 & 1500 & 200 & 200 & 1500\tabularnewline
\hline 
$\sigma_{f}$ & 3.25 & 3.25 & 3.25 & 3.25 & 3.25 & 3.25 & 3.25 & 3.25 & 3.25 & 3.25\tabularnewline
\hline 
$\lambda$ & 0.02 & 0.02 & 0.02 & --- & 0.02 & 0.02 & 0.02 & 0.02 & 0.02 & 0.02\tabularnewline
\hline 
$\beta$ & --- & 0.001 & --- & --- & --- & 0.001 & --- & --- & 0.001 & ---\tabularnewline
\hline 
$m_{1}$ & *CF & *CF & --- & --- & *CF & *CF & --- & *CF & *CF & ---\tabularnewline
\hline 
$m_{2}$ & --- & 1 & --- & --- & --- & 1 & --- & --- & 1 & ---\tabularnewline
\hline 
$\alpha$ & --- & --- & --- & 0.035 & 1 & 1 & 1 & 1 & 1 & 1\tabularnewline
\hline 
$J$ & --- & --- & --- & --- & 3 & 3 & 3 & 3 & 3 & 3\tabularnewline
\hline 
$\delta$ & --- & --- & --- & --- & --- & --- & --- & 0.0001 & 0.0001 & 0.0001\tabularnewline
\hline 
\end{tabular}
\newline {\raggedright *Closed-form using FFT \par}
\vspace{3pt}
    \caption{The set of parameters used in RED and RED-PRO frameworks for the task of deblurring images corrupted by a uniform PSF. The description of ADMM inner parameters $m_1$ and $m_2$ is given in \cite{romano2017little}.}
    \label{table:uniformparams}
\end{table*}

\begin{table*}
    \centering
\scriptsize
\renewcommand{\arraystretch}{1.3}
\begin{tabular}{|c|c|c|c|c|c|c|c|c|c|c|}
\hline 
\multirow{2}{*}{\backslashbox{\tiny Image}{\tiny Method}} & \multicolumn{3}{c|}{RED} & \multicolumn{1}{c|}{RED-PRO} & \multicolumn{3}{c|}{Relaxed RED-PRO $\delta=0$}& \multicolumn{3}{c|}{Relaxed RED-PRO $\delta>0$}
\tabularnewline
\cline{2-11} \cline{3-11} \cline{4-11} \cline{5-11} \cline{6-11} \cline{7-11} \cline{8-11} \cline{9-11} \cline{10-11} \cline{11-11} 
 & FP & ADMM & SD & HSD & FP & ADMM & SD & FP & ADMM & SD\tabularnewline
\hline 
Bike & 23.07 & 23.12 & 24.41 & \hlcell{24.95} & 23.09 & 23.16 & 24.46 & 23.08 & 23.13 & 24.51\tabularnewline
\hline 
Butterfly & 24.93 & 25.01 & 26.93 & \hlcell{27.47} &
24.93 & 25.00 & 26.96 & 24.83 & 24.87 & 26.89 \tabularnewline
\hline 
Flower & 26.21 & 26.25 & 27.89 & \hlcell{29.26} & 26.27 &
26.33 & 27.97 & 26.25 & 26.27 & 28.01
\tabularnewline
\hline 
Girl & 
31.29 &
31.32 &
31.86 &
\hlcell{32.68} &
31.35 &
31.42 &
31.93 &
31.40 &
31.43 &
31.98
\tabularnewline
\hline 
Hat & 
29.04&
29.06&
30.40&
\hlcell{31}.42&
29.15&
29.13&
30.49&
29.08&
29.05&
30.46
\tabularnewline
\hline 
% Leaves & 
% 24.40&
% 24.45&
% 26.47&
% \hlcell{27.07}&
% 24.36&
% 24.44&
% 26.44&
% 24.21&
% 24.29&
% 26.34
% \tabularnewline
% \hline 
% Parrots & 
% 28.82 &
% 28.92 &
% 30.06 &
% 30.03 &
% 28.82 &
% 28.90 &
% \hlcell{30.08} &
% 28.77 &
% 28.77 &
% 30.06 
% \tabularnewline
% \hline 
% Parthenon & 
% 25.14&
% 25.20&
% 26.98&
% \hlcell{29.35}&
% 25.22&
% 25.27&
% 27.10&
% 25.18&
% 25.26&
% 27.14
% \tabularnewline
% \hline 
% Plants & 
% 30.52&
% 30.47& 
% 32.14&
% \hlcell{33.55}&
% 30.55&
% 30.56&
% 32.20&
% 30.48&
% 30.53&
% 32.22
% \tabularnewline
% \hline 
% Raccoon & 
% 26.86&
% 26.83&
% 27.60&
% \hlcell{28.62}&
% 26.98&
% 26.98&
% 27.75&
% 26.90&
% 26.87&
% 27.76
% \tabularnewline
% \hline
% Starfish & 
% 24.98&
% 25.03&
% 26.99&
% \hlcell{28.96}&
% 25.02&
% 25.05&
% 27.08&
% 24.95&
% 24.99&
% 27.06
% \tabularnewline
% \hline
Average & 
26.84&
26.88&
28.34&
\hlcell{29.40}&
26.89&
26.93&
28.40&
26.83&
26.86&
28.40
\tabularnewline
\hline 
\end{tabular}
\vspace{5pt}
    \caption{Deblurring with NLM: Recovery results obtained by RED and RED-PRO evaluated on the set of images, provided by the authors of \cite{romano2017little}, which corrupted with uniform blur kernel. Performance is measured in PSNR [dB] where highest result are highlighted.}
    \label{table:NLM-UB}
\end{table*}

\begin{table*}
    \centering
\scriptsize
\renewcommand{\arraystretch}{1.3}
\begin{tabular}{|c|c|c|c|c|c|c|c|c|c|c|}
\hline 
\multirow{2}{*}{\backslashbox{\tiny Image}{\tiny Method}} & \multicolumn{3}{c|}{RED} & \multicolumn{1}{c|}{RED-PRO} & \multicolumn{3}{c|}{Relaxed RED-PRO $\delta=0$}& \multicolumn{3}{c|}{Relaxed RED-PRO $\delta>0$}
\tabularnewline
\cline{2-11} \cline{3-11} \cline{4-11} \cline{5-11} \cline{6-11} \cline{7-11} \cline{8-11} \cline{9-11} \cline{10-11} \cline{11-11} 
 & FP & ADMM & SD & HSD & FP & ADMM & SD & FP & ADMM & SD\tabularnewline
\hline 
Bike & 
26.10&
26.09&
25.95&
24.95&
26.21&
26.20&
26.05&
\hlcell{26.48}&
26.47&
26.28
\tabularnewline
\hline 
Butterfly & 
30.41&
30.40&
30.20&
27.24&
30.48&
30.47&
30.24&
\hlcell{30.64}&
30.63&
30.35
\tabularnewline
\hline 
Flower & 
30.18&
30.18&
30.13&
29.38&
30.26&
30.26&
30.21&
\hlcell{30.46}&
\hlcell{30.46}&
30.39
\tabularnewline
\hline 
Girl & 
33.11&
33.11&
33.10&
32.99&
33.14&
33.14&
33.13&
\hlcell{33.19}&
\hlcell{33.19}&
\hlcell{33.19}
\tabularnewline
\hline 
Hat & 
32.16&
32.16&
32.15&
31.55&
32.17&
32.18&
32.17&
\hlcell{32.25}&
\hlcell{32.25}&
32.24
\tabularnewline
\hline 
% Leaves & 
% 30.13&
% 30.12&
% 29.72&
% 26.92&
% 30.20&
% 30.20&
% 29.77&
% \hlcell{30.39}&
% 30.38&
% 29.88
% \tabularnewline
% \hline 
% Parrots & 
% 31.83&
% 31.86&
% 31.62&
% 30.12&
% 31.92&
% 31.93&
% 31.68&
% \hlcell{32.13}&
% 32.12&
% 31.82
% \tabularnewline
% \hline 
% Parthenon & 
% 29.85&
% 29.85&
% 29.85&
% 29.82&
% 29.97&
% 29.97&
% 29.96&
% \hlcell{30.23}&
% \hlcell{30.23}&
% 30.22
% \tabularnewline
% \hline 
% Plants & 
% 34.25&
% 34.25&
% 34.19&
% 33.59&
% 34.27&
% 34.27&
% 34.21&
% \hlcell{34.40}&
% 34.39&
% 34.33
% \tabularnewline
% \hline 
% Raccoon & 
% 28.83&
% 28.84&
% 28.81&
% 28.99&
% 28.94&
% 28.94&
% 28.91&
% \hlcell{29.12}&
% \hlcell{29.12}&
% 29.08
% \tabularnewline
% \hline
% Starfish & 
% 30.57&
% 30.57&
% 30.47&
% 29.41&
% 30.65&
% 30.65&
% 30.55&
% \hlcell{30.84}&
% \hlcell{30.84}&
% 30.72
% \tabularnewline
% \hline
Average & 
30.67&
30.67&
30.56&
29.54&
30.75&
30.75&
30.63&
\hlcell{30.92}&
\hlcell{30.92}&
30.77
\tabularnewline
\hline 
\end{tabular}
\vspace{5pt}
    \caption{Deblurring with TNRD: Recovery results obtained by RED and RED-PRO evaluated on the set of images, provided by the authors of \cite{romano2017little}, which corrupted with uniform blur kernel. Performance is measured in PSNR [dB] where highest result are highlighted.}
    \label{table:TNRD-UB}
\end{table*}

\begin{figure*}[ht]
 \centering
 \includegraphics[trim={0cm 0.5cm 0cm 1cm},clip,height = 8cm, width = 0.8\linewidth]{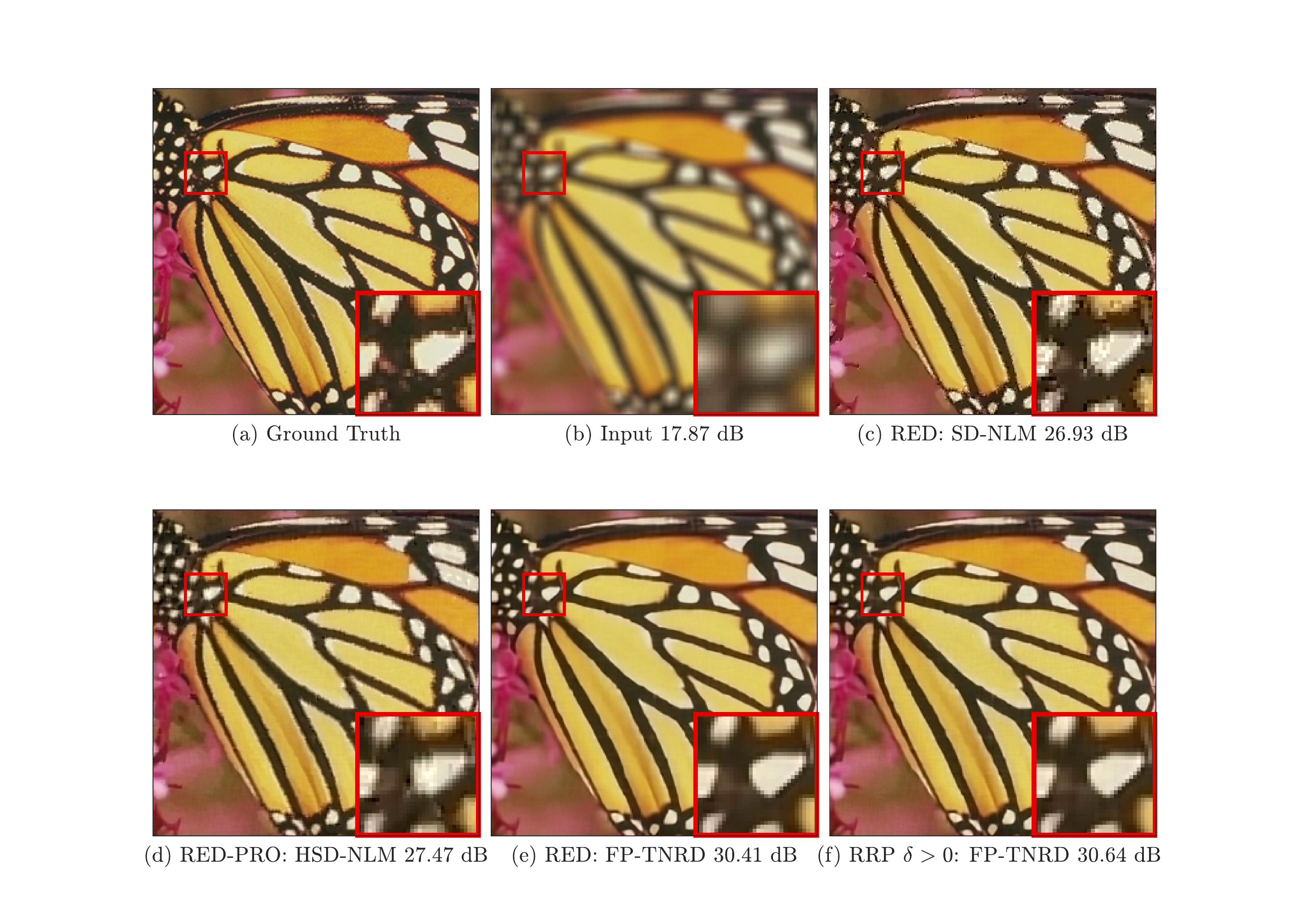}
\vspace{-5pt}
\caption{Visual comparison of deblurring the image \textbf{Butterfly}, degraded by a uniform
PSF, along with the corresponding PSNR [dB] score.}
  \label{fig:uniformblur}
 \end{figure*}

 \begin{table*}
    \centering
\scriptsize
\renewcommand{\arraystretch}{1.3}
\begin{tabular}{|c|c|c|c|c|c|c|c|c|c|c|}
\hline 
\multirow{2}{*}{\backslashbox{\tiny Image}{\tiny Method}} & \multicolumn{3}{c|}{RED} & \multicolumn{1}{c|}{RED-PRO} & \multicolumn{3}{c|}{Relaxed RED-PRO $\delta=0$}& \multicolumn{3}{c|}{Relaxed RED-PRO $\delta>0$}
\tabularnewline
\cline{2-11} \cline{3-11} \cline{4-11} \cline{5-11} \cline{6-11} \cline{7-11} \cline{8-11} \cline{9-11} \cline{10-11} \cline{11-11} 
 & FP & ADMM & SD & HSD & FP & ADMM & SD & FP & ADMM & SD\tabularnewline
\hline 
Bike & 
25.30&
25.45&
26.97&
\hlcell{27.13}&
25.38&
25.54&
27.03&
25.42&
25.57&
27.08
\tabularnewline
\hline 
Butterfly & 
27.14&
27.28&
29.61&
\hlcell{29.87}&
27.18&
27.36&
29.64&
27.21&
27.37&
29.66
\tabularnewline
\hline 
Flower & 
28.47&
28.62&
30.75&
\hlcell{31.64}&
28.58&
28.73&
30.85&
28.61&
28.77&
30.92
\tabularnewline
\hline 
Girl & 
33.32&
\hlcell{33.45}&
33.33&
32.97&
33.33&
33.41&
33.29&
33.29&
33.36&
33.27
\tabularnewline
\hline 
Hat & 
30.84&
30.95&
32.39&
\hlcell{32.74}&
30.92&
31.04&
32.45&
30.96&
31.08&
32.47
\tabularnewline
\hline 
% Leaves & 
% 27.50&
% 27.66&
% 29.98&
% \hlcell{30.26}&
% 27.53&
% 27.70&
% 30.00&
% 27.51&
% 27.68&
% 30.00
% \tabularnewline
% \hline 
% Parrots & 
% 30.76&
% 30.89&
% 32.30&
% \hlcell{32.34}&
% 30.82&
% 30.97&
% 32.30&
% 30.81&
% 30.96&
% 32.30
% \tabularnewline
% \hline 
% Parthenon & 
% 26.93&
% 27.06&
% 28.45&
% \hlcell{28.86}&
% 27.05&
% 27.17&
% 28.54&
% 27.10&
% 27.22&
% 28.59
% \tabularnewline
% \hline 
% Plants & 
% 33.30&
% 33.50&
% 35.01&
% \hlcell{35.47}&
% 33.36&
% 33.56&
% 34.98&
% 33.39&
% 33.55&
% 34.97
% \tabularnewline
% \hline 
% Raccoon & 
% 29.16&
% 29.22&
% 30.20&
% \hlcell{30.52}&
% 29.35&
% 29.40&
% 30.34&
% 29.33&
% 29.41&
% 30.39
% \tabularnewline
% \hline
% Starfish & 
% 28.05&
% 28.19&
% 30.72&
% \hlcell{31.47}&
% 28.11&
% 28.28&
% 30.78&
% 28.13&
% 28.31&
% 30.82
% \tabularnewline
% \hline
Average & 
29.16&
29.30&
30.88&
\hlcell{31.21}&
29.24&
29.38&
30.93&
29.25&
29.39&
30.95
\tabularnewline
\hline 
\end{tabular}
\vspace{5pt}
    \caption{Deblurring with NLM : Recovery results obtained by RED and RED-PRO evaluated on the set of images, provided by the authors of \cite{romano2017little}, which corrupted with Gaussian blur kernel. Performance is measured in PSNR [dB] where highest result are highlighted.}
    \label{table:NLM-GB}
\end{table*}

 \begin{table*}
    \centering
\scriptsize
\renewcommand{\arraystretch}{1.3}
\begin{tabular}{|c|c|c|c|c|c|c|c|c|c|c|}
\hline 
\multirow{2}{*}{\backslashbox{\tiny Image}{\tiny Method}} & \multicolumn{3}{c|}{RED} & \multicolumn{1}{c|}{RED-PRO} & \multicolumn{3}{c|}{Relaxed RED-PRO $\delta=0$}& \multicolumn{3}{c|}{Relaxed RED-PRO $\delta>0$}\tabularnewline
\cline{2-11} \cline{3-11} \cline{4-11} \cline{5-11} \cline{6-11} \cline{7-11} \cline{8-11} \cline{9-11} \cline{10-11} \cline{11-11} 
 & FP & ADMM & SD & HSD & FP & ADMM & SD & FP & ADMM & SD\tabularnewline
\hline 
Bike & 
27.90&
27.90&
27.88&
27.36&
27.95&
27.94&
27.92&
\hlcell{28.02}&
\hlcell{28.02}&
28.00
\tabularnewline
\hline 
Butterfly & 
\hlcell{31.66}&
\hlcell{31.66}&
31.57&
30.55&
31.65&
31.64&
31.54&
\hlcell{31.66}&
31.65&
31.52
\tabularnewline
\hline 
Flower & 
32.05&
32.05&
32.05&
31.81&
32.05&
32.05&
32.04&
\hlcell{32.08}&
\hlcell{32.08}&
32.07
\tabularnewline
\hline 
Girl & 
\hlcell{34.44}&
\hlcell{34.44}&
34.10&
33.23&
34.40&
34.40&
34.00&
34.38&
34.38&
33.88
\tabularnewline
\hline 
Hat & 
33.29&
33.29&
\hlcell{33.30}&
33.07&
33.25&
33.25&
33.26&
33.26&
33.26&
33.27
\tabularnewline
\hline 
% Leaves & 
% 31.93&
% 31.93&
% \hlcell{31.95}&
% 30.79&
% 31.91&
% 31.91&
% 31.91&
% 31.92&
% 31.92&
% 31.87
% \tabularnewline
% \hline 
% Parrots & 
% \hlcell{33.33}&
% 33.32&
% 33.19&
% 32.59&
% 33.29&
% 33.29&
% 33.15&
% 33.31&
% 33.30&
% 33.15
% \tabularnewline
% \hline 
% Parthenon & 
% 29.15&
% 29.15&
% 29.14&
% 28.96&
% 29.17&
% 29.17&
% 29.16&
% \hlcell{29.23}&
% \hlcell{29.23}&
% 29.20
% \tabularnewline
% \hline 
% Plants & 
% 36.38&
% 36.38&
% \hlcell{36.39}&
% 35.69&
% 36.26&
% 36.27&
% 36.25&
% 36.22&
% 36.22&
% 36.16
% \tabularnewline
% \hline 
% Raccoon & 
% 31.07&
% 31.08&
% 31.03&
% 30.82&
% 31.13&
% 31.14&
% 31.08&
% \hlcell{31.20}&
% \hlcell{31.20}&
% 31.11
% \tabularnewline
% \hline
% Starfish & 
% 32.49&
% 32.49&
% 32.46&
% 31.92&
% 32.49&
% 32.49&
% 32.44&
% \hlcell{32.52}&
% \hlcell{32.52}&
% 32.45
% \tabularnewline
% \hline
Average & 
32.15&
32.15&
32.09&
31.53&
32.14&
32.14&
32.07&
\hlcell{32.16}&
\hlcell{32.16}&
32.06
\tabularnewline
\hline 
\end{tabular}
\vspace{5pt}
    \caption{Deblurring with TNRD: Recovery results obtained by RED and RED-PRO evaluated on the set of images, provided by the authors of \cite{romano2017little}, which corrupted with Gaussian blur kernel. Performance is measured in PSNR [dB] where highest result are highlighted.}
    \label{table:TNRD-GB}
\end{table*}

\begin{figure*}[ht]
 \centering
 \includegraphics[trim={0cm 0.5cm 0cm 1cm},clip,height = 8cm, width = 0.8\linewidth]{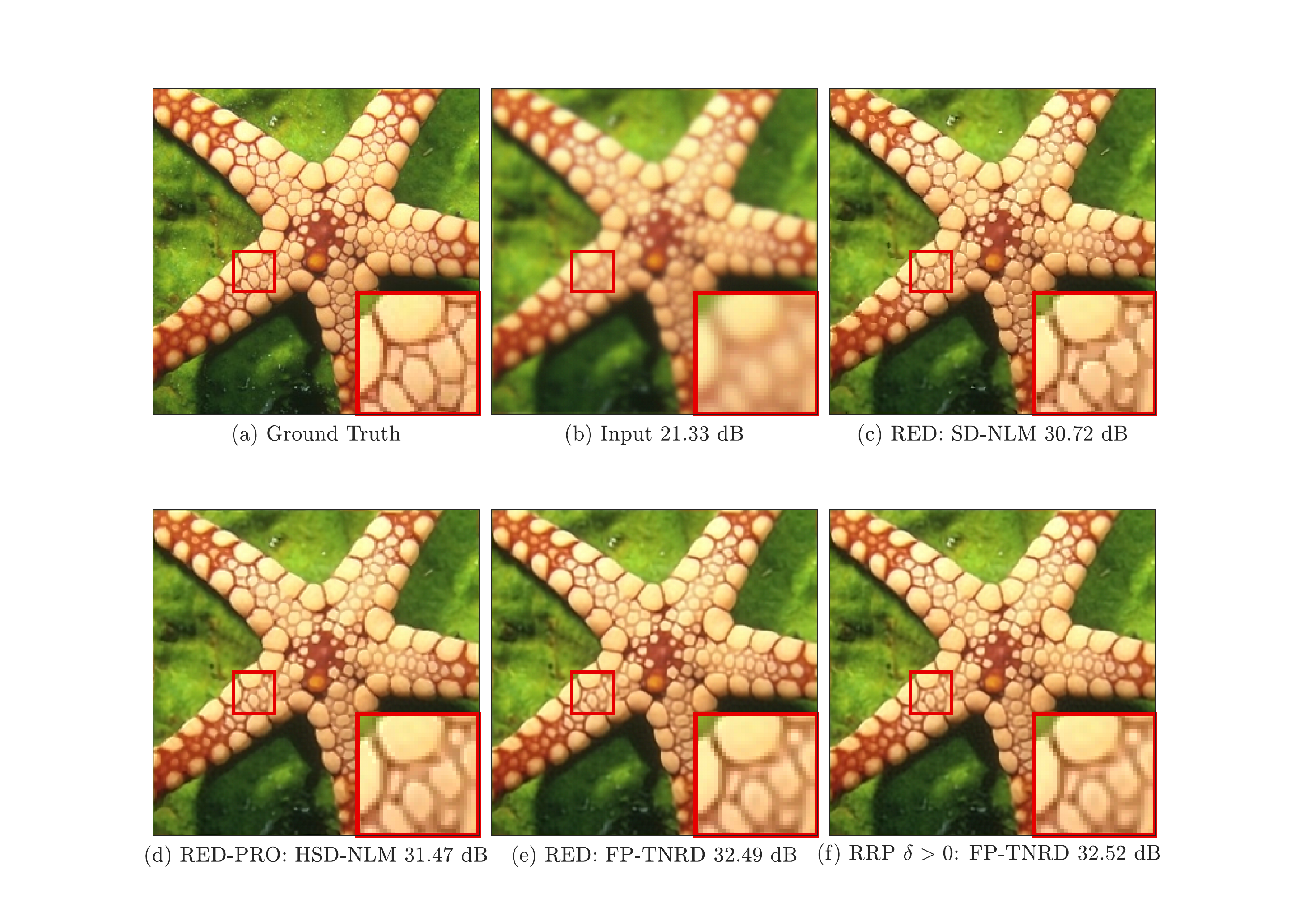}
\vspace{-5pt}
\caption{Visual comparison of deblurring the image \textbf{Starfish}, degraded by a Gaussian
PSF, along with the corresponding PSNR [dB] score.}
  \label{fig:gaussianblur}
 \end{figure*}

\subsection{Super-Resolution}
Similar to \cite{romano2017little}, we continue with super-resolution experiments where ground-truth high-resolution images are blurred with $7\times7$ Gaussian kernel with standard deviation of 1.6, downsampled by a factor of three at each axis and finally they are corrupted by an additive WGN with $\sigma=5$. To restore the images from their low-resolution versions, we convert them to YCbCr color-space and perform super-resolution where the chorma channels are up-scaled by bicubic interpolation and the luminance channel is processed by the discussed algorithms. The results are then transformed back to RGB, yielding the final super-resolved images. The parameter values are given by \cref{table:uniformparams} excluding $\sigma_f$, $\lambda$ and $m_1$ which are set to 3, 0.008 and 200 respectively.

We present the restoration results of the different variants of RED and RED-PRO frameworks when NLM and TNRD are used in \cref{table:NLM-SR} and \cref{table:TNRD-SR} respectively.
As observed, the RED-PRO approach leads to slightly better results than RED. This is visually supported by the recovered images shown in Fig.\,\ref{fig:superresolution} where we compare our approach to naive bicubic interpolation in addition to RED. Both quantitative and qualitative results point out the clear advantage of using TNRD over NLM.

 \begin{table*}
    \centering
\scriptsize
\renewcommand{\arraystretch}{1.3}
\begin{tabular}{|c|c|c|c|c|c|c|c|c|c|c|}
\hline 
\multirow{2}{*}{\backslashbox{\tiny Image}{\tiny Method}} & \multicolumn{3}{c|}{RED} & \multicolumn{1}{c|}{RED-PRO} & \multicolumn{3}{c|}{Relaxed RED-PRO $\delta=0$}& \multicolumn{3}{c|}{Relaxed RED-PRO $\delta>0$}\tabularnewline
\cline{2-11} \cline{3-11} \cline{4-11} \cline{5-11} \cline{6-11} \cline{7-11} \cline{8-11} \cline{9-11} \cline{10-11} \cline{11-11} 
 & FP & ADMM & SD & HSD & FP & ADMM & SD & FP & ADMM & SD\tabularnewline
\hline 
Bike & 
22.97&
23.04&
22.27&
23.10&
23.02&
23.08&
22.34&
23.05&
\hlcell{23.11}&
22.41
\tabularnewline
\hline 
Butterfly & 
25.27&
25.37&
24.23&
24.85&
25.30&
25.39&
24.29&
25.32&
\hlcell{25.40}&
24.37
\tabularnewline
\hline 
Flower & 
26.70&
26.82&
25.42&
\hlcell{27.21}&
26.79&
26.91&
25.48&
26.89&
27.02&
25.64
\tabularnewline
\hline 
Girl & 
30.75&
30.78&
30.03&
30.50&
30.84&
30.90&
30.13&
30.96&
\hlcell{31.02}&
30.19
\tabularnewline
\hline 
Hat & 
29.19&
29.27&
28.19&
28.73&
29.25&
29.34&
28.29&
29.28&
\hlcell{29.38}&
28.31
\tabularnewline
\hline 
% Leaves & 
% 24.60&
% 24.68&
% 23.77&
% 24.32&
% 24.64&
% 24.73&
% 23.79&
% 24.69&
% \hlcell{24.77}&
% 23.85
% \tabularnewline
% \hline 
% Parrots & 
% 28.35&
% 28.39&
% 27.84&
% 27.98&
% 28.42&
% 28.46&
% 27.87&
% 28.49&
% \hlcell{28.54}&
% 27.92
% \tabularnewline
% \hline 
% Parthenon & 
% 25.57&
% 25.64&
% 24.79&
% 25.71&
% 25.62&
% 25.71&
% 24.87&
% 25.73&
% \hlcell{25.78}&
% 24.97
% \tabularnewline
% \hline 
% Plants & 
% 30.38&
% 30.49&
% 28.96&
% 29.88&
% 30.47&
% 30.57&
% 28.95&
% 30.47&
% \hlcell{30.63}&
% 28.89
% \tabularnewline
% \hline 
% Raccoon & 
% 27.14&
% 27.20&
% 26.52&
% 27.30&
% 27.23&
% 27.29&
% 26.63&
% 27.28&
% \hlcell{27.33}&
% 26.62
% \tabularnewline
% \hline
% Starfish & 
% 25.90&
% 26.01&
% 24.88&
% \hlcell{26.69}&
% 25.98&
% 26.09&
% 24.98&
% 26.09&
% 26.19&
% 25.13
% \tabularnewline
% \hline
Average & 
26.98&
27.06&
26.08&
26.94&
27.05&
27.13&
26.15&
27.11&
\hlcell{27.20}&
26.21
\tabularnewline
\hline 
\end{tabular}
\vspace{5pt}
    \caption{Super-Resolution with NLM: Recovery results obtained by RED and RED-PRO evaluated on the set of images, provided by the authors of \cite{romano2017little}, which corrupted with Gaussian blur kernel and downsampled by a factor of 3 in each axis. Performance is measured in PSNR [dB] where highest result are highlighted.}
    \label{table:NLM-SR}
\end{table*}

 \begin{table*}
    \centering
\scriptsize
\renewcommand{\arraystretch}{1.3}
\begin{tabular}{|c|c|c|c|c|c|c|c|c|c|c|}
\hline 
\multirow{2}{*}{\backslashbox{\tiny Image}{\tiny Method}} & \multicolumn{3}{c|}{RED} & \multicolumn{1}{c|}{RED-PRO} & \multicolumn{3}{c|}{Relaxed RED-PRO $\delta=0$}& \multicolumn{3}{c|}{Relaxed RED-PRO $\delta>0$}\tabularnewline
\cline{2-11} \cline{3-11} \cline{4-11} \cline{5-11} \cline{6-11} \cline{7-11} \cline{8-11} \cline{9-11} \cline{10-11} \cline{11-11} 
 & FP & ADMM & SD & HSD & FP & ADMM & SD & FP & ADMM & SD\tabularnewline
\hline 
Bike & 
23.97&
23.96&
24.04&
23.22&
24.00&
23.98&
24.06&
24.03&
24.01&
\hlcell{24.09}
\tabularnewline
\hline 
Butterfly & 
27.26&
27.22&
\hlcell{27.37}&
24.95&
27.23&
27.19&
\hlcell{27.37}&
27.11&
27.06&
27.32
\tabularnewline
\hline 
Flower & 
28.24&
28.24&
28.23&
27.11&
28.26&
28.26&
28.25&
\hlcell{28.30}&
\hlcell{28.30}&
28.28
\tabularnewline
\hline 
Girl & 
32.08&
32.08&
32.08&
29.93&
\hlcell{32.09}&
\hlcell{32.09}&
32.08&
32.05&
32.06&
32.05
\tabularnewline
\hline 
Hat & 
30.35&
30.35&
30.36&
28.21&
\hlcell{30.37}&
30.36&
\hlcell{30.37}&
30.36&
30.35&
\hlcell{30.37}
\tabularnewline
\hline 
% Leaves & 
% 26.12&
% 26.10&
% 26.17&
% 24.47&
% 26.11&
% 26.09&
% 26.22&
% 26.06&
% 26.02&
% \hlcell{26.25}
% \tabularnewline
% \hline 
% Parrots & 
% 29.42&
% 29.41&
% 29.43&
% 27.72&
% 29.44&
% 29.44&
% \hlcell{29.50}&
% 29.44&
% 29.43&
% 29.46
% \tabularnewline
% \hline 
% Parthenon & 
% 26.52&
% 26.51&
% \hlcell{26.54}&
% 25.58&
% 26.52&
% 26.52&
% \hlcell{26.54}&
% 26.48&
% 26.48&
% 26.51
% \tabularnewline
% \hline 
% Plants & 
% 31.77&
% 31.77&
% 31.79&
% 29.44&
% 31.79&
% 31.79&
% \hlcell{31.82}&
% 31.76&
% 31.75&
% 31.77
% \tabularnewline
% \hline 
% Raccoon & 
% 27.97&
% 27.97&
% 27.98&
% 27.20&
% 28.01&
% 28.01&
% 28.01&
% 28.02&
% 28.02&
% \hlcell{28.03}
% \tabularnewline
% \hline
% Starfish & 
% 27.94&
% 27.93&
% 27.96&
% 26.83&
% 27.96&
% 27.95&
% 27.99&
% 27.98&
% 27.97&
% \hlcell{28.04}
% \tabularnewline
% \hline
Average & 
28.33&
28.32&
28.36&
26.79&
28.34&
28.33&
\hlcell{28.38}&
28.33&
28.31&
\hlcell{28.38}
\tabularnewline
\hline 
\end{tabular}
\vspace{5pt}
    \caption{Super-Resolution with TNRD Recovery results obtained by RED and RED-PRO evaluated on the set of images, provided by the authors of \cite{romano2017little}, which corrupted with Gaussian blur kernel and downsampled by a factor of 3 in each axis. Performance is measured in PSNR [dB] where highest result are highlighted.}
    \label{table:TNRD-SR}
\end{table*}

\begin{figure*}[ht]
 \centering
 \includegraphics[trim={0cm 0.5cm 0cm 1cm},clip,height = 8cm, width = 0.8\linewidth]{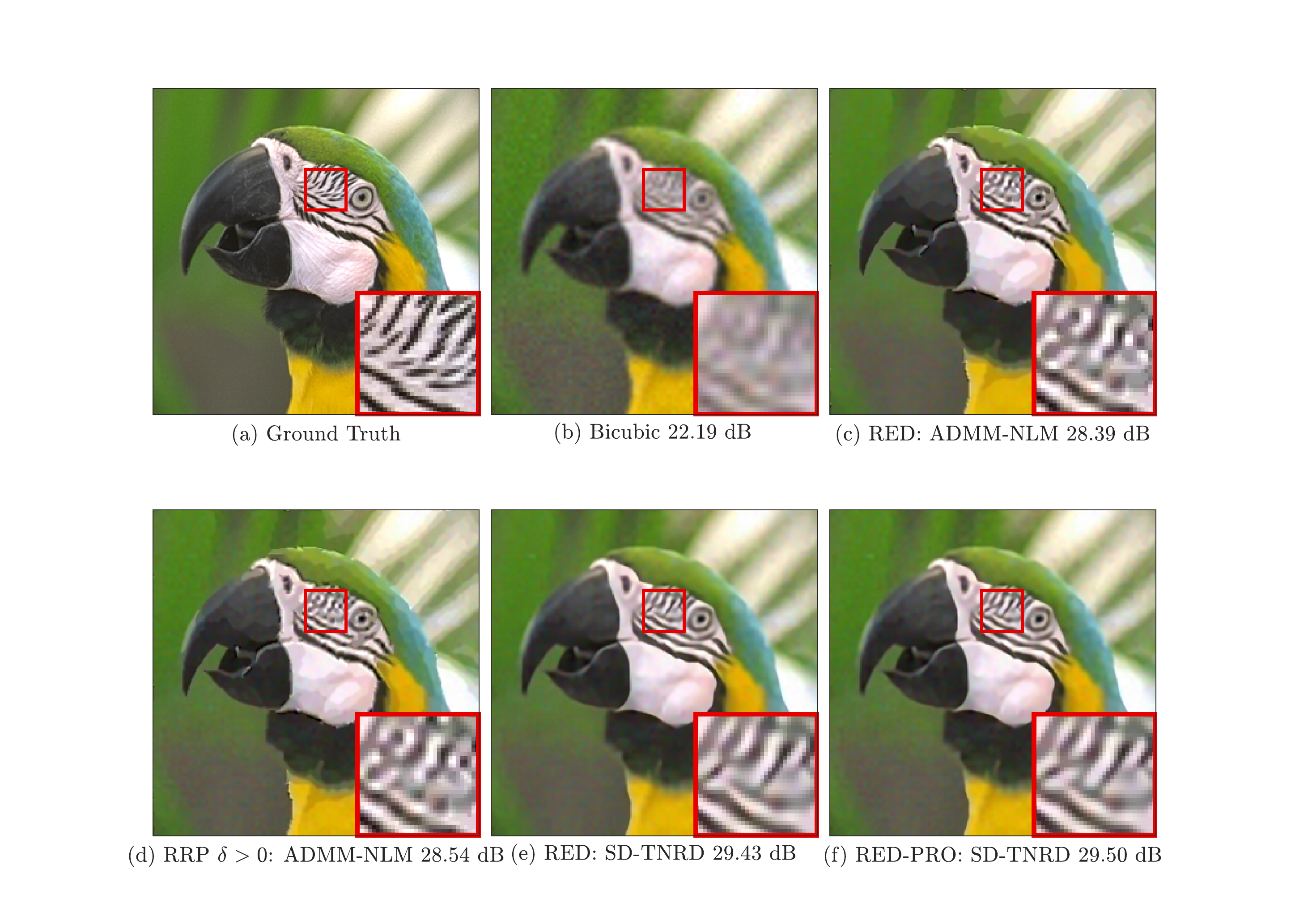}
\vspace{-5pt}
\caption{Visual comparison of super-resolutiin by a factor of 3 at each axis for the image \textbf{Parrots}, along with the corresponding PSNR [dB] score.}
  \label{fig:superresolution}
 \end{figure*}

\section{Conclusion}
\label{sec:conclusion}

The massive advancement in image denoising has led in recent years to the development of PnP and RED frameworks, which leverage the power of denoisers to solve other challenging inverse problems. These approaches have shown remarkable performance, leading to state-of-the-art results in unsupervised tasks such as image deblurring and super-resolution. However, PnP lacks an underlying global objective function, while RED demands restrictive conditions that are not met by many popular denoisers. 

In this paper, we address the above limitations. 
% As a start, we formulated an inverse problem where the Rockafellar is used as regularization. We show that when the denoiser residual is maximal cyclically monotone operator, then the RED algorithms iteratively solve our proposed problem. Thus, we provide global convergence guarantees of the RED framework for the case where the denoiser is non-differentiable. 
% In addition, we prove that a sufficient requirement is the denoiser to be cyclically firmly nonexpansive. However, a key question is whether this condition can be relaxed to other broader assumption.  
The main contribution of this work is the introduction of the RED-PRO framework, which relies on the fixed-point set of demicontractive denoisers. We formulate a general inverse problem that exploits the convexity of the fixed-point set and utilize it as a regularization. Then, we derive several iterative methods for solving the proposed problems and prove their convergence to global optimum. We discuss the relation of the presented framework to PnP and RED, including the connection of demicontractivity to popular assumptions. This shows that RED-PRO joins ideas from both PnP and RED, bridging between the two framework. Thus, RED-PRO provides flexibility in choosing the denoiser and is guaranteed to converge to stable solutions.
Furthermore, we present relaxed variants of RED-PRO related to the approximated fixed-point set, which is a much broader domain but not convex in general. Therefore, an open research direction is to provide the conditions under which the latter set is convex and derive different means to exploit it as a regularization. Finally, we perform experiments in image deblurring and super-resolution to demonstrate the validity of the regularization strategy, showing competitive and even improved performance in comparison to the original RED framework.

We conclude this section by briefly discussing key questions for future research. This study relies on the condition that the denoiser is demicontractive. Currently, determining whether an arbitrary denoiser satisfies this condition remains an open challenge. Certainly, any nonexpansive denoiser is also demicontractive,  yet, it is unclear how to verify the nonexpansiveness of a general function. Therefore, it is of utmost practical importance to derive an efficient technique for testing the demicontractivity of a given mapping. 
An alternative approach can be using learned mappings enforced to satisfy this condition. Similar concept is applied in the field of deep learning where neural networks are constrained by a given  Lipschitz constant \cite{gouk2018regularisation} to obtain regularization during training and robustness to adversarial attacks.

In addition, as in \cite{romano2017little}, the RED-PRO framework exploits denoisers that are designed for optimized distortion measure, such as PSNR. However, this might lead to a low perceptual quality of the resultant images \cite{blau2018perception}. Hence, an interesting research direction is the deployment of denoisers based on some perceptual loss. Employing such denoisers in the RED-PRO scheme might result in solutions that better address the perception-distortion tradeoff \cite{blau2018perception}. Finally, the family of demicontractive mappings covers a broad range of functions, hence, a promising study direction is incorporating within the RED-PRO framework powerful mechanisms, different from denoisers, which encapsulate prior knowledge on the unknown image.

\appendix

% \section{Proof of \cref{theo:averagedemi}}
% \label{app:averagedemi}
% First, it is clear that $\bigcap_i \fix(T_i)\subseteq \fix(T)$. To prove the reverse inclusion, let $\textbf{z}\in\bigcap_i \fix(T_i)$. Then, by \eqref{eq:demi2} we have 
% \begin{align*}
%      \langle T_i(\textbf{x})-\textbf{x},\,\textbf{x}-\textbf{z}\rangle&\leq-\frac{1-d}{2}\norm{T_i(\textbf{x})-\textbf{x}}^2\quad\forall \textbf{x}\in\bb{R}^n.
% \end{align*}
% Now, let $\textbf{x}\in\fix(T)$. Then, we get
% \begin{align*}
%     0&=\Langle T(\textbf{x})-\textbf{x},\,\textbf{x}-\textbf{z}\Rangle
%     \\ &=\sum_i w_i\Langle T_i(\textbf{x})-\textbf{x},\,\textbf{x}-\textbf{z}\Rangle
%     \\ &\leq -\sum_i w_i\frac{1-d}{2}\norm{T_i(\textbf{x})-\textbf{x}}^2.
% \end{align*}
% The above suggests that $\sum_i w_i\norm{T_i(\textbf{x})-\textbf{x}}^2=0$ which in turn implies that $T_i(\textbf{x})=\textbf{x}$ for all $i$, i.e., $\textbf{x}\in \bigcap_i \fix(T_i)$. Next, for any $\textbf{x}\in\bb{R}^n$ and $\textbf{z}\in\fix(T)$ it holds that
% \begin{align*}
%     \Langle \textbf{x}-T(\textbf{x}),\,\textbf{x}-\textbf{z}\Rangle&= \sum_i w_i\Langle \textbf{x}-T_i(\textbf{x}),\,\textbf{x}-\textbf{z}\Rangle \\
%     &\geq \sum_i w_i\frac{1-d_i}{2}\norm{\textbf{x}-T_i(\textbf{x})}^2 \\
%     &\geq \frac{1-d}{2} \sum_i w_i\norm{\textbf{x}-T_i(\textbf{x})}^2 \\
%     &\geq \frac{1-d}{2} \norm{\textbf{x}-T(\textbf{x})}^2,
% \end{align*}
% where the last inequality is due to \eqref{eq:inequality}. Hence, we conclude that $T$ is $d$-demicontractive.

\section{Proofs}
\label{app:proofs}

\subsection{Proof of \cref{prop:demi2quasi}}
\label{app:demi2quasi}
Given arbitrary $\textbf{x}\in\bb{R}^n$ and $\textbf{z}\in\fix(T)$, we have
\begin{align*}
    \norm{T_\alpha(\textbf{x})-\textbf{z}}^2&=\norm{(\textbf{x}-\textbf{z})+\alpha\Big(T(\textbf{x})-\textbf{x}\Big)}^2 \\
    &=\norm{\textbf{x}-\textbf{z}}^2-2\alpha\Langle \textbf{x}-\textbf{z},\textbf{x}-T(\textbf{x})\Rangle+\alpha^2\norm{\textbf{x}-T(\textbf{x})}^2.
\end{align*}
From \eqref{eq:demi2}, we get
\begin{equation*}
    \norm{T_\alpha(\textbf{x})-\textbf{z}}^2\leq \norm{(\textbf{x}-\textbf{z})}^2-\alpha(1-d-\alpha)\norm{T(\textbf{x})-\textbf{x}}^2.
\end{equation*}
Note that $\norm{T_\alpha(\textbf{x})-\textbf{x}}=\alpha\norm{T(\textbf{x})-\textbf{x}}$, implying that
\begin{equation*}
    \norm{T_\alpha(\textbf{x})-\textbf{z}}^2\leq \norm{(\textbf{x}-\textbf{z})}^2-\frac{(1-d-\alpha)}{\alpha}\norm{T_\alpha(\textbf{x})-\textbf{x}}^2,
\end{equation*}
which completes the proof.

\subsection{Proof of \cref{prop:demi2bound}}
\label{app:demi2bound}
For any bounded pair of points $\textbf{x},\textbf{z}\in[a,b]^n$, we have
\begin{equation*}
    \frac{1}{n}\norm{\textbf{x}-\textbf{z}}^2\leq (b-a)^2.
\end{equation*}
In addition, by \cref{prop:demi2quasi} we obtain
\begin{equation*}
    \norm{f_\alpha(\textbf{x})-\textbf{x}}^2\leq \frac{\alpha}{1-d-\alpha}\norm{\textbf{x}-\textbf{z}}^2,\;\forall \textbf{x}\in[a,b]^n,\,\textbf{z}\in\fix(f).
\end{equation*}
Hence, we get
\begin{equation*}
    \frac{1}{n}\norm{f_\alpha(\textbf{x})-\textbf{x}}^2\leq \frac{1}{n}\cdot\frac{\alpha}{1-d-\alpha}\norm{\textbf{x}-\textbf{z}}^2\leq \frac{\alpha}{1-d-\alpha}(b-a)^2,
\end{equation*}
which completes the proof.

\subsection{Proof of \cref{theo:diminishing}}
\label{app:diminishing}
Theorem\,5 of \cite{yamada2004hybrid} proves the convergence of \cref{algo:REDPROHSDM} to a point in \begin{equation*}
    \Gamma\triangleq\Big\{\textbf{x}\in \fix(f)\;|\;\ell(\textbf{x};y)=\underset{\textbf{x}\in \fix(f)}{\min}\;\ell(\textbf{x};y)\Big\},
\end{equation*}
under the assumptions that (A2) holds, $f_\alpha(\cdot)$ is quasi-nonexpansive quasi-shrinking mapping (\cite{censor1998interior}, Definition 4.3) and $\nabla\ell$ is paramonotone over $\fix(f)$, i.e.,
\begin{equation}
    \langle\nabla\ell(\textbf{u})-\nabla\ell(\textbf{v}), \textbf{u}-\textbf{v}\rangle = 0\;\Leftrightarrow\; \nabla\ell(\textbf{u})=\nabla\ell(\textbf{v}),\; \forall \textbf{u},\textbf{v}\in\fix(f).
    \tag{\cite{censor1998interior}, Definition 11}
\end{equation}
When (A1) is satisfied, then, $f_\alpha(\cdot)$ is continuous and strongly quasi-nonexpansive, hence, it is quasi-shrinking (See \cite{cegielski2014properties}, Corollary 4.2 and Proposition 4.4). Moreover, since $\ell(\cdot)$ is convex over the set $\fix(f)$, we have that $\nabla\ell$ paramonotone over $\fix(f)$ (See \cite{censor1998interior}, Lemma 12). Therefore, the conditions of Theorem\,5 in \cite{yamada2004hybrid} hold and the sequence $\{\textbf{x}_k\}_{k\in\bb{N}}$ is guaranteed to converge to an optimal solution of (\ref{eq:REDPRO}).

\subsection{Proof of \cref{theo:constant}}
\label{app:constant}
We start with the following lemma:
\begin{lemma}[Proposition 1 \cite{yamada2004hybrid}]
Let $T_1$ and $T_2$ be two strongly quasi-nonexpansive mappings with constants $\gamma_1$ and $\gamma_2$ respectively such that $\fix(T_1)\cap\fix(T_2)\neq \emptyset$. Then, the composition $T\triangleq T_1 \circ T_2$ is $\gamma$-strongly quasi-nonexpansive where $\gamma=\frac{\gamma_1\gamma_2}{\gamma_1+\gamma_2}$ and it satisfies $\fix(T)=\fix(T_1)\cap\fix(T_2)$. 
\label{lemma:composition}
\end{lemma}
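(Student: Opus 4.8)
The plan is to establish the two assertions separately, treating the fixed-point identity $\fix(T)=\fix(T_1)\cap\fix(T_2)$ first, since the claimed strong quasi-nonexpansivity estimate quantifies over $\mathbf{z}\in\fix(T)$ and therefore depends on knowing this set explicitly. The inclusion $\fix(T_1)\cap\fix(T_2)\subseteq\fix(T)$ is immediate: if $T_1(\mathbf{z})=T_2(\mathbf{z})=\mathbf{z}$, then $T(\mathbf{z})=T_1(T_2(\mathbf{z}))=T_1(\mathbf{z})=\mathbf{z}$. For the reverse inclusion I would fix a common fixed point $\mathbf{z}^\ast\in\fix(T_1)\cap\fix(T_2)$, which exists by hypothesis, and take any $\mathbf{x}\in\fix(T)$, writing $\mathbf{u}\triangleq T_2(\mathbf{x})$ so that $T_1(\mathbf{u})=\mathbf{x}$. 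Applying strong quasi-nonexpansivity of $T_2$ and then of $T_1$ at $\mathbf{z}^\ast$ gives $\norm{\mathbf{u}-\mathbf{z}^\ast}^2\leq\norm{\mathbf{x}-\mathbf{z}^\ast}^2-\gamma_2\norm{\mathbf{u}-\mathbf{x}}^2$ and $\norm{\mathbf{x}-\mathbf{z}^\ast}^2\leq\norm{\mathbf{u}-\mathbf{z}^\ast}^2-\gamma_1\norm{\mathbf{x}-\mathbf{u}}^2$. Chaining these cancels the $\norm{\cdot-\mathbf{z}^\ast}^2$ terms and leaves $0\leq-(\gamma_1+\gamma_2)\norm{\mathbf{x}-\mathbf{u}}^2$, so that $\gamma_1,\gamma_2>0$ forces $\mathbf{u}=\mathbf{x}$, i.e. $T_2(\mathbf{x})=\mathbf{x}$; then $\mathbf{x}=T_1(\mathbf{u})=T_1(\mathbf{x})$, placing $\mathbf{x}$ in $\fix(T_1)\cap\fix(T_2)$.

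For the strong quasi-nonexpansivity of $T$, I would fix arbitrary $\mathbf{x}\in\bb{R}^n$ and $\mathbf{z}\in\fix(T)=\fix(T_1)\cap\fix(T_2)$, set $\mathbf{u}=T_2(\mathbf{x})$ and $T(\mathbf{x})=T_1(\mathbf{u})$, and run the same two defining inequalities, now anchored at the common point $\mathbf{z}$. Substituting the $T_2$-estimate into the $T_1$-estimate yields
\[
\norm{T(\mathbf{x})-\mathbf{z}}^2\leq\norm{\mathbf{x}-\mathbf{z}}^2-\gamma_2\norm{\mathbf{u}-\mathbf{x}}^2-\gamma_1\norm{T(\mathbf{x})-\mathbf{u}}^2.
\]
Writing $\mathbf{a}\triangleq\mathbf{u}-\mathbf{x}=T_2(\mathbf{x})-\mathbf{x}$ and $\mathbf{b}\triangleq T(\mathbf{x})-\mathbf{u}=T_1(\mathbf{u})-\mathbf{u}$, so that the total displacement decomposes as $T(\mathbf{x})-\mathbf{x}=\mathbf{a}+\mathbf{b}$, it remains only to prove the purely algebraic bound $\gamma_2\norm{\mathbf{a}}^2+\gamma_1\norm{\mathbf{b}}^2\geq\gamma\norm{\mathbf{a}+\mathbf{b}}^2$ with $\gamma=\frac{\gamma_1\gamma_2}{\gamma_1+\gamma_2}$; substituting this into the displayed inequality then gives exactly $\norm{T(\mathbf{x})-\mathbf{z}}^2\leq\norm{\mathbf{x}-\mathbf{z}}^2-\gamma\norm{T(\mathbf{x})-\mathbf{x}}^2$.

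This last inequality is where the precise value of the constant is generated, and I expect it to be the main obstacle. I would obtain it from the vector Cauchy--Schwarz estimate $\frac{\norm{\mathbf{a}}^2}{s}+\frac{\norm{\mathbf{b}}^2}{t}\geq\frac{\norm{\mathbf{a}+\mathbf{b}}^2}{s+t}$, valid for all $s,t>0$, which after clearing denominators reduces to $\frac{t}{s}\norm{\mathbf{a}}^2+\frac{s}{t}\norm{\mathbf{b}}^2\geq2\langle\mathbf{a},\mathbf{b}\rangle$, an immediate consequence of AM--GM together with $\langle\mathbf{a},\mathbf{b}\rangle\leq\norm{\mathbf{a}}\norm{\mathbf{b}}$. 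Choosing $s=1/\gamma_2$ and $t=1/\gamma_1$ converts the left-hand side into $\gamma_2\norm{\mathbf{a}}^2+\gamma_1\norm{\mathbf{b}}^2$ and the right-hand side into $\frac{\gamma_1\gamma_2}{\gamma_1+\gamma_2}\norm{\mathbf{a}+\mathbf{b}}^2=\gamma\norm{\mathbf{a}+\mathbf{b}}^2$, which is precisely the desired bound.

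The only delicate standing assumption, used in both parts, is $\gamma_1+\gamma_2>0$; this is harmless, since the composition formula for $\gamma$ already presupposes it, and in the intended application both $T_1$ and $T_2$ are strongly quasi-nonexpansive relaxations of demicontractive denoisers with strictly positive constants. I would note in passing that the argument uses strong quasi-nonexpansivity essentially: mere quasi-nonexpansivity ($\gamma=0$) collapses the telescoping cancellation in the first part and would not pin down $\fix(T)$.
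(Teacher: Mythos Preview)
Your proof is correct. The paper does not supply its own proof of this lemma; it simply quotes the result as Proposition~1 of \cite{yamada2004hybrid} and uses it as a black box in the proof of \cref{theo:constant}. Your argument is a clean, self-contained derivation: the telescoping of the two defining inequalities to pin down $\fix(T)$, followed by the weighted power-mean inequality $\frac{\norm{\mathbf{a}}^2}{s}+\frac{\norm{\mathbf{b}}^2}{t}\geq\frac{\norm{\mathbf{a}+\mathbf{b}}^2}{s+t}$ to extract the composite constant, is exactly the standard route and matches what one finds in the cited reference. One minor quibble with your closing remark: in the paper's application the two mappings are $f_\alpha$ and the gradient step $G_\ell=Id-\mu\nabla\ell$, not two relaxed denoisers, but this does not affect the argument since both are strongly quasi-nonexpansive with strictly positive constants under the stated hypotheses.
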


\noindent By \cref{prop:demi2quasi}, the mapping $f_\alpha(\cdot)$ is strongly quasi-nonexpansive. In addition, for $
\mu\in (0, \frac{2}{L})$ we have that $G_\ell(\cdot)$ is averaged nonexpansive operator (See \cite{yamada2011minimizing}, Remark 17.16), in particular it is strongly quasi-nonexpansive. 
Define $T(\textbf{x})\triangleq f\Big(\textbf{x}-\mu\nabla\ell (\textbf{x})\Big)$, then by \cref{lemma:composition}, $T(\textbf{x})$ is continuous and $\gamma$-strongly quasi-nonexpansive for some $\gamma>0$ with $\fix(T)=\fix(f)\cap\fix(G_\ell)\neq\emptyset$. Furthermore, we can rewrite \cref{algo:REDPROHSDM} as $\textbf{x}_{k+1}=T(\textbf{x}_k)$. Thus, for any $\textbf{x}^*\in\fix(T)$ it holds that
\begin{equation*}
    \norm{\textbf{x}_{k+1}-\textbf{x}^*}^2=\norm{T(\textbf{x}_k)-\textbf{x}^*}^2\leq\norm{\textbf{x}_k-\textbf{x}^*}^2-\gamma\norm{\textbf{x}_{k+1}-\textbf{x}_k}^2\leq\norm{\textbf{x}_k-\textbf{x}^*}^2.
\end{equation*}
Therefore, the sequence $\{\textbf{x}_k\}_{k\in\bb{N}}$ is Fej\'er monotone w.r.t $\fix(T)$ and is bounded. Moreover, we can rewrite the above inequality as 
\begin{equation*}
 \norm{\textbf{x}_{k+1}-\textbf{x}_k}^2\leq\frac{1}{\gamma}\Big(\norm{\textbf{x}_k-\textbf{x}^*}^2-\norm{\textbf{x}_{k+1}-\textbf{x}^*}^2\Big)   
\end{equation*}.
Summing the above inequality for $k=0,...,t$ we get that
\begin{equation*}
    \sum_{k=0}^t \norm{\textbf{x}_{k+1}-\textbf{x}_k}^2 \leq \frac{1}{\gamma}\Big(\norm{\textbf{x}_0-\textbf{x}^*}^2-\norm{\textbf{x}_{t+1}-\textbf{x}^*}^2\Big) \leq   \frac{1}{\gamma}\norm{\textbf{x}_0-\textbf{x}^*}^2.
\end{equation*}
Thus, $\{\textbf{x}_k\}_{k\in\bb{N}}$ is a Cauchy sequence and $\textbf{x}_k\rightarrow \textbf{x}$ for some $\textbf{x}\in\bb{R}^n$. Since $\textbf{x}_{k+1}\rightarrow \textbf{x}$ and $\textbf{x}_{k+1}=T(\textbf{x}_k)\rightarrow T(\textbf{x})$, we have $\textbf{x}\in\fix(T)$ which in turn implies that $\textbf{x}\in\fix(f)$ and $\nabla\ell(\textbf{x})=0$, completing the proof.

\subsection{Proof of }
\label{app:Halpern}
The update rule for computing the projection onto the fixed-point set is
\begin{equation}
     \textbf{x}_{j+1}=f_\alpha\Big(t_j\textbf{x}_0+(1-t_j)\textbf{x}_j\Big).
\end{equation}
This can be rewritten in two stages as
\begin{align*}
    &\textbf{v}_{j+1}=t_j\textbf{x}_0+(1-t_j)\textbf{x}_j, \\
    &\textbf{x}_{j+1}=f_\alpha(\textbf{v}_{j+1}).
\end{align*}
Hence, we can rewrite the update rule with respect to $v_j$ as
\begin{equation*}
    \textbf{v}_{j+1}=t_j\textbf{x}_0+(1-t_j)f_\alpha(\textbf{v}_j).
\end{equation*}
Finally, by interchanging the roles of $x_j$ and $v_j$ we obtain \cref{eq:Halpern}.

\subsection{Proof of Theorem \ref{theo:dilation}}
\label{app:dilation}
Consider $\alpha\in(0,\frac{1-d}{2})$ and define the averaged operator $f_\alpha$. By \cref{prop:demi2quasi}, $f_\alpha$ is $\gamma$-strongly quasi-nonexpansive with $\gamma>1$. Hence, according to \cref{prop:displacemenbound} we have $\norm{\textbf{x}-f_\alpha(\textbf{x})}\leq\norm{\textbf{x}-P_{\fix(f)}(\textbf{x})}$. In addition, it holds that
\begin{equation*}
    \alpha\norm{\textbf{x}-f(\textbf{x})}=\norm{\textbf{x}-f_\alpha(\textbf{x})}.
\end{equation*}
Therefore, for any $\textbf{x}\in B_\delta(f)$ where $\delta\in[0,\alpha\epsilon]$, we have
\begin{align*}
    \norm{\textbf{x}-f(\textbf{x})}=\frac{1}{\alpha}\norm{\textbf{x}-f_\alpha(\textbf{x})} 
    \leq \frac{1}{\alpha}\norm{\textbf{x}-P_{\fix(f)}(\textbf{x})} 
    \leq \frac{\delta}{\alpha}\leq \epsilon,
\end{align*}
which implies that $\textbf{x}\in \fix_\epsilon(f)$.

%==================================================
%==================================================

\section{RED Revisited}
\label{subsec:REDrevisited}

Here, we complement the work in \cite{reehorst2018regularization} by considering the RED framework and studying the case in which the denoiser is non-differentiable. We do so by reformulating the RED optimization problem using the Rockafellar function \cite{rockafellar2009variational}. To make this part as self-contained as possible, we first recall essential results from variational analysis \cite{rockafellar2009variational}.

\begin{definition}
A continuous mapping $A$ is \textit{maximal cyclically monotone} if for every set of points $\{\textbf{x}_1,...,\textbf{x}_m\}$ in $\bb{R}^n$ and any $m\geq 2$ it holds that
\begin{equation*}
    \sum_{i=1}^m \Langle A(\textbf{x}_i),\textbf{x}_{i+1}-\textbf{x}_i)\Rangle \leq 0,\quad\text{where }\textbf{x}_{m+1}=\textbf{x}_1.    
\end{equation*}
\end{definition}

\begin{definition}
We say $f$ is \textit{cyclically firmly nonexpansive} if for every set of points $\{\textbf{x}_1,...,\textbf{x}_m\}$ in $\bb{R}^n$ and any $m\geq 2$ it holds that
\begin{equation*}
    \sum_{i=1}^m \Langle \textbf{x}_i-f(\textbf{x}_i),f(\textbf{x}_i)-f(\textbf{x}_{i+1})\Rangle \geq 0,\quad\text{where }\textbf{x}_{m+1}=\textbf{x}_1.      
\end{equation*}
\end{definition}

The following proposition connects cyclically firmly nonexpansive mappings with cyclically maximal monotone operators. This relationship shall allow us to establish the convergence of RED algorithms.
\begin{proposition}[\cite{bauschke2012firmly}]
Let $f:\bb{R}^n\rightarrow \bb{R}^n$ be cyclically firmly nonexpansive and consider the displacement mapping $A=Id-f$. Then, $A$ is maximal cyclically monotone.
\label{theo:cfne2mcm}
\end{proposition}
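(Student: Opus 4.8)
The plan is to prove the two parts of the conclusion in turn: first that the displacement map $A = Id - f$ is cyclically monotone, and then that it is maximal within that class. For the monotonicity part, fix $m \ge 2$ and points $\textbf{x}_1,\dots,\textbf{x}_m \in \bb{R}^n$ with the cyclic convention $\textbf{x}_{m+1} = \textbf{x}_1$, and abbreviate $\textbf{a}_i \triangleq A(\textbf{x}_i) = \textbf{x}_i - f(\textbf{x}_i)$, so that $\textbf{x}_i = f(\textbf{x}_i) + \textbf{a}_i$. The crux is the elementary splitting
\[
    \textbf{x}_{i+1} - \textbf{x}_i = \big(f(\textbf{x}_{i+1}) - f(\textbf{x}_i)\big) + \big(\textbf{a}_{i+1} - \textbf{a}_i\big),
\]
which, after summing $\langle \textbf{a}_i, \textbf{x}_{i+1} - \textbf{x}_i\rangle$ over $i$, expresses the quantity $\sum_{i=1}^m \langle A(\textbf{x}_i), \textbf{x}_{i+1} - \textbf{x}_i\rangle$ as $S_1 + S_2$, where $S_1 \triangleq \sum_{i=1}^m \langle \textbf{a}_i, f(\textbf{x}_{i+1}) - f(\textbf{x}_i)\rangle$ and $S_2 \triangleq \sum_{i=1}^m \langle \textbf{a}_i, \textbf{a}_{i+1} - \textbf{a}_i\rangle$.

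The term $S_1$ is exactly $-\sum_{i=1}^m \langle \textbf{x}_i - f(\textbf{x}_i), f(\textbf{x}_i) - f(\textbf{x}_{i+1})\rangle$, so the cyclic firm nonexpansiveness hypothesis gives $S_1 \le 0$ directly. For $S_2$ I would use the polarization identity $2\langle \textbf{a}_i, \textbf{a}_{i+1} - \textbf{a}_i\rangle = \norm{\textbf{a}_{i+1}}^2 - \norm{\textbf{a}_i}^2 - \norm{\textbf{a}_{i+1} - \textbf{a}_i}^2$; summing over the cycle, the first two terms telescope to zero, leaving $S_2 = -\tfrac{1}{2}\sum_{i=1}^m \norm{\textbf{a}_{i+1} - \textbf{a}_i}^2 \le 0$. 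Hence $\sum_{i=1}^m \langle A(\textbf{x}_i), \textbf{x}_{i+1} - \textbf{x}_i\rangle = S_1 + S_2 \le 0$ for every cycle, i.e., $A$ is cyclically monotone.

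For maximality, I would first observe that the $m = 2$ instance of the hypothesis reads $\langle \textbf{x}_1 - \textbf{x}_2,\, f(\textbf{x}_1) - f(\textbf{x}_2)\rangle \ge \norm{f(\textbf{x}_1) - f(\textbf{x}_2)}^2$, so $f$ is firmly nonexpansive, hence nonexpansive and in particular continuous; therefore $A = Id - f$ is continuous with full domain $\bb{R}^n$. A single-valued, everywhere-defined, continuous monotone operator on $\bb{R}^n$ is maximal monotone (a standard fact from variational analysis, e.g.\ \cite{rockafellar2009variational}); and since any cyclically monotone extension of $A$ would in particular be a monotone extension, the maximal monotonicity of $A$ forces $A$ to admit no proper cyclically monotone extension, i.e., $A$ is maximal cyclically monotone. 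The computation itself is routine; the only delicate points are the index bookkeeping in the cyclic sums (so that $\sum_i \norm{\textbf{a}_{i+1}}^2 = \sum_i \norm{\textbf{a}_i}^2$ is legitimate) and the step from ``cyclically monotone'' to ``maximal cyclically monotone'', which relies on the continuity/full-domain argument rather than on the defining inequality alone.
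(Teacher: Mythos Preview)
Your argument is correct. The splitting $\textbf{x}_{i+1}-\textbf{x}_i=\big(f(\textbf{x}_{i+1})-f(\textbf{x}_i)\big)+(\textbf{a}_{i+1}-\textbf{a}_i)$ cleanly separates the cyclic sum into the piece controlled by the hypothesis and a telescoping piece, and the maximality step via ``continuous, full-domain, monotone $\Rightarrow$ maximal monotone $\Rightarrow$ no proper cyclically monotone extension'' is sound. One small remark: your stated $m=2$ inequality actually requires summing the two terms of the cycle (your displayed formula is the outcome of that summation, not the raw $m=2$ hypothesis), but the conclusion---firm nonexpansiveness of $f$---is exactly right.

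As for comparison with the paper: the paper does not prove this proposition at all. It is quoted directly from \cite{bauschke2012firmly} and used as a black box in the proof of \cref{theo:rock}. So you have supplied a self-contained argument where the paper simply imports the result; your approach is essentially the standard one found in that reference.
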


Now, we define the Rockafellar function which we shall use regularization. 
\begin{definition}[\textbf{Rockafellar function} \cite{rockafellar2009variational}] Consider a mapping $A:\bb{R}^n\rightarrow \bb{R}^n$ and let $\textbf{u}\in\bb{R}^n$ be an arbitrary point. For $m\geq2$, we define the following function with parameter $\textbf{u}$
\begin{equation*}
    C_A^m(\textbf{x};\textbf{u})\triangleq
    \begin{cases}
    \Langle \textbf{x},A(\textbf{u})\Rangle-\Langle \textbf{u},A(\textbf{u})\Rangle,\; &m=2. \\
    \underset{\{\textbf{x}_2,...,\textbf{x}_{m-1}\}}{\sup}\;\Langle \textbf{x}-\textbf{x}_{m-1},A(\textbf{x}_{m-1})\Rangle+\sum\limits_{i=1}^{m-2}\Langle \textbf{x}_{i+1}-\textbf{x}_i,A(\textbf{x}_i)\Rangle,\;&o.w.
    \end{cases}
\end{equation*}
where $\textbf{x}_1=\textbf{u}$.
Then, the Rockafellar function is defined as
\begin{equation*}
    R_A(\textbf{x};\textbf{u})\triangleq \underset{m\in\{2,3,...\}}{\sup}\;C_A^m(\textbf{x};\textbf{u}).
\end{equation*}
\end{definition}

\noindent The Rockafellar function in a well-known and important function in the area of variational analysis \cite{bartz2007fitzpatrick}, which exhibits favorable properties when $A$ is maximal cyclically monotone as stated below.

\begin{proposition}[Fact 3.2, \cite{bartz2007fitzpatrick}]
Let $A$ be maximal cyclically monotone and let $\textbf{u}\in\bb{R}^n$. Then, $R_A(\textbf{x};\textbf{u})$ is
convex, l.s.c, proper function with $R_A(\textbf{u};\textbf{u})=0$ and for any $\textbf{x}\in\bb{R}^n$
\begin{equation*}
    A(\textbf{x})=\nabla R_A(\textbf{x};\textbf{u}). 
\end{equation*}
Moreover, for any convex, l.s.c and proper function $h:\bb{R}^n\rightarrow\bb{R}$ for which $A=\nabla h$, it holds that $h(\textbf{x})=h(\textbf{u})+R_A(\textbf{x};\textbf{u}),\; \forall \textbf{x}\in\bb{R}^n$, i.e., $h(\cdot)$ is uniquely determined by $A$ up to an additive constant.
\label{prop:Rockafellarsubgradient}
\end{proposition}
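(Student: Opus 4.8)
The plan is to recognize this as Rockafellar's classical representation theorem for maximal cyclically monotone operators \cite{rockafellar2009variational} and to reproduce its proof in the present notation, proceeding in three stages: establish the structural properties of $R_A(\cdot;\textbf{u})$, prove the gradient identity, and deduce uniqueness up to an additive constant. Throughout, the only ingredients I would use are the definition of cyclic monotonicity and the elementary calculus of convex functions.

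\emph{Structural properties.} For each fixed $m\ge 2$ and each fixed chain $\textbf{x}_1=\textbf{u},\textbf{x}_2,\dots,\textbf{x}_{m-1}$, the expression inside the supremum defining $C_A^m(\cdot;\textbf{u})$ is \emph{affine} in $\textbf{x}$ (the only $\textbf{x}$-dependent term is $\langle \textbf{x},A(\textbf{x}_{m-1})\rangle$), so $C_A^m(\cdot;\textbf{u})$, being a supremum of affine functions, is convex and l.s.c.; hence so is $R_A(\cdot;\textbf{u})=\sup_{m\ge2}C_A^m(\cdot;\textbf{u})$. I would then record two easy facts: (a) inserting the extra node $\textbf{x}_m:=\textbf{x}$ into an admissible chain shows $C_A^{m+1}(\textbf{x};\textbf{u})\ge C_A^m(\textbf{x};\textbf{u})$, so the family is nondecreasing in $m$; (b) $R_A(\textbf{x};\textbf{u})\ge C_A^2(\textbf{x};\textbf{u})=\langle \textbf{x}-\textbf{u},A(\textbf{u})\rangle>-\infty$ for all $\textbf{x}$. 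For $R_A(\textbf{u};\textbf{u})=0$ I would note that the $m=2$ term equals $0$, giving the bound $R_A(\textbf{u};\textbf{u})\ge0$; conversely, setting the argument to $\textbf{x}=\textbf{u}$ and writing $\textbf{x}_m:=\textbf{u}$ turns the bracket in $C_A^m$ into the cyclic sum $\sum_{i=1}^{m-1}\langle A(\textbf{x}_i),\textbf{x}_{i+1}-\textbf{x}_i\rangle$ over the $m-1$ points $\textbf{x}_1,\dots,\textbf{x}_{m-1}$ (convention $\textbf{x}_m=\textbf{x}_1=\textbf{u}$), which is $\le 0$ by maximal cyclic monotonicity; hence $R_A(\textbf{u};\textbf{u})\le0$. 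Together with (b) this gives properness.

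\emph{Gradient identity.} This is the heart of the argument. Fix $\textbf{x}$ with $R_A(\textbf{x};\textbf{u})<\infty$ and $\epsilon>0$; using monotonicity in $m$, choose $m$ and a chain $\textbf{x}_1=\textbf{u},\dots,\textbf{x}_{m-1}$ with $C_A^m(\textbf{x};\textbf{u})\ge R_A(\textbf{x};\textbf{u})-\epsilon$. For an arbitrary $\textbf{y}$ I would append the node $\textbf{x}_m:=\textbf{x}$ and evaluate $C_A^{m+1}$ at the argument $\textbf{y}$, obtaining
\[
R_A(\textbf{y};\textbf{u})\ \ge\ \langle A(\textbf{x}),\textbf{y}-\textbf{x}\rangle+C_A^{m}(\textbf{x};\textbf{u})\ \ge\ \langle A(\textbf{x}),\textbf{y}-\textbf{x}\rangle+R_A(\textbf{x};\textbf{u})-\epsilon,
\]
and letting $\epsilon\downarrow0$ yields $A(\textbf{x})\in\partial R_A(\textbf{x};\textbf{u})$. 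Thus the graph of $A$ is contained in the (cyclically monotone) graph of $\partial R_A(\cdot;\textbf{u})$, and maximality of $A$ among cyclically monotone operators forces equality: $\partial R_A(\cdot;\textbf{u})=A$. Since $A$ is single-valued and continuous on $\bb{R}^n$, its being the subdifferential of a convex function forces that function to be finite and Fr\'echet differentiable everywhere, with $\nabla R_A(\cdot;\textbf{u})=A$; this also retroactively removes the temporary restriction $R_A(\textbf{x};\textbf{u})<\infty$.

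\emph{Uniqueness, and the main obstacle.} Given convex, l.s.c., proper $h$ with $\nabla h=A$ (hence $A=\partial h$), summing the subgradient inequalities $h(\textbf{x}_{i+1})\ge h(\textbf{x}_i)+\langle A(\textbf{x}_i),\textbf{x}_{i+1}-\textbf{x}_i\rangle$ along any chain from $\textbf{u}$ telescopes to $h(\textbf{x})-h(\textbf{u})\ge C_A^m(\textbf{x};\textbf{u})$, so taking suprema gives $h(\textbf{x})-h(\textbf{u})\ge R_A(\textbf{x};\textbf{u})$; conversely $h-R_A(\cdot;\textbf{u})$ is differentiable with gradient $A-A=0$ on the connected set $\bb{R}^n$, hence constant, and evaluating at $\textbf{u}$ identifies the constant as $h(\textbf{u})$, giving $h=h(\textbf{u})+R_A(\cdot;\textbf{u})$. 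I expect the genuinely delicate points to be the combinatorial bookkeeping in the gradient step---arranging that a near-optimal chain for $R_A(\textbf{x};\textbf{u})$, once extended by one node, becomes admissible for $R_A(\textbf{y};\textbf{u})$---together with the two ``upgrades'': from the inclusion $A\subseteq\partial R_A(\cdot;\textbf{u})$ to equality via maximality, and from single-valuedness of the subdifferential to honest differentiability of $R_A(\cdot;\textbf{u})$, which is what makes the final zero-gradient argument legitimate. The finiteness and properness technicalities are exactly what the affine lower bound in (b) and the full-domain single-valued subdifferential take care of.
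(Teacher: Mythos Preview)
Your proof is correct and follows the classical Rockafellar argument faithfully; the chain-extension device for the subgradient inclusion, the maximality upgrade to equality, and the telescoping uniqueness step are all handled properly. However, you should be aware that the paper does \emph{not} prove this proposition at all: it is quoted as ``Fact~3.2'' from \cite{bartz2007fitzpatrick} and used as a black box to justify \cref{theo:rock}. So there is no ``paper's own proof'' to compare against---you have supplied a self-contained argument where the authors were content to cite the literature. Your write-up is therefore strictly more than what the paper requires, but it is the standard proof one would find in Rockafellar's original work or in \cite{bartz2007fitzpatrick}.
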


Equipped with the above results, we formulate the next inverse problem. Let $f:\bb{R}^n\rightarrow\bb{R}^n$ be a continuous denoiser and define the denoiser residual $A\triangleq Id-f$. For arbitrary $\textbf{u}\in\bb{R}^n$ and $\lambda>0$, we consider the following minimization problem
\begin{equation}
    \underset{\textbf{x}\in\bb{R}^n}{\min}\; E_\text{Rock}(\textbf{x})=\ell(\textbf{x};\textbf{y})+\lambda R_A(\textbf{x};\textbf{u}).
    \label{eq:REDrock}
\end{equation}

When the denoiser $f$ is cyclically firmly nonexpansive, the above formulation admits convex minimization problem which can be solved by the RED algorithms presented in \cite{romano2017little}, as proven in the next theorem.   
\begin{theorem}
Assume that $f(\cdot)$ is cyclically firmly nonexpansive. Then, problem \eqref{eq:REDrock} is a convex optimization task where the gradient of the objective function is given by \eqref{eq:REDgradient}.  
\label{theo:rock}
\end{theorem}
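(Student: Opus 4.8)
## Proof Proposal

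The plan is to reduce Theorem \ref{theo:rock} to the machinery already assembled in this appendix, namely \cref{theo:cfne2mcm} and \cref{prop:Rockafellarsubgradient}. The logical chain is short: cyclic firm nonexpansiveness of $f$ gives, via \cref{theo:cfne2mcm}, that the displacement map $A = Id - f$ is maximal cyclically monotone; maximal cyclic monotonicity of $A$ then feeds directly into \cref{prop:Rockafellarsubgradient}, which tells us that $R_A(\textbf{x};\textbf{u})$ is a proper, l.s.c., convex function whose gradient is exactly $A(\textbf{x}) = \textbf{x} - f(\textbf{x})$. Everything else is bookkeeping about how convexity and differentiability combine in the sum $E_\text{Rock}(\textbf{x}) = \ell(\textbf{x};\textbf{y}) + \lambda R_A(\textbf{x};\textbf{u})$.

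Concretely, the steps I would carry out are: First, invoke \cref{theo:cfne2mcm} with the hypothesis that $f$ is cyclically firmly nonexpansive to conclude $A = Id - f$ is maximal cyclically monotone. Second, apply \cref{prop:Rockafellarsubgradient} to this $A$ (for the fixed but arbitrary anchor $\textbf{u}$) to obtain that $R_A(\cdot;\textbf{u})$ is convex, l.s.c., proper, and differentiable with $\nabla R_A(\textbf{x};\textbf{u}) = A(\textbf{x}) = \textbf{x} - f(\textbf{x})$. Third, recall the standing assumption from \cref{sec:preliminaries} that $\ell(\cdot;\textbf{y})$ is convex, differentiable, l.s.c., and proper. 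Fourth, observe that a nonnegative combination of convex functions is convex and that the sum of a proper function with a (finite-valued, here differentiable) convex function is proper, so $E_\text{Rock}$ is a proper, l.s.c., convex function; since $\lambda > 0$ this preserves convexity. Finally, differentiate term by term: $\nabla E_\text{Rock}(\textbf{x}) = \nabla \ell(\textbf{x};\textbf{y}) + \lambda \nabla R_A(\textbf{x};\textbf{u}) = \nabla \ell(\textbf{x};\textbf{y}) + \lambda\big(\textbf{x} - f(\textbf{x})\big)$, which is precisely the RED gradient expression \eqref{eq:REDgradient}. Consequently the steepest-descent, fixed-point, and ADMM iterations of \cite{romano2017little}, which only require this gradient, are valid solvers and converge to the global minimizer by standard convex optimization guarantees.

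There is essentially no hard analytic obstacle here, since the substantive work has been delegated to the cited facts; the only thing requiring minor care is the legitimacy of the term-by-term differentiation of $R_A$, i.e., ensuring that \cref{prop:Rockafellarsubgradient} genuinely yields Fr\'echet differentiability of $R_A(\cdot;\textbf{u})$ on all of $\bb{R}^n$ rather than merely subdifferentiability. The statement of \cref{prop:Rockafellarsubgradient} as quoted (``$A(\textbf{x}) = \nabla R_A(\textbf{x};\textbf{u})$'' for all $\textbf{x}$) is exactly what is needed, and the continuity of $f$ — hence of $A$ — upgrades this to a continuously differentiable function, so the gradient expression \eqref{eq:REDgradient} is valid pointwise everywhere. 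One should also note for completeness that the anchor point $\textbf{u}$ only shifts $R_A$ by an additive constant relative to any underlying potential (as the uniqueness clause of \cref{prop:Rockafellarsubgradient} makes explicit), so the minimizer of $E_\text{Rock}$ is independent of the choice of $\textbf{u}$; this is worth remarking but is not strictly required for the statement being proved.
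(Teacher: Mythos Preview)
Your proposal is correct and follows essentially the same route as the paper: invoke \cref{theo:cfne2mcm} to pass from cyclic firm nonexpansiveness of $f$ to maximal cyclic monotonicity of $A=Id-f$, then apply \cref{prop:Rockafellarsubgradient} to obtain convexity of $R_A(\cdot;\textbf{u})$ with $\nabla R_A(\textbf{x};\textbf{u})=\textbf{x}-f(\textbf{x})$, and finally add the convex differentiable $\ell$ term. Your additional remarks on the anchor-independence of the minimizer and the continuity upgrade to $C^1$ are nice embellishments but not part of the paper's (very brief) argument.
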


\begin{proof}
By \cref{theo:cfne2mcm} in conjunction with \cref{prop:Rockafellarsubgradient}, the residual mapping $A=Id-f$ is maximal cyclically monotone, hence,  $R_A(\textbf{x};\textbf{u})$ is a convex differentiable function whose gradient is $\nabla R_A(\textbf{x};\textbf{u})=A(\textbf{x})=\textbf{x}-f(\textbf{x})$. The latter in turn implies that 
\begin{equation*}
    \nabla E_\text{Rock}(\textbf{x})=\nabla \ell(\textbf{x};\textbf{y})+\lambda\nabla R_A(\textbf{x};\textbf{u})=\nabla\ell(\textbf{x};\textbf{y})+\lambda\Big(\textbf{x}-f(\textbf{x})\Big)=\nabla E_\text{RED}(\textbf{x}).
\end{equation*}
\end{proof}
To conclude, \cref{theo:rock} provides the condition under which the RED approach minimizes a convex minimization problem given by \eqref{eq:REDrock}, and thus it proves the convergence of RED algorithms. This holds even for the case where the denoiser is not differentiable. We note that the requirement of $A=Id-f$ to be maximal cyclically monotone is a necessary condition, whereas the demand of $f(\cdot)$ to be cyclically firmly nonexpansive is a sufficient condition that might be relaxed and is fulfilled by e.g. proximal operators.

% \section{Proof of \cref{theo:composition}}
% \label{app:composition}
% It is easy to see that $\fix(T_1)\cap\fix(T_2)\subseteq \fix(T)$, hence, we prove the reverse inclusion. Let $x\in \fix(T)$ and $y\in \fix(T_1)\cap\fix(T_2)$. First, let us assume by contradiction that $x\notin\fix(T_2)$ and $T_2(x)\notin\fix(T_1)$. Then,
%     \begin{equation*}
%         \norm{x-y}=\norm{T_1\Big(T_2(x)\Big)-y}<\norm{T_2(x)-y}<\norm{x-y},
%     \end{equation*}
%     which is a contradiction. Therefore, we consider the following two cases:
% \begin{enumerate}
%     \item $x\in\fix(T_2)$. Then, $T_1(x)=T_1\Big(T_2(x)\Big)=T(x)=x$, i.e., $x\in \fix(T_1)\cap\fix(T_2)$.
%     \item $T_2(x)\in\fix(T_1)$. Then, $T_2(x)=T_1\Big(T_2(x)\Big)=T(x)=x$ which indicates that $x\in \fix(T_1)\cap\fix(T_2)$.
% \end{enumerate}
% Now, notice that for all $x\in\bb{R}^n$ and $z\in\fix(T)$, we have
% \begin{align*}
%     \norm{T(x)-z}^2&=\norm{T_1\Big(T_2(x)\Big)-z}^2 \\
%     \leq \
% \end{align*}

\newpage
\bibliographystyle{siamplain}
\bibliography{refs}

\end{document}